\newtheorem{thm}{Theorem}
\newtheorem{cor}{Corollary}
\newtheorem{lem}{Lemma}
\newtheorem{rmk}{Remark}
\newtheorem{de}{Definition}
\newtheorem{exa}{Example}
\newtheorem{pro}{Proposition}
\newcommand*{\QEDA}{\hfill\ensuremath{\blacklozenge}}
\title{Coordination on Time-Varying Antagonistic Networks}
\author{Wentao~Zhang \thanks{Tianjin Key
Laboratory of Process Measurement and Control, School of Electrical and Information Engineering, Tianjin University, Tianjin, 300072, P. R. China, {\tt wtzhangee@tju.edu.cn}} 
}
\date{}
\begin{document}

\maketitle
\begin{abstract}
This paper studies coordination problem for time-varying networks suffering from antagonistic information, quantified by scaling parameters. By such a manner, interacting property of the participating individuals and antagonistic information can be quantified in a fully decoupled perspective, thus benefiting from merely directed spanning tree hypothesis is needed, in the sense of usual algebraic graph theory. We start with rigorous argument on the existence of weighted gain, and then derive relation among weighted gain, scaling parameter and Laplacian matrix guaranteeing antagonistic information cannot diverge system state. 
Based on these arguments, we devise coordination algorithm constrained by topology-dependent average time condition, thus relaxing the examination of directed spanning tree requirement for the union graph that is usually intractable. Moreover, the induced theoretical results are applied to time-varying networks with several mutually uninfluenced agents, in accompanying with some discussions and comparisons with respect to the existing developments. 
\end{abstract}

\begin{keywords}
Coordination, antagonistic information, time-varying topology, weighted gain, scaling parameter.
\end{keywords}


\section{Introduction}\label{sec1}
Generally, multi-agent systems usually exhibit cooperative behavior for a collection of participating individuals, and display a unified paradigm between mathematical models and ubiquitous biology aggregation phenomena such as swarming/flocking in fish/birds. More importantly, they preserve potential for diverse practical engineering applications (cf. \cite{ren2008distributed}), typical examples include but are not limited to robotic networks, task scheduling and management, micro-scale medical treatments and smart grids/transportation, just to name a few. Up to now, fruitful results have been reported on property characterization and behavior analysis, as well as control synthesis, see monograph \cite{mesbahi2010graph} for instance. 

Unlike single-agent-level systems or lump systems,  the study of coordination problems for multi-agent systems in a changing communication environment is of great significance, apart from its own interest. Actually, from both theoretical and real-engineering viewpoints, it is always unrealistic to assume that interacting individuals enjoy a permanent interacting manner or fixed communication topology. Evidently, massive factors hold potentials to bring in time-varying interacting topology for collective behaviors. Quintessential instances contain interacting distance, obstacle circumstance, configuration and maintenance cost, as well as for the purpose of energy-saving, etc. Yet, the study of collective behavior of participating individuals within a time-varying changing setting is more challenging. This is because one is required to account for the joint effect incurred by interacting rule and communication topology, even for cooperative interaction scenarios. As a consequence, more tools are demanded to be developed, such as infinite product of stochastic matrices \cite{jadbabaie2003coordination},
Hajnal diameter \cite{hajnal1976products}, ergodicity coefficient of a stochastic matrix \cite{han2013cluster}, Lyapunov-based methods \cite{ni2010leader}, stochastic approximation \cite{huang2015stochastic}, even the geometric method \cite{qin2020on}, and so forth.

Apart from intrinsic time-varying property for participating agents, it may be more natural to take both antagonistic and cooperative interactions among participating individuals into consideration. One of sparked motivations arrives at ``friendly" interactions, to a larger sense, frequently preclude the interpretation on some interesting phenomena where both helpful and adversarial information may coexist. Typical examples include the symbiosis of friends and foes in social networks \cite{wasserman1994social,bullo2018lectures}, and the coexistence of the competition and cooperation in biofilms \cite{nadell2016spatial}, etc. Spurred by this observation, bipartite consensus problem was investigated with the help of signed graph theory and structure balance theory; see \cite{altafini2013consensus}. An extension to a directed spanning tree scenario was discussed in \cite{meng2016interval}, as well as graphical condition on stability of the agents \cite{proskurnikov2016opinion}. Further generalizations contain switching networks \cite{meng2021extended}, random signed networks \cite{shi2019dynamics}, networked multi-agent systems \cite{zhang2021velocity}, fractional-order systems \cite{gong2020practical}, and networked PDE systems \cite{chen2020bipartite}, etc.

Despite substantial progress for antagonistic networks has bee reported, a downside uncovered by mentioned literature is the joint effect incurred by interacting mechanism and description of the antagonistic information, giving rise to difficulty in examining the derived criteria. Tangible arduousness contains at least: 1) additive interacting topology is usually induced in contrast to algebraic graph theory based consensus formulation, 2) negative/positive property on weighted values of the connected edges to describe antagonistic information frequently installs huge obstacles in examining the derived conditions, even for structural balanced circumstances, and 3) as embodied by most of the existing literature on time-varying networks, directed spanning tree preserved by union graph is crucial for the underlying consensus/coordination problems. Unfortunately, such a requirement is hardly possible to check out in general \cite{chen2016convergence}. It is with above intriguing discussions in mind that this paper is concerned with coordination problem for time-varying antagonistic networks, where interacting manner among participating agents and the description of antagonistic information are quantified in a fully decoupled perspective.

Specifically, we adopt usual algebraic graph theory, sharing an inherent spirit with the classical consensus algorithm, to characterize time-varying interacting mechanism, and adopt scaling parameter to characterize whether the underlying information is hostile or not. This permits a full decoupled manner in quantifying interacting manner of agents and the involved hostile information. We then introduce heterogeneous weighted gain to assure the possibility of coordination suffering from antagonistic interactions, along with rigorous theoretical argument on its existence. To circumvent the global information that is frequently involved in deriving coordination error in the literature, we recourse to local difference based linear transformation. Along with this avenue, we devise the underlying switching law, constrained by topology-dependent average requirement, to guarantee coordination of the time-varying antagonistic networks. Moreover, some discussion and comparison with respect to the existing work are also elaborated to anchor the effectiveness for the proposed setup.

In summary, the contributions of this paper are threefold: 
\begin{enumerate}
\item [1)] We characterize interplay among agents and the description of antagonistic information in a fully decoupled manner, at which just directed spanning tree of usual algebraic graph theory is needed;
\item [2)] We give condition guaranteeing coordination of interacting agents that depends on neither the tool from infinite product of stochastic matrices, nor Lyapunov-based techniques;
\item [3)] We adopt coordination algorithm constrained by topology-dependent average time, over which there permits to assure coordination of time-varying antagonistic networks without examine directed spanning tree condition of the underlying union graph.
\end{enumerate}

The layout of this paper is outlined as follows: Section ${\rm \ref{sec2}}$ describes some basic preliminaries and the problem
formulation. Section ${\rm \ref{sec3}}$ discusses existence of the weighted gains,
the relationship among scaling parameter, weighted gain and system matrix, as well as linear transformation for coordination error. Section ${\rm \ref{sec4}}$ concentrates on coordination of interacting agents with time-vary communication topologies, in accompanying with discussion and comparison with the existing literature. Finally, a conclusion is drawn in Section ${\rm \ref{sec5}}$.

 \section{Preliminary Notation and Problem Description}\label{sec2}
\subsection{Basic Notation}
The utilized notation in this paper is standard. The nonnegative integer set, real number set and complex number set are described by the triple $(\mathbb{Z},\mathbb{R},\mathbb{C})$. Superscript $``\prime"$ represents the transpose with respect to a vector or a matrix, and ${\rm sgn}(\cdot)$ is the
sign function. Moreover, subscript (resp. superscript) $(\cdot)_{n^{2}}$ (resp. $(\cdot)^{n^{2}}$) is the abbreviation of $(\cdot)_{nn}$ (resp. $(\cdot)^{nn}$). Cardinality or module is measured by $|\cdot|$. $\| x\|$ denotes the Euclidean norm of a vector $x$, and $\|Q\|$ is the spectral norm of a matrix $Q$. $\lambda(Q)$ is the eigenvalue for a square matrix $Q$. In addition, $Q^{-1}$ is the inverse of a nonsingular matrix $Q$. $\| Q\| _{\mathscr{H}}$ refers to the spectral norm restricted in the pre-defined space $\mathscr{H}$. $\mathcal{C}_{\mathscr{F}}(\mathscr{H})$ stands for the complementary space of $\mathscr{H}$ restricted on set $\mathscr{F}$, and $\overline{\mathscr{H}}$ is the closure of space $\mathscr{H}$. Denote $\{1,...,M,M+1,...,N\}$ by $\mathbbm{I}_{N}$.

\subsection{Graph Theory}
A graph $\mathcal{G}=\{\mathbb{V},\mathbb{E},
\mathbb{A}\}$ is usually specified by a node set $\mathbb{V}$, and an edge set $\mathbb{E}$ with an adjacency matrix
$\mathbb{A}=[a_{ij}]_{N^{2}}$ satisfying $a_{ij}>0$ if $(j,i) \in \mathbb{E}$ and $a_{ij}=0$ otherwise. Self-loops are forbidden in this paper and hence $a_{ii}=0$. Laplacian matrix $\mathcal{L}=[l_{ij}]_{N^{2}}$ associated with $\mathcal{G}$ is defined by $l_{ii}=\sum^{N}_{j=1,j\neq i}a_{ij}$ and $l_{ij,j\neq i}=-a_{ij,j\neq i}$. If $(i,j) \in \mathbb{E}$, $i$ is the parent node and $j$ is the child node. A directed path connecting
nodes $v_{p_{j}}$ and $v_{p_{i}}$ is a non-repetitive sequence of edges $(v_{p_{j}},v_{p_{j}+1}), (
v_{p_{j}+1},v_{p_{j}+2}),...,(v_{p_{i}-1},v_{p_{i}})$, where $v_{p_{j}+i}\in \mathbb{V}$. A node is said to be a root in $\mathcal{G}$ if it has directed paths to every other node.
A digraph is said to be strongly connected if for any two distinct nodes,
there always exists a directed path connecting these two nodes. A digraph is said to be a directed tree if it has merely a root, and each of the remaining nodes owns exactly a parent. A directed spanning tree is a directed tree preserving all nodes in $\mathcal{G}$. A directed spanning forest is comprised of several directed trees preserving all nodes in $\mathcal{G}$.

\subsection{Problem Formulation}
Consider a family of participating agents, each of which updates its state according to
\begin{equation}\label{eq1}
\begin{aligned}
\xi_{i}(k+1)=&~\xi_{i}(k)+\epsilon u_{i}(k),~i\in\mathbbm{I}_{N},  ~k\in \mathbb{Z}
\end{aligned}
\end{equation}
where $\xi_{i}(k)\in \mathbb{R}$ is the state variable with respect to the $i$th agent, and $\epsilon>0$ is a constant. $u_{i}(k)$ represents the coordination algorithm, and features form 
\begin{subequations}\label{eqlu0}
\begin{empheq}[]{align}
&u_{i}(k)= \sum_{j\in\mathcal{N}^{r}_{i}(k)}  a_{ij}(k)(\delta_{j}\xi_{j}(k)-\delta_{i}\xi_{i}(k)),~i\in\hat{\mathbbm{I}}_{N} \label{eqlua}\\
&u_{i}(k)=\sum_{j\in\mathcal{N}^{1}_{i}(k)}a_{ij}(k)(\xi_{j}(k)-\xi_{i}(k))+\sum_{j\in\mathcal{N}^{2}_{i}(k)}a_{ij}(k)(\delta_{j}\xi_{j}(k)-\xi_{i}(k)),i\in\hat{\mathbbm{I}}^{c}_{N}  \label{eqlub}
\end{empheq}
\end{subequations}
where  $\hat{\mathbbm{I}}_{N}\triangleq\{1,...,M\}$ is the set of rooted agents and $\hat{\mathbbm{I}}^{c}_{N}=\mathbbm{I}_{N}-\hat{\mathbbm{I}}_{N}$ is the set of non-rooted agents. $a_{ij}(k)$ denotes the $(i,j)$th element in adjacent matrix $\mathbb{A}(k)$ at $k$th instant, and $\mathcal{N}^{r}_{i}(k)$ represents rooted agent $i$'s neighboring set that are also rooted agents. In addition, $\mathcal{N}^{1}_{i}(k)$ and $\mathcal{N}^{2}_{i}(k)$ are, respectively, agent $i$'s neighboring sets that are non-rooted agents and that are rooted agents satisfying $\mathcal{N}^{1}_{i}(k) \bigcap \mathcal{N}^{2}_{i}(k)=\emptyset$. Evidently, the neighboring set for non-rooted agent $i$ meets $\mathcal{N}^{1}_{i} (k)\bigcup \mathcal{N}^{2}_{i}(k)$. Scaling constant parameter $\delta_{j}\neq 0$, characterizing the antagonistic ($\delta_{j}<0$) or rewarding ($\delta_{j}>0$) information sent by the $j$th agent, is bounded for $j\in\hat{\mathbbm{I}}_{N}$. 


It is worthwhile to illustrate that merely two generic of agents are involved for multi-agent systems, i.e., rooted and non-rooted agents. It is also known that the former entirely  affects the later, not vice versa. Therefore, we specify that the interactions among rooted agents preserve both cooperative and hostile information, while these among non-rooted agents are cooperative, completely in the context of conventional multi-agent systems. This thoroughly coincides with configurations $(\ref{eqlua})$ and $(\ref{eqlub})$, respectively.

Before proceeding further, a definition regarding to the coordination problem for system $(\ref{eq1})$ is given in advance.
\begin{de}\label{de1}
Coordination problem for system $(\ref{eq1})$ is solvable if 
\begin{itemize}
\item [{\rm ({\rm \lowercase \expandafter {\romannumeral 1}})}] $\lim_{k\to \infty}\xi_{j}(k)=\lim_{k\to \infty}\delta_{i}\xi_{i}(k), \forall~i\in \hat{\mathbbm{I}}_{N}, ~j \in \hat{\mathbbm{I}}^{c}_{N}$; 
\item [{\rm ({\rm \lowercase \expandafter {\romannumeral 2}})}] $\lim_{k\to \infty}\xi_{j}(k)=\lim_{k\to \infty}\frac{\delta_{i}}{\delta_{j}}\xi_{i}(k), ~\forall j, i \in\hat{\mathbbm{I}}_{N}$; 
\item [{\rm ({\rm \lowercase \expandafter {\romannumeral 3}})}] $\lim_{k\to \infty}\xi_{j}(k)=\lim_{k\to \infty}\xi_{i}(k), ~\forall j,i \in\hat{\mathbbm{I}}^{c}_{N}$.
\end{itemize}
\end{de}

In view of Definition $\ref{de1}$, it preserves the mutually utilized definition for consensus \cite{ren2008distributed,mesbahi2010graph}, bipartite consensus \cite{altafini2013consensus}, or with respect to
scaled consensus \cite{roy2015scaled}
for some nonzero constants $\delta_{i}$ as particular cases.

As a matter of fact, the objective of this paper is to study the coordination problem for time-varying antagonistic networks in the context of usual algebraic theory (that is, $a_{ij}\geq 0$ is always permitted), which differs from the formulations in the existing literature, such as $l_{ii}=\sum^{N}_{j=1,j\neq i}|a_{ij}|$ in \cite{altafini2013consensus,meng2016interval,proskurnikov2014consensus,proskurnikov2016opinion}, or $\sum^{N}_{j=1}|a_{ij}|=1$ in \cite{meng2016behavior}. Therefore, the proposed coordination algorithm in $(\ref{eq1})$ preserves the potential to be extended to the second-order case \cite{ren2008distributed}, and the high-order case \cite{dong2017time} with minor modifications, where the communication among a family of reciprocal individuals could be adversarial. 


\section{Basic Property for Weighted Gain, Scaling Parameter and Coordination Error}\label{sec3}
\subsection{Existence of Weighted Gain}\label{sec31}
Unlike most of the existing efforts on consensus/coordination problems with antagonistic interactions, as an instance singed graph based formulation \cite{altafini2013consensus}, directed spanning tree condition suffices to assure the consensus/stability of the participating agents. By leverage of $u_{i}(k)$ in $(\ref{eqlu0})$, antagonistic information would result in the collapse of the underlying system. More precisely, let's start with a simple example.
\begin{exa}\label{exa0100}
Consider multi-agent systems $(\ref{eq1})$ with $u_{i}(k)$ in $(\ref{eqlu0})$ of form
	\begin{equation}\label{eq44}
\begin{aligned}
\xi_{1}(k+1)=\xi_{1}(k)- (\delta_{2}\xi_{2}(k)-\delta_{1}\xi_{1}(k))\\
\xi_{2}(k+1)=\xi_{2}(k)- (\delta_{1}\xi_{1}(k)-\delta_{2}\xi_{2}(k))
\end{aligned}
\end{equation}
with $\delta_{1}=1$ and $\delta_{2}=-1$. A compact expression on $(\ref{eq44})$ gives
	\begin{equation*}
\begin{aligned}
\xi(k+1)=( I-\mathcal{L}\mathcal {D})
\xi(k)
\end{aligned}
\end{equation*}
where 
\begin{equation*}
\begin{aligned}
\mathcal{L}=\begin{bmatrix}
1&-1\\
-1&1
\end{bmatrix},~\mathcal {D}=\begin{bmatrix}
1&0\\
0&-1
\end{bmatrix}.
\end{aligned}
\end{equation*}
By simple computing, one can check out that
	\begin{equation*}
	\begin{aligned}
	\mathcal{L}\mathcal {D}=\begin{bmatrix}
	1&1\\
	-1&-1
	\end{bmatrix}.
	\end{aligned}
	\end{equation*}
It is easy to see that
	\begin{equation*}
	\begin{aligned}
	{\rm det}(\lambda I-\mathcal{L}\mathcal {D})=\lambda^{2}
	\end{aligned}
	\end{equation*}
which directly suggests that matrix $\mathcal{L}\mathcal {D}$has  two zero eigenvalues, where the symbol ${\rm det}(\cdot)$ represents the matrix determinant. Evidently, coordination for multi-agent systems $(\ref{eq44})$ fails due to hostile interaction, despite of the directed spanning tree condition. \QEDA
\end{exa}

Example $\ref{exa0100}$ essentially indicates that 1) antagonistic information would be harmful for coordination of the agents; 2) directed spanning tree requirement acts merely a necessary condition for coordination of multi-agent systems, even for first-order scenario. In a nutshell, to the best of authors' knowledge, the proposed coordination algorithm in $(\ref{eqlu0})$ actually brings in something new phenomena that have not been fully discussed in the literature.

In connection with Example $\ref{exa0100}$, for the purpose of coordination for the considered system, $u_{i}(k)$ in $(\ref{eqlu0})$ is modified by
\begin{subequations}\label{eq1u}
\begin{empheq}[]{align}
&u_{i}(k)= \varrho_{i}\sum_{j\in\mathcal{N}^{l}_{i}(k)}  a_{ij}(k)(\delta_{j}\xi_{j}(k)-\delta_{i}\xi_{i}(k)),i \in \hat{\mathbbm{I}}_{N} \label{eq1ua}\\
&u_{i}(k)=\sum_{j\in\mathcal{N}^{1}_{i}(k)}a_{ij}(k)(\xi_{j}(k)-\xi_{i}(k))+\sum_{j\in\mathcal{N}^{2}_{i}(k)}a_{ij}(k)(\delta_{j}\xi_{j}(k)-\xi_{i}(k)),i\in  \hat{\mathbbm{I}}^{c}_{N}  \label{eq1ub}
	\end{empheq}
\end{subequations}
where $\varrho_{i}$ is a weighted gain whose existence shall be elaborated later.

In view of coordination algorithm $(\ref{eq1u})$, it essentially features that the interacting manner among agents and the description of antagonistic information are quantified in a fully decouple perspective.

Stacking the state variables and reformulating $(\ref{eq1})$ with $u_{i}(k)$ in $(\ref{eq1u})$ into a compact form yields
\begin{equation}\label{eq2}
\begin{aligned}
\xi(k+1)=\mathscr{L}(k)\xi(k),~k\in\mathbb{Z}
\end{aligned}
\end{equation}
with $\xi(k)=[\xi_{1}(k),...,\xi_{M}(k),\xi_{M+1}(k),...,\xi_{N}(k)]^{\prime}$, $\mathscr{L}(k)=I-\epsilon\mathcal {M}(k)$ and
\begin{equation*}
\mathcal {M}(k)=\begin{bmatrix}
\mathbb{D}\mathcal{L}_{1}(k)\mathcal {D}&\mathcal {O}_{M(N-M)}\\
\mathcal{L}_{2}(k)\mathcal {D}&\mathcal{L}_{3}(k)
\end{bmatrix},~
\mathcal {L}(k)=\begin{bmatrix}
\mathcal{L}_{1}(k)&\mathcal {O}_{M(N-M)}\\
\mathcal{L}_{2}(k)&\mathcal{L}_{3}(k)
\end{bmatrix}
\end{equation*}
where $I$ and $\mathcal {O}$ are, respectively, identity and zero matrices with appropriate
dimensions. $\mathbb{D}={\rm diag}(\varrho_{1},...,\varrho_{M})$ is the
weighted gain matrix, and $\mathcal {D}={\rm diag}(\delta_{1},...,\delta_{M})$ stands for the scaling parameter matrix. $\mathcal {L}(k)$ is the Laplacian matrix, at which matrix $\mathcal{L}_{1}(k)\in \mathbb{R}^{M^{2}}$ characterizes the interactions among rooted agents, while matrix $\mathcal{L}_{2} (k)\in \mathbb{R}^{(N-M)M}$ describes the information flows from rooted agents to associated non-rooted agents.


It is noted that matrix $\mathcal{L}_{3}(k) \in \mathbb{R}^{(N-M)^{2}}$ in $\mathcal{M}(k)$ corresponds to the
non-rooted nodes contained in communication graph $\mathcal{G}(k)$. In other words, we cannot find a
directed path whose root is associated with $\mathcal{L}_{3}(k)$ connecting some of the remaining nodes in $\mathcal{G}$.
Furthermore, it is known that the aggregated common quantity is irrelevant to the initial states of the agents
associated with $\mathcal{L}_{3}(k)$. In fact, matrix $\mathcal{L}_{3}(k)$ quantifies the information flow among non-rooted agents, apart from
in addition to the influence of rooted agents characterized by $\mathcal{L}_{2}(k)\mathcal {D}$ in matrix $\mathcal{M}(k)$.



It should be emphasized that hypothesis of the directed spanning tree is only the necessary condition to guarantee that $\mathbb{L}(k)=\mathbb{D}\mathcal{L}_{1}(k)\mathcal {D}$ has a simple zero eigenvalue. Note that even if communication graph $\mathcal{G}(k)$ attains a directed spanning tree, we still cannot declare that $\mathbb{L}(k)=\mathbb{D}\mathcal{L}_{1}(k)\mathcal {D}$ preserves a simple zero eigenvalue in general. As a matter of fact, we cannot even declare that zero eigenvalues in $\mathbb{D}\mathcal{L}_{1}(k)$ and $\mathcal{L}_{1}\mathcal {D}(k)$ is simple whenever merely directed spanning tree hypothesis in $\mathcal{G}(k)$ is assured. This is a distinguished difference in contrast to the conventional consensus problem with single dynamics (cf. \cite{ren2005consensus}). 

Fortunately, we can still establish a simple connection among the considered matrices.  
\begin{pro}\label{pro3}
	The matrices $\mathcal{L}_{1}(k)$, $\mathcal{L}_{1}(k)\mathcal {D}$, $\mathbb{D}\mathcal{L}_{1}(k)$ and $\mathbb{L}(k)$ share the same rank.
\end{pro}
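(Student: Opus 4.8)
The plan is to reduce the entire statement to the elementary fact that multiplication by a nonsingular matrix preserves rank, so that the whole argument hinges solely on verifying that the two diagonal factors $\mathbb{D}$ and $\mathcal{D}$ are invertible. No structural information about the Laplacian block $\mathcal{L}_{1}(k)$ itself is needed.

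First I would observe that the scaling parameter matrix $\mathcal{D}=\mathrm{diag}(\delta_{1},\ldots,\delta_{M})$ is nonsingular: by the standing assumption each scaling constant satisfies $\delta_{j}\neq 0$, whence $\det\mathcal{D}=\prod_{j=1}^{M}\delta_{j}\neq 0$. Likewise, the weighted gain matrix $\mathbb{D}=\mathrm{diag}(\varrho_{1},\ldots,\varrho_{M})$ is nonsingular because each weighted gain $\varrho_{i}\neq 0$, which is precisely the content of the existence argument developed in this subsection (the gains are constructed so as to be nonzero, and in fact may be taken strictly positive).

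Next I would invoke the standard fact that, for any matrix $A$ and any nonsingular matrices $P,Q$ of compatible dimensions, one has $\mathrm{rank}(PA)=\mathrm{rank}(AQ)=\mathrm{rank}(PAQ)=\mathrm{rank}(A)$. The reason is that left-multiplication by the invertible $P$ has $PAx=0\iff Ax=0$, so $PA$ and $A$ share the same kernel and hence the same rank by rank-nullity, while right-multiplication by the invertible $Q$ leaves the column space unchanged, since the columns of $AQ$ span the same subspace as those of $A$. Applying this with $A=\mathcal{L}_{1}(k)$, $P=\mathbb{D}$, and $Q=\mathcal{D}$ then gives, in one stroke,
\begin{equation*}
\mathrm{rank}(\mathcal{L}_{1}(k))=\mathrm{rank}(\mathcal{L}_{1}(k)\mathcal{D})=\mathrm{rank}(\mathbb{D}\mathcal{L}_{1}(k))=\mathrm{rank}(\mathbb{D}\mathcal{L}_{1}(k)\mathcal{D})=\mathrm{rank}(\mathbb{L}(k)),
\end{equation*}
which is the asserted equality of ranks.

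The main obstacle — indeed the only nonroutine point — is securing the nonsingularity of $\mathbb{D}$. For $\mathcal{D}$ this is immediate from the hypothesis $\delta_{j}\neq 0$, but the weighted gains $\varrho_{i}$ are objects whose existence is only being established in the present section. I would therefore ensure that this proposition is placed logically downstream of that existence result, or else phrase it so that $\varrho_{i}\neq 0$ (equivalently, $\mathbb{D}$ nonsingular) appears as an explicit hypothesis. Once $\varrho_{i}\neq 0$ is in hand, everything else is the routine rank-invariance computation above and carries through for every fixed $k$.
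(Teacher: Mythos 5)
Your proof is correct and takes essentially the same route as the paper: the paper's (omitted) proof simply invokes Sylvester's rank inequality, which, combined with the full rank of the diagonal factors $\mathbb{D}$ and $\mathcal{D}$, yields exactly the rank-invariance you establish directly via kernel and column-space arguments. Your explicit flagging of the hypothesis $\varrho_{i}\neq 0$ is a sensible precaution the paper leaves implicit (it later even permits $\varrho_{i}=0$ for an isolated root, but in that degenerate case the corresponding row of $\mathcal{L}_{1}(k)$ is already zero, so the conclusion survives).
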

\begin{proof}
	The proof is trivial using the Sylvester's rank inequality, and hence is omitted.
\end{proof}

Incorporating equation $(\ref{eq2})$ with directed spanning tree condition, it is not difficult to access that the eigenvalues of $\mathcal{L}_{3}(k)$ are in the open right half plane. Hence, the following work is to demonstrate that $\mathbb{L}(k)$ possesses a simple zero eigenvalue and the remaining eigenvalues are located in the open right half plane. This, however, is completely determined by the weighted gain matrix $\mathbb{D}$ and the scaling parameter matrix $\mathcal {D}$. To this end, the result of multiplicative inverse eigenvalue problem should be consulted in advance.
\begin{lem}{\rm(\hspace{-0.001cm}\cite[Thm. $1$]{fisher1958stabilization})}\label{le8}
For any square real matrix $\widehat{\mathbb{R}}  \in \mathbb{R}^{M^{2}}$, there exists a real diagonal matrix $\widehat{\mathbb{D}} \in \mathbb{R}^{M^{2}}$ such that the eigenvalues of the matrix $\widehat{\mathbb{D}}\widehat{\mathbb{R}}$ could be located to any desired location if all the principal minors of $\widehat{\mathbb{R}}$ are not equivalent to zero.
\end{lem}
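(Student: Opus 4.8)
The plan is to recast the claim as a \emph{multiplicative inverse eigenvalue problem}: given a prescribed (conjugate-symmetric, since $\widehat{\mathbb{D}}\widehat{\mathbb{R}}$ is real) target multiset $\{\lambda^{*}_{1},\dots,\lambda^{*}_{M}\}$, exhibit a real diagonal $\widehat{\mathbb{D}}=\mathrm{diag}(d_{1},\dots,d_{M})$ so that $\widehat{\mathbb{D}}\widehat{\mathbb{R}}$ has exactly these eigenvalues. The starting point is to expand the characteristic polynomial of $\widehat{\mathbb{D}}\widehat{\mathbb{R}}$ in terms of the unknown scalings. Using the standard expansion of the characteristic polynomial as a sum over principal minors, together with the identity $\det((\widehat{\mathbb{D}}\widehat{\mathbb{R}})[S])=(\prod_{i\in S}d_{i})\,\widehat{\mathbb{R}}[S]$ valid for diagonal $\widehat{\mathbb{D}}$ (here $\widehat{\mathbb{R}}[S]$ is the principal minor of $\widehat{\mathbb{R}}$ indexed by $S$), one obtains
\begin{equation*}
\det(\lambda I-\widehat{\mathbb{D}}\widehat{\mathbb{R}})=\lambda^{M}+\sum_{k=1}^{M}(-1)^{k}c_{k}(d)\,\lambda^{M-k},\qquad c_{k}(d)=\sum_{|S|=k}\Big(\prod_{i\in S}d_{i}\Big)\widehat{\mathbb{R}}[S].
\end{equation*}
Matching coefficients against $\prod_{j}(\lambda-\lambda^{*}_{j})$ then reduces everything to solving the polynomial system $c_{k}(d)=\sigma_{k}(\lambda^{*})$ for $k=1,\dots,M$, where $\sigma_{k}$ is the $k$-th elementary symmetric function of the targets.

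The two extreme equations already expose why the principal-minor hypothesis is indispensable. The top equation reads $c_{M}(d)=(\prod_{i}d_{i})\det(\widehat{\mathbb{R}})=\sigma_{M}(\lambda^{*})$, solvable for the product $\prod_{i}d_{i}$ precisely because $\det(\widehat{\mathbb{R}})\neq 0$; the bottom one, $c_{1}(d)=\sum_{i}d_{i}\,\widehat{\mathbb{R}}[\{i\}]=\sigma_{1}(\lambda^{*})$, is linear and nondegenerate because each first-order principal minor $\widehat{\mathbb{R}}[\{i\}]$ (a diagonal entry) is nonzero. More generally, every monomial in $c_{k}$ is a product of a nonzero principal minor with a subproduct of the $d_{i}$, so no variable can silently drop out of the system and the map $d\mapsto(c_{1},\dots,c_{M})$ is nondegenerate.

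For existence I would run the Fisher--Fuller scaling argument followed by a continuity/implicit-function step. First I would place the scalings on widely separated magnitudes, $|d_{1}|\gg|d_{2}|\gg\cdots\gg|d_{M}|$; a standard perturbation analysis of $\widehat{\mathbb{D}}\widehat{\mathbb{R}}$ then shows its eigenvalues separate and approach $d_{i}\,\Delta_{i}/\Delta_{i-1}$, where $\Delta_{i}$ is the $i$-th leading principal minor (with $\Delta_{0}=1$), all nonzero by hypothesis. Choosing the signs and rough magnitudes of the $d_{i}$ steers these asymptotic values onto any prescribed sign-consistent real configuration. Second, I would upgrade this asymptotic placement to an exact one: the nonvanishing principal minors make the Jacobian of $d\mapsto(c_{1},\dots,c_{M})$ nonsingular at the approximating point, so the implicit function theorem (or, equivalently, a degree/homotopy continuation argument on the polynomial system) delivers a genuine $d$ realizing $\sigma_{k}(\lambda^{*})$ exactly.

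The main obstacle I anticipate is the passage from asymptotic to exact placement while keeping $\widehat{\mathbb{D}}$ \emph{real}. Since $\widehat{\mathbb{D}}\widehat{\mathbb{R}}$ is real, the target must be conjugate-symmetric, and one must verify that the solution branch selected by the scaling argument does not leave the reals under continuation; handling complex-conjugate pairs is more delicate than the distinct-real case that the scaling formula $d_{i}\,\Delta_{i}/\Delta_{i-1}$ supplies most cleanly, and it is here that the nondegeneracy furnished by the principal-minor condition must be invoked with care. (For the application that follows only a comparatively benign target is needed --- eigenvalues with prescribed sign of the real part --- so this delicate conjugate-pair continuation is not the binding difficulty there.)
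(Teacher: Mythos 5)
You should first be aware that the paper offers no proof of this lemma at all: it is imported verbatim from the literature (Fisher and Fuller's stabilization theorem, \cite{fisher1958stabilization}), and is invoked later, in Theorem \ref{th4}, only to place the spectrum of $\mathbb{D}^{*}\mathcal{L}_{11}(k)\mathcal{D}^{*}$ at convenient real locations. So your attempt is measured against the cited literature rather than against an argument inside the paper. Within your attempt, the reduction to the polynomial system $c_{k}(d)=\sigma_{k}(\lambda^{*})$, $k=1,\dots,M$, is correct, and the separated-scaling asymptotics $d_{i}\Delta_{i}/\Delta_{i-1}$ is indeed the engine of the Fisher--Fuller proof.

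The genuine gap is the step you yourself flag at the end: upgrading the asymptotic placement to an exact one \emph{while keeping $\widehat{\mathbb{D}}$ real} when the target multiset contains a non-real conjugate pair. That step is not merely delicate --- it is impossible, because for such targets the statement is false. Take $\widehat{\mathbb{R}}=I$: every principal minor equals $1$, yet $\widehat{\mathbb{D}}\widehat{\mathbb{R}}=\widehat{\mathbb{D}}$ is a real diagonal matrix whose spectrum lies on the real axis, so no real diagonal scaling can realize the target $\{\mathbbm{i},-\mathbbm{i}\}$ (pad with distinct real values when $M>2$). Consequently no implicit-function, degree, or homotopy-continuation argument can close the gap; realizing arbitrary conjugate-symmetric spectra genuinely requires a \emph{complex} diagonal matrix (this is Friedland's multiplicative inverse eigenvalue theorem, whose hard content is exactly the surjectivity you assert in one line when you claim that nonvanishing principal minors make the map $d\mapsto(c_{1},\dots,c_{M})$ nondegenerate --- that claim does not follow from each monomial of $c_{k}$ having a nonzero coefficient). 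What is true, what Fisher--Fuller's Theorem 1 actually asserts, and what your scaling step essentially establishes, is the real version: a real diagonal $\widehat{\mathbb{D}}$ exists making the eigenvalues of $\widehat{\mathbb{D}}\widehat{\mathbb{R}}$ real, simple, and of prescribed signs (indeed close to arbitrary prescribed distinct real values). That weaker reading is also all the paper ever uses, since in Theorem \ref{th4} the ``desired locations'' may be taken distinct, real and positive; your proof becomes correct only after ``any desired location'' is restricted in this way.
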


Moreover, one also has following proposition.
\begin{pro}\label{pro4}
Suppose that graph $\mathcal{G}(k)$ attains a directed spanning tree. Then any $s$th order principal minor of $\mathcal{L}_{1}(k)\mathcal {D}$ is not equivalent to zero where $1\leq s\leq M-1$.
\end{pro}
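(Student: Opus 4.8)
The plan is to reduce the assertion about $\mathcal{L}_1(k)\mathcal {D}$ to one about $\mathcal{L}_1(k)$ alone, and then to exploit the graph-theoretic structure of the rooted block. First I would observe that, because $\mathcal {D}$ is diagonal, for any index set $S\subseteq\{1,\dots,M\}$ the principal submatrix indexed by $S$ factors as $(\mathcal{L}_1(k)\mathcal {D})[S]=\mathcal{L}_1(k)[S]\,\mathcal {D}[S]$, where $\mathcal {D}[S]=\mathrm{diag}(\delta_i:i\in S)$, since the $(i,j)$ entry of the product is $(\mathcal{L}_1(k))_{ij}\delta_j$. Taking determinants yields $\det\big((\mathcal{L}_1(k)\mathcal {D})[S]\big)=\det\big(\mathcal{L}_1(k)[S]\big)\prod_{i\in S}\delta_i$. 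As every $\delta_i\neq 0$ by hypothesis, the $s$th order principal minor of $\mathcal{L}_1(k)\mathcal {D}$ vanishes if and only if the corresponding minor of $\mathcal{L}_1(k)$ vanishes. Hence it suffices to prove that every proper principal submatrix of $\mathcal{L}_1(k)$ of order $1\le s\le M-1$ is nonsingular.

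Next I would pin down the structure of $\mathcal{L}_1(k)$. Since $\{1,\dots,M\}$ is the set of rooted agents and any parent of a root is again a root (a parent reaches the root, which in turn reaches every node), the rooted agents receive information only from within their own set, which is precisely what forces the zero upper-right block of $\mathcal {L}(k)$. Moreover any two roots are mutually reachable, and every node lying on a directed path joining two roots is itself a root, so the subgraph induced on the rooted agents is strongly connected. Consequently $\mathcal{L}_1(k)$ is an irreducible Laplacian: nonnegative diagonal, nonpositive off-diagonal entries, zero row sums, and, for $M\ge 2$, no zero row.

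The core step is to show that grounding at least one node yields a nonsingular matrix. Fix a proper subset $S\subsetneq\{1,\dots,M\}$ with nonempty complement $\bar S$. Each row $i\in S$ of $\mathcal{L}_1(k)[S]$ is weakly diagonally dominant, and is \emph{strictly} dominant exactly when $i$ has a parent in $\bar S$, since deleting those columns removes mass from the off-diagonal part while leaving the diagonal unchanged. Using strong connectivity I would then chain every row to a strictly dominant one: choosing any $q\in\bar S$ and a directed path from $q$ to $i$, the last arc leaving $\bar S$ exhibits a node $w\in S$ with a parent in $\bar S$, and reading the remaining in-$S$ portion of the path backward produces a walk, in the digraph of $\mathcal{L}_1(k)[S]$, from $i$ to the strictly dominant row $w$. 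Thus $\mathcal{L}_1(k)[S]$ is weakly chained diagonally dominant, equivalently a nonsingular $M$-matrix, so its determinant is nonzero; combined with the first step this gives the claim.

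The main obstacle I anticipate is this third step: plain weak diagonal dominance alone does not decide singularity, so the argument rests on converting strong connectivity into the chaining condition with the correct edge orientation (parent/child versus the row/column convention of the Laplacian). Carrying out this bookkeeping carefully—identifying which rows become strictly dominant after grounding $\bar S$ and verifying the existence of a walk to such a row within the submatrix's own graph—is where the real work lies, whereas the determinant factorization and the strong connectivity of the rooted block are comparatively routine.
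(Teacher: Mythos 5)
Your proof is correct, and its reduction step is exactly the paper's: since $\mathcal{D}$ is diagonal and nonsingular, $(\mathcal{L}_{1}(k)\mathcal{D})[S]=\mathcal{L}_{1}(k)[S]\,\mathcal{D}[S]$, so every principal minor of $\mathcal{L}_{1}(k)\mathcal{D}$ is a nonzero multiple of the corresponding minor of $\mathcal{L}_{1}(k)$. Where you genuinely diverge is in the key step: the paper simply cites an external result (Lemma $3.2$ of the reference on grounded Laplacians) to assert that all proper principal minors of $\mathcal{L}_{1}(k)$ are nonzero, whereas you prove this from scratch. You first make explicit what the paper leaves implicit --- that $\mathcal{L}_{1}(k)$ is the Laplacian of the subgraph induced on the rooted agents, and that this subgraph is strongly connected (parents of roots are roots, and nodes on root-to-root paths are roots) --- and then show that any proper principal submatrix $\mathcal{L}_{1}(k)[S]$ is weakly chained diagonally dominant: grounding the nonempty complement $\bar{S}$ makes every row weakly dominant, rows with a parent in $\bar{S}$ strictly dominant, and the ``last arc leaving $\bar{S}$'' argument (with the correct child-to-parent orientation of the matrix digraph) chains every row to a strictly dominant one, giving nonsingularity. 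Your bookkeeping on orientations is right. What the paper's route buys is brevity; what yours buys is self-containedness (modulo the standard fact that WCDD matrices are nonsingular) and, importantly, the explicit use of strong connectivity of the rooted block, which the paper's own follow-up example shows is essential --- a Laplacian that merely has a directed spanning tree can have vanishing principal minors. As a bonus, since a WCDD $Z$-matrix with nonnegative diagonal is a nonsingular $M$-matrix, your argument also yields positivity of these minors, i.e., it furnishes a non-circular proof of the paper's Appendix lemma on strongly connected Laplacians, whose stated proof in the appendix itself invokes this very proposition. One small wording caution: weak chained diagonal dominance is sufficient for, but not equivalent to, being a nonsingular $M$-matrix; only the implication is needed here, so this does not affect correctness.
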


\begin{proof}
According to \cite[Lemma $3.2$]{lin2016graph}, we immediately have access to that all $s$th order principal minors of $\mathcal{L}_{1}(k)$ are not equivalent to zero where $1\leq s\leq M-1$. The conclusion follows by virtue of the fact that the diagonal matrix $\mathcal {D}$ is nonsingular and the property of the determinant with respect to the multiplicativity for square matrices, i.e., ${\rm det}(\mathbb{R}^{\dagger}\mathbb{D}^{\dagger})={\rm det}(\mathbb{R}^{\dagger}){\rm det}(\mathbb{D}^{\dagger})$ for any square matrices $\mathbb{R}^{\dagger}$ and $\mathbb{D}^{\dagger}$.
\end{proof}

We point out here that for a general digraph with respect to $\mathcal{L}_{1}(k)$, which merely contains a directed spanning tree, the conclusion drawn by Proposition $\ref{pro4}$ may be not true.
\begin{exa}
Consider a graph with the underlying Laplacian matrix by
$\mathcal{L}_{1}=\begin{bmatrix}
	1&-1&0\\
	-1&1&0\\
	0&-1&1
\end{bmatrix}$. It is easy to check that the considered graph has a directed spanning tree. Moreover, we can also get that the $2$nd order leading principal minor of $\mathcal{L}_{1}$ is equivalent to zero. However, Proposition $\ref{pro4}$ holds if we cross out the row and column
	regarding to a root of the graph.\QEDA
\end{exa}

By using the above preparations, we will give an affirmative answer to the existence of the weighted gain $\varrho_{i}$ in updating equation $(\ref{eq1})$ with $(\ref{eq1ua})$ for $i\in \hat{\mathbbm{I}}_{N}$.
\begin{thm}\label{th4}
Suppose that graph $\mathcal{G}(k)$ attains a directed spanning tree. Then, weighted gain $\varrho_{i}$ in $(\ref{eq1})$ with $(\ref{eq1ua})$ exists for $i \in \hat{\mathbbm{I}}_{N}$ such that the eigenvalues of the matrix $\mathcal{L}_{1}(k)\mathcal {D}$ can be assigned to some desired locations other than a simple zero eigenvalue.
\end{thm}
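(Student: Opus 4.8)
The plan is to recast the claim as a multiplicative inverse eigenvalue problem and solve it with Lemma~\ref{le8}. Working at a fixed $k$, the gains enter only through $\mathbb{D}=\mathrm{diag}(\varrho_{1},\dots,\varrho_{M})$ in $\mathbb{L}=\mathbb{D}\mathcal{L}_{1}\mathcal{D}$. First I would strip off $\mathcal{D}$ by the similarity $\mathcal{D}\,\mathbb{L}\,\mathcal{D}^{-1}=\mathcal{D}\mathbb{D}\mathcal{L}_{1}=\Lambda\mathcal{L}_{1}$, where $\Lambda=\mathrm{diag}(\delta_{1}\varrho_{1},\dots,\delta_{M}\varrho_{M})$; since $\mathcal{D}$ is nonsingular this preserves the spectrum, and since every $\delta_{i}\neq 0$, prescribing the gains $\varrho_{i}$ is equivalent to prescribing an arbitrary real diagonal $\Lambda$. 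Hence it suffices to place, by the choice of $\Lambda$, the spectrum of the diagonally scaled Laplacian $\Lambda\mathcal{L}_{1}$.

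Next I would isolate the forced zero eigenvalue. Because $\mathcal{L}_{1}\mathbf{1}=0$, one has $\Lambda\mathcal{L}_{1}\mathbf{1}=0$ for every $\Lambda$, so $\mathbf{1}$ is a right null vector independent of the gains and $0$ can never be removed from the spectrum of $\mathbb{L}$; this is the content of the ``other than a simple zero eigenvalue'' clause. Proposition~\ref{pro4} gives that the $(M-1)$st order principal minors of $\mathcal{L}_{1}\mathcal{D}$ are nonzero while $\det(\mathcal{L}_{1}\mathcal{D})=\det(\mathcal{L}_{1})\det(\mathcal{D})=0$, whence $\mathrm{rank}(\mathcal{L}_{1}\mathcal{D})=M-1$ and the zero eigenvalue has geometric multiplicity one. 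Consequently only the remaining $M-1$ eigenvalues are free, and once they are placed off the origin the zero eigenvalue is automatically simple.

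To place those $M-1$ eigenvalues I would expand $\det(\lambda I-\Lambda\mathcal{L}_{1})=\lambda\,q(\lambda)$ and observe that the coefficients of the degree-$(M-1)$ factor $q$ are, up to sign, the principal-minor sums $E_{s}(\Lambda\mathcal{L}_{1})=\sum_{|S|=s}\bigl(\prod_{i\in S}\lambda_{i}\bigr)\det\bigl((\mathcal{L}_{1})_{SS}\bigr)$ for $s=1,\dots,M-1$, every factor $\det((\mathcal{L}_{1})_{SS})$ being nonzero by Proposition~\ref{pro4}. This is exactly the minor-sum map that the multiplicative inverse eigenvalue problem inverts: invoking Lemma~\ref{le8} on the $(M-1)\times(M-1)$ submatrix obtained by deleting a root's row and column (whose principal minors are all nonzero, per the remark following Proposition~\ref{pro4}) should yield gains realizing any prescribed $q$, i.e.\ any desired $M-1$ locations; setting $\varrho_{i}=\lambda_{i}/\delta_{i}$ then recovers the weighted gains.

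The hard part will be reconciling this deflation with the precise hypothesis of Lemma~\ref{le8}. Deflating the common null vector $\mathbf{1}$ does not leave a pure product $\widehat{\mathbb{D}}\widehat{R}$: the reduced $(M-1)\times(M-1)$ block equals a diagonally scaled grounded Laplacian plus a rank-one term carried by the deleted root's row, a coupling that is unavoidable once the rooted component is strongly connected (as it is for $M\geq 2$). I expect to absorb this rank-one term using the one spare degree of freedom, the root gain $\varrho_{r}$, so that Lemma~\ref{le8} applies to a genuinely multiplicative reduced matrix; the special case of a root with no in-neighbours is immediate, since there $\Lambda\mathcal{L}_{1}$ is already block triangular and Lemma~\ref{le8} applies verbatim to the grounded block.
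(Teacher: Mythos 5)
Your core argument is the same as the paper's: single out a root (the $M$th node), use Proposition \ref{pro4} in its grounded form (all principal minors of $\mathcal{L}_{11}(k)\mathcal{D}^{*}$ nonzero), and invoke the Fisher--Fuller result (Lemma \ref{le8}) to place the spectrum of the diagonally rescaled $(M-1)\times(M-1)$ block; the similarity $\mathbb{D}\mathcal{L}_{1}\mathcal{D}\sim\mathcal{D}\mathbb{D}\mathcal{L}_{1}=\Lambda\mathcal{L}_{1}$ and the gain-independent null vector $\mathbf{1}$ are implicit there as well.

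The step you leave open --- that deflating $\mathbf{1}$ does not produce a pure product $\widehat{\mathbb{D}}\widehat{\mathbb{R}}$ --- is a genuine subtlety, and it is one the paper's own proof silently skips: the paper assigns the spectrum of $\mathbb{D}^{*}\mathcal{L}_{11}(k)\mathcal{D}^{*}$ and stops, never relating it to the spectrum of the full matrix, which is not block triangular because the root has in-neighbours (the rooted subgraph is strongly connected, so $\mathcal{L}_{12}(k)\neq\mathcal{O}$ whenever $M\geq 2$; your ``root with no in-neighbours'' escape covers only $M=1$). Your anticipated fix does work, and more simply than you expect: deflating along $\mathbf{1}$ with $P=\left[\begin{smallmatrix}I&\mathbf{1}\\ \mathcal{O}&1\end{smallmatrix}\right]$ shows that the nonzero spectrum of $\Lambda\mathcal{L}_{1}$ equals the spectrum of $\Lambda^{*}\mathcal{L}_{11}-\varrho_{M}\delta_{M}\,\mathbf{1}\mathcal{L}_{12}(k)$, so the rank-one nuisance term is itself proportional to the root gain. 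Choosing $\varrho_{M}=0$ annihilates it (equivalently, the last row of $\Lambda\mathcal{L}_{1}$ then vanishes, making the matrix block triangular with nonzero spectrum exactly that of the pure product $\Lambda^{*}\mathcal{L}_{11}$), Lemma \ref{le8} then applies verbatim, and placing the $M-1$ eigenvalues away from the origin makes the zero eigenvalue simple by your rank argument; if one insists on $\varrho_{M}\neq 0$, continuity of eigenvalues lets you take $|\varrho_{M}|$ small enough that the assigned eigenvalues remain in any prescribed open neighbourhoods. With that one line added your proof closes, and it is in fact more rigorous than the paper's.
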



\begin{proof}
Without loss of any generality, we assign the $M$th node being the root of the underlying graph $\mathcal{G}(k)$ (because of the specification in $(\ref{eq1u})$). Then, we partition the matrix $\mathcal{L}_{1}(k)\mathcal {D}$ by
\begin{equation*}\label{eqw10}
	\begin{aligned}
	\mathcal{L}_{1}(k)\mathcal {D}=\left[
	\begin{array}{c|c}
	\mathcal{L}_{11}(k)\mathcal {D}^{*}&    \delta_{M}\mathcal{L}_{13} (k)\\
	\hline
	\mathcal{L}_{12}(k)\mathcal {D}^{*}& \delta_{M}  l_{M^{2}}(k)\\
	\end{array}
	\right]
	\end{aligned}
\end{equation*}
where $\mathcal{L}_{1}(k)=[l_{ij}(k)]_{M^{2}}$, $\mathcal{L}_{11}(k) \in \mathbb{R}^{(M-1)^{2}}$, $\mathcal{L}_{12}(k)\in \mathbb{R}^{1(M-1)}$, $\mathcal{L}_{13}(k)\in \mathbb{R}^{(M-1)1}$ and $\mathcal {D}={\rm diag}(\mathcal {D}^{*},\delta_{M})$.
In the light of Proposition $\ref{pro4}$, all principle minors in the matrix $\mathcal{L}_{11}(k)\mathcal {D}^{*}$ are not equivalent to zero. Therefore, Lemma $\ref{le8}$ indicates that there exists such a real diagonal matrix $\mathbb{D}^{*}$ to assign the eigenvalues of $\mathbb{D}^{*}\mathcal{L}_{11}(k)\mathcal {D}^{*}$ to some desired locations, which also implies the existence of the weighted gain $\varrho_{i}$ for $i \in \hat{\mathbbm{I}}_{N}$. This completes the proof.
\end{proof}

Actually, the existence of the weighted gain $\varrho_{i}$ can always be fulfilled. As an example,
let $\varrho_{i}={\rm sgn}(\delta_{i})$ and $\delta_{j}\neq 0$ be similar to that in
\cite{roy2015scaled}, $\varrho_{i}=1$ and $\delta_{i}=1$ with $\delta_{j}={\rm sgn}(a_{ij})$ in
\cite{altafini2013consensus,proskurnikov2014consensus,proskurnikov2016opinion},
$\varrho_{i}={\rm sgn}(a_{ij})=\delta_{i}$ with $\delta_{j}=1$ in \cite{meng2016interval}
(notice that $a_{ij}\geq 0$ in $(\ref{eq1})$), or $\varrho_{i}=\mu_{i}{\rm sgn}(\delta_{i})$ for
some positive constant $\mu_{i}$, etc. Moreover, the choice of the weighted gain is far from unique
according to Proposition $\ref{pro4}$. Furthermore, $\varrho_{i}\neq 0$, where the $i$th eigenvalue
in $\mathcal{L}_{1}(k)$ is zero, should be desirable since the subgraph associated with matrix
$\mathcal{L}_{1}(k)$ is strongly connected. Evidently, $\varrho_{i}= 0$ is straightforward
if the graph $\mathcal{G}(k)$ has exactly a root $i$, since agent $i$ has no neighbors in such a circumstance.

\subsection{Connection Among Weighted Gain, Scaling Parameter and Laplacian Matrix} \label{sec32}
Although we have shown the existence of the weighted gains, it is still insufficient to thoroughly deal with the considered coordination problem. As is hinted by Example $\ref{exa0100}$, $\mathcal{L}_{1}(k)$ has a simple zero eigenvalue with directed spanning tree condition, while the matrix $\mathcal{L}_{1}(k)\mathcal {D}$ may contain multiple zero eigenvalues in addition to the nonzero eigenvalues. This is the directed consequence of multiply factor $\mathcal {D}$. However, matrix $\mathbb{L}(k)$ should merely preserve a simple zero eigenvalue and the remaining eigenvalues have positive real parts if we want to solve coordination problem depicted in $(\ref{eq1})$, which may be tightly determined by some connections among $\varrho_{i}$, $\delta_{i}$ and the principle minors of $\mathcal{L}_{1}(k)$.

\begin{figure}
	\centering
	\includegraphics[width=3in,height=2.1in]{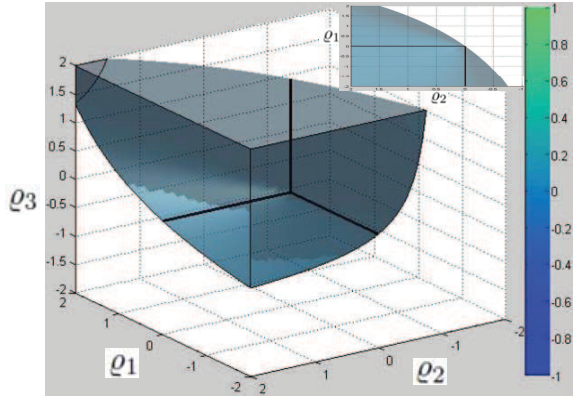}\\
	\caption{The feasible region for weighted gain $\varrho_{i}$ with the constraint $(\ref{eqw5})$.}
	\label{fig4}
\end{figure}

Let's first examine a simple example as below.
\begin{exa}\label{exa1}
Consider a communication topology with $
	\mathcal{L}_{1}=
	\begin{bmatrix}
	a&0&-a\\
	-b&b&0\\
	0&-c&c
	\end{bmatrix}$
where $a$, $b$ and $c$ are some bounded positive constants.
The characteristic equation of the underlying system with $-\mathbb{L}$ is
	\begin{equation*}\label{eqw1}
	\begin{aligned}
	{\rm det}(\lambda I+\mathbb{L})=&
	\lambda(\lambda^{2}+(\varrho_{1}\delta_{1}a+\varrho_{2}\delta_{2}b+\varrho_{3}\delta_{3}c)\lambda\\
	&+\varrho_{2}\delta_{2}b(\varrho_{1}\delta_{1}a+\varrho_{3}\delta_{3}c)+\varrho_{1}\delta_{1}\varrho_{3}\delta_{3}ac)\\
	\triangleq&\lambda(\lambda^{2}+\mathbbm{h}_{1}\lambda+\mathbbm{h}_{2})=0
	\end{aligned}
	\end{equation*}
By Routh-Hurwitz criterion, there allows a pair of $(\varrho_{i},\delta_{i})$ that does not need to share the same sign to make $\mathbb{L}$ possess a simple zero eigenvalue and two eigenvalues with positive real parts. For example, if the sign of $\varrho_{1}$ and that of $\delta_{1}$ are different, we can choose \begin{equation*}
	\begin{aligned}
	\varrho_{1}>\max_{\delta_{1}>0}\bigg\{-\frac{\varrho_{2}\delta_{2}b+\varrho_{3}\delta_{3}c}{\delta_{1}a} ,-\frac{\varrho_{2}\delta_{2}\varrho_{3}\delta_{3}bc}{\delta_{1}a(\varrho_{2}\delta_{2}b+\varrho_{3}\delta_{3}c)}  \bigg\}
	\end{aligned}
	\end{equation*}
to guarantee coordination of the considered system. Naturally, another question immediately arises. Is it possible that two pairs of $(\varrho_{i},\delta_{i})$ have different sign? The answer is NO. For the case of the $1$st and $3$rd pairs,
	\begin{equation*}
	\begin{aligned}
	\mathbbm{h}_{1}>0,~\mathbbm{h}_{2}>0~\Rightarrow~
	(\varrho_{1}\delta_{1}a+\varrho_{3}\delta_{3}c)^{2}<\varrho_{1}\delta_{1}\varrho_{3}\delta_{3}ac
	\end{aligned}
	\end{equation*}
which is a contradiction. Specifically, let
$\mathcal{L}_{1}=
\begin{bmatrix}
1&-1&0\\
-2&4&-2\\
0&-3&3
\end{bmatrix}$.
The characteristic equation of the matrix $-\mathbb{L}$ is
\begin{equation*}
\begin{aligned}
{\rm det}(\lambda I+\mathbb{L})=&\lambda(\lambda^{2}+(\varrho_{1}\delta_{1}+4\varrho_{2}\delta_{2}+3\varrho_{3}\delta_{3})\lambda\\
&+2\varrho_{1}\delta_{1}\varrho_{2}\delta_{2}+3\varrho_{1}\delta_{1}\varrho_{3}\delta_{3}+6\varrho_{2}\delta_{2}\varrho_{3}\delta_{3})=0
\end{aligned}
\end{equation*}
To assure $\mathbb{L}$ preserving two eigenvalues with positive real parts, we can determine the feasible region of weighted gain $\varrho_{i}$ by
\begin{equation}\label{eqw5}
\left\{\begin{aligned}
&\varrho_{1}<4\varrho_{2}+3\varrho_{3}\\
&\varrho_{1}(2\varrho_{2}+3\varrho_{3})<6\varrho_{2}\varrho_{3}
\end{aligned}\right.
\end{equation}
The feasible region for weighted gain $\varrho_{i}$ is presented in Fig. $\ref{fig4}$. We find that: $\textbf{1)}$ The value of $\varrho_{1}$ can be selected to be some positive constant, which indicates that ${\rm sgn}(\varrho_{1})\neq{\rm sgn}(\delta_{1})$ is desirable; $\textbf{2)}$ $\delta_{i}$ and $\varrho_{i}$ can be any nonzero real values as long as ${\rm sgn}(\varrho_{i})={\rm sgn}(\delta_{i})$ holds for each $i$.
\QEDA
\end{exa}

The following theorem describes the relationship among the signs for pairwise $(\varrho_{i},\delta_{i})$, the principle minors of $\mathcal{L}_{1}(k)$ and the configuration of the eigenvalues in the matrix $\mathbb{L}(k)$.
\begin{thm}\label{th5}
Suppose that graph $\mathcal{G}(k)$ attains a directed spanning tree. $\mathbb{L}(k)$ has a simple zero eigenvalue, and its nonzero eigenvalues are located in the open right half plane if either ${\rm sgn}(\varrho_{i})={\rm sgn}(\delta_{i})$ for all $i\in \hat{\mathbbm{I}}_{N}$, or there are at most $s$ ($1\leq s \leq M-1$) pairs of $(\varrho_{i},\delta_{i})$ having ${\rm sgn}(\varrho_{i})\neq{\rm sgn}(\delta_{i})$ such that
\begin{itemize}
\item[{\rm ({\rm \lowercase \expandafter {\romannumeral 1}})}] For each $1\leq r \leq M-1$, we need
\begin{equation*}\label{eqw2}
		\begin{aligned}
		\mathbbm{h}_{r} (k)>0
		\end{aligned}
\end{equation*}
where $\mathbbm{h}_{r}(k)=\sum\varrho_{r_{1}}\delta_{r_{1}}\cdots \varrho_{r_{r}}\delta_{r_{r}}\mathcal {J}_{r}(k)$ stands for the sum of all the $r$th order principle minors of the matrix $\mathbb{L}(k)$ with $\{ r_{1},...,r_{r}   \} \subseteq \hat{\mathbbm{I}}_{N}$. Similarly, $\mathcal {J}_{r}(k)$ is the $r$th order principle minor of the matrix $\mathcal{L}_{1}(k)$.
\item[{\rm ({\rm \lowercase \expandafter {\romannumeral 2}})}] ${\rm det}(\mathbb{H}_{r}(k))>0$ holds with $r$ ($1\leq r \leq M-1$) being all odd (or even) numbers in the set $\hat{\mathbbm{I}}_{N} \backslash \{ M\}$ where
\begin{equation*}
		\mathbb{H}_{r}(k)=
		\begin{bmatrix}
		\mathbbm{h}_{1}(k)&\mathbbm{h}_{3}(k)&\mathbbm{h}_{5}(k)&\cdots&\cdot&\cdot\\
		1&\mathbbm{h}_{2}(k)&\mathbbm{h}_{4}(k)&\cdots&\cdot&\cdot\\
		0&\mathbbm{h}_{1}(k)&\mathbbm{h}_{3}(k)&\cdots&\cdot&\cdot\\
		\vdots&\vdots&\vdots&\ddots&\vdots&\vdots\\
		0&0&0&\cdots&\mathbbm{h}_{r-2}(k)&\mathbbm{h}_{r}(k)
		\end{bmatrix}
\end{equation*}
	\end{itemize}
\end{thm}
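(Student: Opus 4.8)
The plan is to reduce the whole eigenvalue configuration of $\mathbb{L}(k)$ to a Routh--Hurwitz analysis of a single scalar polynomial of degree $M-1$, after first peeling off the structural zero eigenvalue. I would begin from the standard identity that the coefficients of a characteristic polynomial are, up to sign, the sums of the principal minors of the matrix. Since $\mathbb{D}$ and $\mathcal{D}$ are diagonal, any $r$th order principal submatrix of $\mathbb{L}(k)=\mathbb{D}\mathcal{L}_1(k)\mathcal{D}$ indexed by $R=\{r_1,\dots,r_r\}$ equals $\mathbb{D}_R[\mathcal{L}_1(k)]_R\mathcal{D}_R$, so its determinant is $\big(\prod_{i\in R}\varrho_i\delta_i\big)\mathcal{J}_R(k)$; summing over $R$ recovers exactly the quantity $\mathbbm{h}_r(k)$ in the statement. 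Working with $-\mathbb{L}(k)$ to align with a stability convention, I would then write
\begin{equation*}
\det(\lambda I+\mathbb{L}(k))=\sum_{r=0}^{M}\mathbbm{h}_r(k)\,\lambda^{M-r}.
\end{equation*}
By Proposition~\ref{pro3} and the spanning-tree hypothesis, $\mathbb{L}(k)$ has rank $M-1$, whence $\det\mathbb{L}(k)=\mathbbm{h}_M(k)=0$ and $\lambda$ factors out, leaving $\det(\lambda I+\mathbb{L}(k))=\lambda\,\tilde q(\lambda)$ with $\tilde q(\lambda)=\lambda^{M-1}+\mathbbm{h}_1(k)\lambda^{M-2}+\dots+\mathbbm{h}_{M-1}(k)$. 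The conclusion of the theorem is then equivalent to $\tilde q$ being Hurwitz: its roots are precisely the nonzero eigenvalues of $-\mathbb{L}(k)$, and Hurwitzness forces $\tilde q(0)=\mathbbm{h}_{M-1}(k)\neq0$, so $0$ is a simple root of $\lambda\tilde q(\lambda)$.

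For the first sufficient condition, $\mathrm{sgn}(\varrho_i)=\mathrm{sgn}(\delta_i)$ for all $i$, I would bypass Routh--Hurwitz and argue by similarity. Since $\mathcal{D}$ is nonsingular, $\mathbb{L}(k)$ is similar to $\Phi\mathcal{L}_1(k)$, where $\Phi=\mathcal{D}\mathbb{D}=\mathrm{diag}(\varrho_1\delta_1,\dots,\varrho_M\delta_M)$ is a \emph{positive} diagonal matrix. The matrix $\Phi\mathcal{L}_1(k)$ still has zero row sums, nonnegative diagonal and nonpositive off-diagonal entries, i.e.\ it is again the Laplacian of $\mathcal{G}(k)$ with each edge weight rescaled by the positive factor $\varrho_i\delta_i$. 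This rescaling preserves the edge set and hence the directed spanning tree, so the classical Laplacian spectral result (cf.\ \cite{ren2005consensus}) gives a simple zero eigenvalue together with all remaining eigenvalues in the open right half plane; similarity transfers this to $\mathbb{L}(k)$.

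For the second condition I would invoke Routh--Hurwitz directly on $\tilde q$. Item (i), $\mathbbm{h}_r(k)>0$ for $1\le r\le M-1$, is exactly positivity of all coefficients of $\tilde q$, the necessary part of the criterion. Item (ii) requires $\det(\mathbb{H}_r(k))>0$ for $r$ ranging over all odd (or all even) indices in $\{1,\dots,M-1\}$, and one checks that $\mathbb{H}_r(k)$ is precisely the $r$th leading principal submatrix of the Hurwitz matrix of $\tilde q$, its entries read off from $\mathbbm{h}_1(k),\dots,\mathbbm{h}_{M-1}(k)$. Given positivity of the coefficients, the Li\'enard--Chipart refinement of Routh--Hurwitz guarantees that positivity of only the odd-indexed (or only the even-indexed) Hurwitz determinants already implies $\tilde q$ is Hurwitz. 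Thus $\tilde q$ has all roots in the open left half plane, which by the reduction above yields a simple zero eigenvalue of $\mathbb{L}(k)$ and the desired right-half-plane location of the remaining ones.

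The bookkeeping of the first paragraph is where the care lies, and I expect it to be the main obstacle: I must track the sign conventions when passing to $-\mathbb{L}(k)$ so that the characteristic coefficients emerge as $+\mathbbm{h}_r(k)$ rather than with alternating signs, and I must confirm the exact identification of $\mathbb{H}_r(k)$ with the Hurwitz matrix of $\tilde q$, including the degree shift from $M$ to $M-1$ caused by factoring out $\lambda$. Once this is pinned down, the two sufficient conditions separate cleanly: the all-signs-agree case is purely structural, reducing to a positively reweighted Laplacian, whereas the mixed-sign case is genuinely a scalar stability question for which Li\'enard--Chipart is the economical tool, explaining why only half of the Hurwitz determinants need appear in (ii).
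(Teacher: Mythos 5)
Your proposal is correct and follows essentially the same route as the paper's proof: the all-signs-agree case is handled by similarity to the positively reweighted Laplacian $\mathcal{D}\mathbb{D}\mathcal{L}_{1}(k)$, and the mixed-sign case by expanding $\det(\lambda I+\mathbb{L}(k))$ via sums of principal minors, factoring out $\lambda$ (using Proposition~\ref{pro3} so that $\mathbbm{h}_{M}(k)=0$), and applying the Routh--Hurwitz criterion to the residual polynomial with conditions (i) and (ii). Your explicit invocation of Li\'enard--Chipart to explain why only the odd- (or even-) indexed Hurwitz determinants are needed in (ii), and your explicit verification that the principal minors of $\mathbb{D}\mathcal{L}_{1}(k)\mathcal{D}$ factor as $\bigl(\prod_{i\in R}\varrho_{i}\delta_{i}\bigr)\mathcal{J}_{R}(k)$, merely make precise what the paper leaves implicit.
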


To prove Theorem $\ref{th5}$, it necessitates an auxiliary lemma.

\begin{lem}\label{wle1}
Suppose that $\mathcal{L}(k)\in \mathbb{R}^{M^{2}}$ is the Laplacian matrix induced by a strongly connected graph. Then, any $s$th order principal minor of $\mathcal{L}(k)$ is positive where $1\leq s\leq M-1$.
\end{lem}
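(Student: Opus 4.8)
The plan is to fix an index set $S\subseteq\mathbb{V}:=\{1,\dots,M\}$ with $|S|=s$, $1\le s\le M-1$, and to show that the $s\times s$ principal submatrix $B=\mathcal{L}(k)[S]$ (the rows and columns indexed by $S$) has $\det B>0$; since $S$ is arbitrary this is exactly the claim. It is convenient to first record the structure $B$ inherits from $\mathcal{L}(k)$: the off-diagonal entries are the nonpositive numbers $-a_{ij}(k)$, the diagonal entries are $l_{ii}(k)=\sum_{j\ne i}a_{ij}(k)\ge 0$, and the $i$-th row sum of $B$ collapses to $\sum_{j\notin S}a_{ij}(k)\ge 0$, i.e.\ the total weight entering node $i$ from outside $S$. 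Thus $B$ is a weakly row-diagonally-dominant $Z$-matrix, and the whole difficulty is that weak dominance alone does not force nonsingularity, so strong connectivity must enter. I would split the argument into (i) $B$ is nonsingular and (ii) $\det B>0$.

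For (i) I would run a maximum-principle argument. Suppose $Bx=0$ for some $x\ne 0$; after replacing $x$ by $-x$ if necessary, choose $i^{\ast}\in S$ with $x_{i^{\ast}}=\mu:=\max_{i\in S}|x_i|>0$. Reading the $i^{\ast}$-th equation $l_{i^{\ast}i^{\ast}}\,\mu=\sum_{j\in S,\,j\ne i^{\ast}}a_{i^{\ast}j}(k)\,x_j$ and bounding the right side by $\mu\sum_{j\in S,\,j\ne i^{\ast}}a_{i^{\ast}j}(k)=\mu\big(l_{i^{\ast}i^{\ast}}-\sum_{j\notin S}a_{i^{\ast}j}(k)\big)$ forces $\sum_{j\notin S}a_{i^{\ast}j}(k)=0$, and the equality case then forces $x_j=\mu$ for every in-neighbour $j\in S$ of $i^{\ast}$. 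Iterating this over the set $T=\{i\in S:x_i=\mu\}$ shows that $T$ admits no in-edge from $\mathbb{V}\setminus T$: none from $S^{c}$ (the row-sum contribution vanishes) and none from $S\setminus T$ (in-neighbours inside $S$ lie in $T$). Since $T\ne\emptyset$ and $\mathbb{V}\setminus T\supseteq S^{c}\ne\emptyset$, strong connectivity supplies a directed path from some node of $\mathbb{V}\setminus T$ into $i^{\ast}\in T$, which must traverse an edge entering $T$ — a contradiction. Hence $B$ is nonsingular.

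For (ii) I would deform by $B+tI$, $t\ge 0$. For each $t>0$ the Gershgorin disc of row $i$ is centred at $l_{ii}(k)+t>0$ with radius $\sum_{j\in S,\,j\ne i}a_{ij}(k)\le l_{ii}(k)<l_{ii}(k)+t$, so it lies strictly in the open right half-plane; therefore every eigenvalue of $B+tI$ has positive real part and, pairing the complex eigenvalues with their conjugates, $\det(B+tI)>0$. Letting $t\downarrow 0$ and using continuity of the determinant gives $\det B\ge 0$, and combined with the nonsingularity from (i) this yields $\det B>0$. (As a cross-check, strong connectivity makes $\mathcal{L}(k)$ an irreducible singular $M$-matrix, whose every proper principal submatrix is a nonsingular $M$-matrix and hence has positive determinant; I prefer the self-contained route above.)

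I expect the crux to be step (i): converting the purely algebraic hypothesis $Bx=0$ into the combinatorial statement that $T$ is an in-closed set, and then contradicting this with strong connectivity. Step (ii) is a Gershgorin-plus-continuity formality once nonsingularity is in hand, and the reduction to an arbitrary $S$ is automatic because the whole argument is phrased invariantly through $i^{\ast}$ and $T$.
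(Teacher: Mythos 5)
Your proof is correct, but it follows a genuinely different route from the paper's. The paper argues by induction on the order $n$ of the minor: strong connectivity supplies one strictly diagonally dominant row of $\Phi_{n}(k)$, the Schur determinantal formula gives ${\rm det}(\Phi_{n})={\rm det}(\Phi_{n-1})\,\bigl(\phi_{11}-\phi_{1}\Phi^{-1}_{n-1}\phi_{2}\bigr)$, and the key bound $\mathbf{1}\geq -\Phi^{-1}_{n-1}\phi_{2}\geq \mathcal{O}$ (which tacitly uses inverse-nonnegativity of the smaller minor, i.e.\ its nonsingular $M$-matrix property available from the inductive hypothesis) forces the Schur complement to be at least $\sum_{j\notin S}a_{1j}(k)>0$; the paper also leans on its Proposition 4 (via a cited lemma) for nonvanishing of the minors. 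You instead decouple the two halves of the claim: nonsingularity of an arbitrary principal submatrix $B$ comes from a maximum-principle argument showing that a null vector would make the level set $T$ in-closed, contradicting strong connectivity; positivity of ${\rm det}\,B$ then comes from the soft deformation $B+tI$, Gershgorin discs in the open right half-plane, conjugate pairing of complex eigenvalues, and continuity as $t\downarrow 0$. What your route buys is self-containedness and a single, global use of strong connectivity --- you need neither the external citation behind Proposition 4 nor the (unstated in the paper) inverse-positivity fact, and your argument applies verbatim to every index set $S$ without recursion. What the paper's route buys is a purely determinantal, recursive identity that exhibits the minor as a product of explicitly positive factors, which is slightly more quantitative, and which packages the combinatorics into one dominant row per level rather than a path-chasing argument. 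Both proofs are sound; the one point in your write-up worth making fully explicit is the equality case of your bound in step (i): from $l_{i^{\ast}i^{\ast}}\mu=\sum_{j\in S,\,j\neq i^{\ast}}a_{i^{\ast}j}(k)x_{j}\leq \mu\sum_{j\in S,\,j\neq i^{\ast}}a_{i^{\ast}j}(k)$ you should note that equality forces $x_{j}=\mu$ only for those $j$ with $a_{i^{\ast}j}(k)>0$, which is exactly what you use, so the gap is cosmetic rather than substantive.
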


The proof for Lemma $\ref{wle1}$ is depicted in {\scshape Appendix} ${\rm \ref{app521}}$ for the sake of neatness.

We now dedicate to proving Theorem $\ref{th5}$.
\begin{proof}
For the first case that ${\rm sgn}(\varrho_{i})={\rm sgn}(\delta_{i})$ for all $i\in \hat{\mathbbm{I}}_{N}$, the characteristic equation of the system regarding to the matrix $-\mathbb{L}(k)$ is depicted by
\begin{equation*}\label{eqw9}
	\begin{aligned}
	\mathscr{F}_{k}(\lambda)=&~{\rm det}(\lambda I+\mathbb{L}(k))\\
	=&~{\rm det}(\lambda I+\mathcal {D}\mathbb{D}\mathcal{L}_{1}(k))=0
	\end{aligned}
\end{equation*}
It is notable that for such a scenario, the conclusion is trivial since matrix $\mathcal {D}\mathbb{D}$ is positive definite, leading to the fact that the underlying graph induced by the Laplacian matrix $\mathcal {D}\mathbb{D}\mathcal{L}_{1}(k)$ is a weighted graph \cite{agaev2005spectra}. For such a scenario, the value of $\varrho_{i}$ (resp. $\delta_{i}$) could be any bounded constant.

We now show the second case that there are some pairs of $(\varrho_{i},\delta_{i})$ with ${\rm sgn}(\varrho_{i})\neq{\rm sgn}(\delta_{i})$, and the proof is divided into two steps.
	
${\rm \lowercase \expandafter {\romannumeral 1}})$ There would permit at least a pair of $(\varrho_{i},\delta_{i})$ satisfying ${\rm sgn}(\varrho_{i})\neq{\rm sgn}(\delta_{i})$. It follows that $\mathbb{L}^{\dag}(k)=\mathcal {D}\mathbb{D}\mathcal{L}_{1}(k)\mathcal {D}\mathcal {D}^{-1}=\mathcal {D}\mathbb{D}\mathcal{L}_{1}(k)$.
By Ger$\check{\rm{s}}$gorin disk theorem \cite{horn2012matrix}, all $M$ eigenvalues in $\mathbb{L}^{\dag}(k)\triangleq[\varrho_{i}\delta_{i}l_{ij}(k)]_{M^{2}}$ are assigned in the following disks
\begin{equation}\label{eq70}
	\begin{aligned}
	\widehat{\mathscr{D}}_{i}(k)\triangleq\bigg \{ \lambda \in \mathbb{C}: |\lambda- \varrho_{i}\delta_{i}l_{i^{2}}(k)|\leq | \varrho_{i}\delta_{i}|  l_{i^{2}}(k) \bigg \}, ~i \in \hat{\mathbbm{I}}_{N}
	\end{aligned}
\end{equation}
	
We consider the worst case where each eigenvalue of $\mathbb{L}(k)$ associates with exactly a Ger$\check{\rm{s}}$gorin disk. Let $\lambda=\lambda_{1}+\mathbbm{i}\lambda_{2}$ with $\mathbbm{i}^{2}=-1$. Based on $(\ref{eq70})$, one further gets
	\begin{equation*}
	\begin{aligned}
	(\lambda_{1}-\varrho_{i}\delta_{i} l_{i^{2}}(k))^{2}+\lambda^{2}_{2}\leq (\varrho_{i}\delta_{i})^{2} l^{2}_{i^{2}}(k)
	\end{aligned}
	\end{equation*}
	It further implies that
	\begin{equation*}
	\begin{aligned}
\frac{\lambda_{1}}{l_{i^{2}}(k)}+\frac{\lambda^{2}_{2}}{\lambda_{1}l_{i^{2}}(k)}\leq 2\varrho_{i}\delta_{i}
	\end{aligned}
	\end{equation*}
where it potentially requires that $\lambda\neq 0$. Since there are $M-1$ nonzero eigenvalues, which have positive real parts, preserved in $\mathbb{L}(k)$ (resp. $\mathbb{L}^{\dag}(k)$). Therefore, one can get that there indeed exists a pair of $(\varrho_{i},\delta_{i})$ that has the opposite signs.
	
	${\rm \lowercase \expandafter {\romannumeral 2}})$ What are the conditions in the presence of such pairs to guarantee that $\mathbb{L}(k)$ has a simple zero eigenvalue while the remaining eigenvalues have positive real parts?
	
With the aid of Proposition $\ref{pro3}$, we investigate the case that matrix $\mathcal {D}\mathbb{D}$ is merely invertible by
	\begin{equation*}\label{eqw3}
	\begin{aligned}
	\mathscr{F}_{k}(\lambda)=&~{\rm det}(\lambda I+\mathbb{L}(k))\\
	=&~\lambda^{M}+\mathbbm{h}_{1}(k)\lambda^{M-1}+\mathbbm{h}_{2}(k)\lambda^{M-2}+\cdots+\mathbbm{h}_{M-1}(k)\lambda\\
	=&~\lambda(\lambda^{M-1}+\mathbbm{h}_{1}(k)\lambda^{M-2}+\mathbbm{h}_{2}(k)\lambda^{M-3}+\cdots+\mathbbm{h}_{M-1}(k))\\
	\triangleq&~\lambda \mathscr{F}^{\dag}_{k}(\lambda)=0
	\end{aligned}
	\end{equation*}
Obviously, it suffices to check out the characteristic equation $\mathscr{F}^{\dag}_{k}(\lambda)=0$. According to Lemma $\ref{wle1}$ and the Routh-Hurwitz stability criterion, $\mathscr{F}^{\dag}_{k}(\lambda)=0$ has $M-1$ roots with negative real parts if the requirements ({\rm \lowercase \expandafter {\romannumeral 1}}) and ({\rm \lowercase \expandafter {\romannumeral 2}}) hold. The conclusion hence follows.
\end{proof}

It is ready to give some comparisons in contrast to the existing literature. Since there may exist some pairs of $(\varrho_{i},\delta_{i})$ equipping with ${\rm sgn}(\varrho_{i})\neq{\rm sgn}(\delta_{i})$. Therefore, the conventional consensus problem (e.g., \cite{ren2008distributed,mesbahi2010graph}) cannot be immediately applicable to setup $(\ref{eq1})$. For instance, denoted by $z(k)=\Gamma^{\star} \xi(k)$ with $\Gamma^{\star}={\rm diag}(\mathcal {D},I)$, it yields
\begin{equation*}
\begin{aligned}
z(k+1)=(I-\epsilon\Gamma^{\star}\Gamma^{\ast} \mathcal {L}(k))z(k)
\end{aligned}
\end{equation*}
with $\Gamma^{\ast} ={\rm diag}(\mathbb{D},I)$.
Unfortunately, the fact that $\Gamma^{\star}\Gamma^{\ast}$ is not positive definite leads to that $\Gamma^{\star}\Gamma^{\ast} \mathcal {L}(k)$ is not a Laplacian matrix in the context of usual algebraic graph theory. More importantly, $I-\epsilon\Gamma^{\star}\Gamma^{\ast} \mathcal {L}(k)$ will not be a stochastic matrix in general. Thereby, the classical tool of the infinite product of stochastic matrices fails to deal with formulated problem in this paper. For the scaled consensus \cite{roy2015scaled}, there needed $\varrho_{i}={\rm sgn}(\delta_{i})$ for all $i$. Obviously, this requirement can sometimes be relaxed according to Theorem $\ref{th5}$.

\begin{rmk}
We bare the theoretical aspect consideration, Theorems $\ref{th4}$ and $\ref{th5}$ have implicitly provided a simple manner to select the underlying weighted gains, by which we can solve the coordination problem in the presence of antagonistic information. It is notable that the selection of the weighted gains could be irrelevant to the communication topology, which is not the case in \cite{lin2016graph}. An alternative is to randomly choose $\varrho_{i}$, and then we solely proceed to require that ${\rm sgn}(\varrho_{i})={\rm sgn}(\delta_{i})$, by which the matrix $\mathbb{L}(k)$ can contain a simple zero eigenvalue and the remaining eigenvalues have positive real parts. This is because that we have known the distributions of the eigenvalues regarding to $\mathcal{L}_{1}(k)$, which is a distinguished difference in contrast to \cite{lin2016graph}. Notice that it is extremely beneficial in the case where the sign of $\delta_{i}$ is prior unknown and the number of the neighbors is massive.\QEDA
\end{rmk}

\subsection{Linear Transformation for Coordination Error}\label{Main 32}
From Theorems $\ref{th4}$ and $\ref{th5}$, we are dedicated to showing that control law $(\ref{eq1u})$ is capable for coordination problem with directed spanning tree requirement. Recalling a statement suggested by Ref. \cite{ren2005consensus}, left eigenvector with respect to the zero eigenvalue of Laplacian matrix is paramount for solving the coordination problem. The result derived below indicates that if the final aggregated viewpoint for a collection of reciprocal agents is thoroughly decided by the initial conditions of $s$ $(s>1)$ agents, there exist exactly $s$ rooted nodes in the graph $\mathcal{G}(k)$. The converse argument is also true.
\begin{lem}\label{le1}
Assume that graph $\mathcal{G}(k)$ has roots. Then $\mathcal{G}(k)$ contains $s$ roots if and only if $p_{i}(k)>0$ for $i=1,...,s$, where $p(k)\triangleq [p_{1}(k),...,p_{s}(k),0,...,0]$ is the left eigenvector of Laplacian matrix associated with the zero eigenvalue, and satisfies $\sum\limits^{s}_{i=1}p_{i}(k)=1$. Moreover, for $\forall i,j \in \{1,...,s\}$, there always holds ${\rm sgn}(p_{i}(k))={\rm sgn}(p_{j}(k))$.
\end{lem}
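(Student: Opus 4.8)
The plan is to reduce the defining equation of the left null vector to the strongly connected component formed by the roots, and then to read off the signs of its entries from the positivity of the principal minors supplied by Lemma~\ref{wle1}. The structural observation underlying the ordering already adopted in the statement is elementary: since the graph possesses a root it possesses a directed spanning tree, and whenever a node is a root every node in its strongly connected component is again a root, while no edge may enter that component from outside. Hence the $s$ roots form a single source strongly connected component, and after the relabelling that places them first the Laplacian inherits the block lower-triangular form $\mathcal{L}(k)=\left[\begin{smallmatrix}\mathcal{L}_{1}(k)&\mathcal{O}\\ \mathcal{L}_{2}(k)&\mathcal{L}_{3}(k)\end{smallmatrix}\right]$ of $(\ref{eq2})$, in which $\mathcal{L}_{1}(k)\in\mathbb{R}^{s^{2}}$ is the strongly connected Laplacian of the root component and $\mathcal{L}_{3}(k)\in\mathbb{R}^{(N-s)^{2}}$ governs the non-root nodes.

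Next I would write $p(k)=[\,p_{R}(k),\,p_{NR}(k)\,]$ conformally with this partition and expand $p(k)\mathcal{L}(k)=0$. Its lower block yields $p_{NR}(k)\mathcal{L}_{3}(k)=0$. Because the graph has a root, the argument recorded after $(\ref{eq2})$ places every eigenvalue of $\mathcal{L}_{3}(k)$ in the open right half plane, so $\mathcal{L}_{3}(k)$ is nonsingular and $p_{NR}(k)=0$; this simultaneously produces the trailing zeros of $p(k)$ and confirms that the aggregated quantity disregards the non-root agents. The upper block then collapses to $p_{R}(k)\mathcal{L}_{1}(k)=0$, so $p_{R}(k)$ is a left null vector of the strongly connected Laplacian $\mathcal{L}_{1}(k)$.

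It then remains to fix the signs of the entries of $p_{R}(k)$. Strong connectivity makes $\mathcal{L}_{1}(k)$ irreducible with vanishing row sums, so its zero eigenvalue is simple, ${\rm rank}(\mathcal{L}_{1}(k))=s-1$, and ${\rm adj}(\mathcal{L}_{1}(k))$ has rank one. From $\mathcal{L}_{1}(k)\,{\rm adj}(\mathcal{L}_{1}(k))={\rm det}(\mathcal{L}_{1}(k))I=0$ every row of ${\rm adj}(\mathcal{L}_{1}(k))$ is proportional to $p_{R}(k)$, whence $p_{j}(k)$ is proportional to the diagonal cofactor ${\rm det}((\mathcal{L}_{1}(k))_{\widehat{j}\widehat{j}})$, namely the principal minor of order $s-1$ obtained by deleting node $j$. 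Lemma~\ref{wle1}, applied to the $s\times s$ strongly connected Laplacian $\mathcal{L}_{1}(k)$, guarantees that all such minors are strictly positive, so the $p_{j}(k)$ all carry one common sign; the normalization $\sum_{i=1}^{s}p_{i}(k)=1$ then forces that sign to be positive, giving $p_{i}(k)>0$ for $i=1,\dots,s$ and ${\rm sgn}(p_{i}(k))={\rm sgn}(p_{j}(k))$ for all $i,j\in\{1,\dots,s\}$. This settles the forward implication together with the sign claim.

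For the converse I would simply count: the preceding argument identifies the support of the normalized left null vector with the root set, so the number of strictly positive entries of $p(k)$ equals the number of roots; consequently, if $p(k)$ has exactly $s$ positive leading entries and vanishes elsewhere, the graph must contain exactly $s$ roots. I expect the positivity step to be the crux. One must certify that the zero eigenvalue of $\mathcal{L}_{1}(k)$ is genuinely simple, so that $p_{R}(k)$ is determined up to a scalar and the adjugate rows are all proportional to it, and that the cofactors inherit strict positivity uniformly; this is exactly where Lemma~\ref{wle1} is indispensable, since a bare directed-spanning-tree hypothesis, rather than strong connectivity of the root block, would permit the degenerate cancellations exhibited by the vanishing principal minor in the preceding example.
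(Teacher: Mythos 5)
Your proof is correct, but it takes a genuinely different route from the paper's: the paper disposes of Lemma~\ref{le1} in a single line by citing \cite[Lemma $3.7$]{ren2005consensus}, whereas you give a self-contained argument built entirely from the paper's own machinery --- the block lower-triangular partition of $\mathcal{L}(k)$ used in $(\ref{eq2})$, the placement of the eigenvalues of $\mathcal{L}_{3}(k)$ in the open right half plane (which forces $p_{NR}(k)=0$ and produces the trailing zeros), and an adjugate/cofactor argument converting the strict positivity of the $(s-1)$th order principal minors supplied by Lemma~\ref{wle1} into strict positivity of the entries of $p_{R}(k)$. What your route buys is self-containment and an explicit representation ($p_{j}(k)$ proportional to the diagonal cofactor obtained by deleting node $j$, a matrix-tree-type quantity), together with a clean identification of the support of the normalized left null vector with the root set; this identification is also what legitimizes your counting argument for the converse, since the zero eigenvalue of the full $\mathcal{L}(k)$ is simple (as $\mathcal{L}_{3}(k)$ is nonsingular and $\mathcal{L}_{1}(k)$ has a simple zero eigenvalue), so $p(k)$ is unique after normalization and its support is well defined. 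What the paper's route buys is only brevity, at the cost of deferring to an external reference.

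One small repair is in order: the identity $\mathcal{L}_{1}(k)\,\mathrm{adj}(\mathcal{L}_{1}(k))=\mathrm{det}(\mathcal{L}_{1}(k))I=0$ that you invoke shows that the \emph{columns} of the adjugate lie in the right null space of $\mathcal{L}_{1}(k)$, hence are multiples of the all-ones vector; it is the companion identity $\mathrm{adj}(\mathcal{L}_{1}(k))\,\mathcal{L}_{1}(k)=0$ that makes the \emph{rows} proportional to $p_{R}(k)$. Both identities hold, and in fact combining them shows all rows of the adjugate are identical, so nothing in your argument breaks, but the correct identity should be cited for the step you use.
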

\begin{proof}
	Lemma $\ref{le1}$ follows directly from \cite[Lemma $3.7$]{ren2005consensus}. 
\end{proof}

In terms of Lemma $\ref{le1}$, it is plausible to deal with coordination problem in $(\ref{eq1})$ by virtue of some conventional tools, such as the algebraic-based methods (continuous-time case) \cite{ren2005consensus,roy2015scaled}, the infinite product of stochastic matrices (discrete-time case)\cite{jadbabaie2003coordination,ren2005consensus,han2013cluster} in terms of the coordinate transformation on condition that both weighted gains, scaling parameters and their relationships are deterministic in advance \cite{roy2015scaled}, and the Lyapunov-based approaches \cite{ni2010leader}, just to name a few.
Unfortunately, an apparent drawback in the aforementioned references is that the final consensus interest should be previously known \cite{jadbabaie2003coordination,ren2005consensus,roy2015scaled}, or the reference variable related to a convex combination of agents' current states should be mutually computed \cite{ni2010leader}. Nevertheless, both consensus interest and reference variable are the global information for distributed systems, and they may be unavailable in general \cite{yu2015coordination}.

To overcome foregoing downside triggered by utilization of global information, a linear transformation related to coordination error is devised by
\begin{equation}\label{eq8}
\left.\begin{aligned}
&\zeta_{i}(k)=\delta_{i}\xi_{i}(k)-\delta_{i+1}\xi_{i+1}(k),~i=1,...,M-1\\
&\zeta_{M}(k)=\delta_{M}\xi_{M}(k)-\xi_{M+1}(k)\\
&\zeta_{i}(k)=\xi_{i}(k)-\xi_{i+1}(k),~i=M+1,...,N-1
\end{aligned}\right\}
\end{equation}
By leverage of linear transformation $(\ref{eq8})$, state variable $\xi(k)$ and coordination error $\zeta(k)$ enjoy a connection by
\begin{equation}\label{eq9}
\begin{aligned}
\zeta(k)=P\xi(k)
\end{aligned}
\end{equation}
where elements in matrix $P=[\mathbbm{p}_{ij}]\in \mathbb{R}^{(N-1)N}$ feature the following properties
\begin{equation*}\label{eq15}
\left.\begin{aligned}
&\mathbbm{p}_{i^{2}}=\delta_{i},~\mathbbm{p}_{i(i+1)}=-\delta_{i+1},~i\in \hat{\mathbbm{I}}_{N}\setminus \{M\}\\
&\mathbbm{p}_{M^{2}}=\delta_{M},~\mathbbm{p}_{M(M+1)}=-1\\
&\mathbbm{p}_{ij}=0,~j\neq~i,~j\neq i+1\\
&\mathbbm{p}_{i^{2}}=1,~\mathbbm{p}_{i(i+1)}=-1,~i=M+1,...,N-1
\end{aligned}\right\}
\end{equation*}


For such a transformation matrix $P$, we have the following result, which eliminates the conservative concern of solving coordination problem relying on global information, explicitly described by the induced matrix $A$.
\begin{lem}\label{le3}
For the given matrix $\mathscr{L}(k)$ in equation $(\ref{eq2})$, there exists a matrix $A(k)\in \mathbb{R}^{(N-1)^{2}}$ such that $A(k)P=P\mathscr{L}(k)$, where $A(k)=P\mathscr{L}(k)Q$ and matrix $Q =[\mathbbm{q}_{ij}]\in \mathbb{R}^{N(N-1)}$ with
\begin{equation*}
\left.\begin{aligned}
	&\mathbbm{q}_{ij}=\frac{1}{\delta_{i}},~~i\in \hat{\mathbbm{I}}_{N},j=i,...,N-1\\
	&\mathbbm{q}_{ij}=1,~~~i=M+1,...,N-1,j=i,...,N-1\\
	&\mathbbm{q}_{ij}=0, ~~~\text{ otherwise}
\end{aligned}\right\}
\end{equation*}
\end{lem}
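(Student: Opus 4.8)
The plan is to prove the statement in two stages. First I would show that the rectangular matrix $Q$ is a genuine right inverse of $P$, i.e. $PQ = I_{N-1}$; this already gives that $P$ has full row rank $N-1$ and that $A(k) = P\mathscr{L}(k)Q$ is a square $(N-1)\times(N-1)$ matrix, matching $A(k)\in\mathbb{R}^{(N-1)^{2}}$. Second, I would identify the one-dimensional right null space of $P$ and show that its generating vector is fixed by $\mathscr{L}(k)$; this is precisely what forces $A(k) = P\mathscr{L}(k)Q$ to reproduce $P\mathscr{L}(k)$ exactly.

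For the first stage I would verify $PQ = I_{N-1}$ by a direct computation on the sparse structures in $(\ref{eq8})$ and in the statement. Each row $i$ of $P$ carries the pair $(\delta_i,-\delta_{i+1})$ for $i<M$, the pair $(\delta_M,-1)$ for $i=M$, and $(1,-1)$ for $i>M$, while each column $j$ of $Q$ is constant down the rows, with entry $1/\delta_i$ on rows $i\le M$ and $1$ on rows $M<i\le N-1$, for all $j\ge i$ and $0$ otherwise. Forming $(PQ)_{ij}=\delta_i\mathbbm{q}_{ij}-\delta_{i+1}\mathbbm{q}_{i+1,j}$ (and its analogues at $i=M$ and $i>M$), the diagonal collapses to $\delta_i(1/\delta_i)=1$, the sub-diagonal entries vanish because $\mathbbm{q}_{i+1,j}=0$ for $j=i$, and the super-diagonal entries cancel in consecutive pairs since $\delta_i(1/\delta_i)-\delta_{i+1}(1/\delta_{i+1})=1-1=0$. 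Nonvanishing of every $\delta_i$ is essential here. Hence $PQ=I_{N-1}$, so $\ker P$ is one-dimensional.

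For the second stage I would solve $Pv=0$. The three blocks of $(\ref{eq8})$ force $\delta_i v_i$ to be constant over $i=1,\dots,M$ and equal to the common value $v_{M+1}=\dots=v_N$, so $\ker P=\mathrm{span}(v_0)$ with $v_0=[\mathcal{D}^{-1}\mathbf{1}_M;\,\mathbf{1}_{N-M}]$. The decisive observation is that $v_0$ is a fixed point of $\mathscr{L}(k)$: since $\mathcal{L}(k)$ is a Laplacian, $\mathcal{L}_1(k)\mathbf{1}_M=0$ and $\mathcal{L}_2(k)\mathbf{1}_M+\mathcal{L}_3(k)\mathbf{1}_{N-M}=0$, and substituting these into the block form of $\mathcal{M}(k)$ gives $\mathcal{M}(k)v_0=0$, whence $\mathscr{L}(k)v_0=(I-\epsilon\mathcal{M}(k))v_0=v_0$ for every $k$. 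To close the argument I would use $PQ=I_{N-1}$ to get $PQP=P$, so $P(I_N-QP)=0$, meaning every column of $I_N-QP$ lies in $\ker P=\mathrm{span}(v_0)$. Applying $\mathscr{L}(k)$, which fixes $v_0$, and then $P$, which annihilates $v_0$, to each such column yields $P\mathscr{L}(k)(I_N-QP)=0$, i.e. $P\mathscr{L}(k)QP=P\mathscr{L}(k)$. Thus $A(k)=P\mathscr{L}(k)Q$ satisfies $A(k)P=P\mathscr{L}(k)$, as claimed.

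The main obstacle is the middle step: recognizing that the correct invariant direction is the scaled vector $v_0=[\mathcal{D}^{-1}\mathbf{1}_M;\,\mathbf{1}_{N-M}]$ rather than the plain all-ones vector, and checking that the scaling matrix $\mathcal{D}$ threaded through $\mathcal{M}(k)$ conspires with the Laplacian row-sum identities so that $\mathscr{L}(k)v_0=v_0$. Once this invariance is in hand, everything else is bookkeeping on the bidiagonal structure of $P$ and the triangular structure of $Q$.
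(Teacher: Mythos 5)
Your proof is correct, and it takes a genuinely different route from the paper's. The paper proves the same central identity $P\mathscr{L}(k)QP=P\mathscr{L}(k)$, but does so by brute force: it computes $QP$ explicitly (obtaining $QP=\left[\begin{smallmatrix} I & \nu\\ \mathcal{O} & 0\end{smallmatrix}\right]$ with $\nu=-[\tfrac{1}{\delta_{1}},\dots,\tfrac{1}{\delta_{M}},1,\dots,1]^{\prime}$), writes out every entry $e_{ij}(k)$ of $P\mathscr{L}(k)$, and then verifies row by row that $e^{\prime}_{i}(k)\mathbbm{h}=e_{iN}(k)$ with $\mathbbm{h}=[\nu^{\prime},0]^{\prime}$, invoking ``the property of the Laplacian matrix together with lengthy calculations.'' Your argument packages the same underlying fact structurally rather than computationally: note that the paper's $QP$ formula says precisely that $I_{N}-QP$ has a single nonzero column, and that column is exactly your kernel vector $v_{0}=[\mathcal{D}^{-1}\mathbf{1};\,\mathbf{1}]$, so the paper's row-by-row check is literally the statement $P\mathscr{L}(k)v_{0}=0$ that you derive from $\mathscr{L}(k)v_{0}=v_{0}$ and $Pv_{0}=0$. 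What your route buys is brevity and insight: it isolates the real mechanism (the scaling $\mathcal{D}$ inside $\mathcal{M}(k)$ exactly compensates the $\mathcal{D}^{-1}$ in the kernel direction, via the Laplacian block row sums $\mathcal{L}_{1}\mathbf{1}=0$ and $\mathcal{L}_{2}\mathbf{1}+\mathcal{L}_{3}\mathbf{1}=0$), and it shows the lemma holds for any matrix fixing $v_{0}$, not just $\mathscr{L}(k)$. What the paper's computation buys is the explicit entry list of $P\mathscr{L}(k)$, which it reuses immediately after the lemma to display the closed-form entries $\mathbbm{a}_{ij}(k)$ of $A(k)$; your argument establishes existence and the intertwining relation cleanly but does not produce those formulas.
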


The proof for Lemma $\ref{le3}$ is presented in {\scshape Appendix} ${\rm \ref{app5}}$.

In the sequel, the characteristics of the elements in the matrix $A(k)=[\mathbbm{a}_{ij}(k)]\in \mathbb{ R}^{(N-1)^{2}}$ could be detailed by
\begin{equation*}
\begin{aligned}
&\mathbbm{a}_{i^{2}}(k)=1+\epsilon\sum\limits^{i}_{h=1}(\varrho_{i+1}\delta_{i+1}l_{(i+1)h}(k)-\varrho_{i}\delta_{i}l_{ih}(k)), ~i< M\\
&\mathbbm{a}_{ij}(k)=0,~i\in\textbf{I}\setminus \{M\},~j=M+1,...,N\\
&\mathbbm{a}_{ij}(k)=\epsilon\sum\limits^{j}_{h=1}(\varrho_{i+1}\delta_{i+1}l_{(i+1)h}(k)-\varrho_{i}\delta_{i}l_{ih}(k)), ~j \neq i,~j\leq M\\
&\mathbbm{a}_{M^{2}}(k)=1+\epsilon\sum\limits^{M}_{h=1}(l_{(M+1)h}(k)-\varrho_{M}\delta_{M}l_{Mh}(k))\\
&\mathbbm{a}_{Mj}(k)=\epsilon\sum\limits^{j}_{h=1}(l_{(M+1)h}(k)-\varrho_{M}\delta_{M}l_{Mh}(k)),~j<M\\
&\mathbbm{a}_{Mj}(k)=\epsilon\bigg(\sum\limits^{M}_{h=1}(l_{(M+1)h}(k)-\varrho_{M}\delta_{M}l_{Mh}(k))+\sum\limits^{j}_{h=M+1}l_{(M+1)h}(k)\bigg),~j>M\\
&\mathbbm{a}_{ij}(k)=\epsilon\sum\limits^{j}_{h=1}(l_{(i+1)h}(k)-l_{ih}(k)),~i>M,~j\neq i\\
&\mathbbm{a}_{i^{2}}(k)=1+\epsilon\sum\limits^{i}_{h=1}(l_{(i+1)h}(k)-l_{ih}(k)), ~i> M
\end{aligned}
\end{equation*}

In accordance Lemma $\ref{le3}$ with linear transformation $(\ref{eq9})$, the updating equation for coordination error $\zeta(k)$ becomes
\begin{equation}\label{eq16}
\begin{aligned}
\zeta(k+1)=A(k)\zeta(k),~k\in \mathbb{Z}
\end{aligned}
\end{equation}

Despite the new induced error equation $(\ref{eq16})$ looks relatively neat, it still does not offer any clue for coordination problem released by system $(\ref{eq2})$. Thereby, one is urged to take a deep inside on the connection uncovered by matrices $\mathscr{L}(k)$ and $A(k)$, which will be presented in what follows.
\begin{lem}\label{le4}
For matrix $A(k)$ in $(\ref{eq16})$, it attains $n$ eigenvalues equaling to $1$ if and only if multiplicity of $1$ eigenvalue preserved by matrix $\mathscr{L}(k)$ is $n+1$, where $0\leq n\leq N-1$. Moreover, there are $m$ $( 0\leq m\leq N-1)$ eigenvalues of matrix $A(k)$ contained in the unit disk if and only if matrix $\mathscr{L}(k)$ has the same number of the eigenvalues within the unit disk\footnote{Here we implicitly require that ${\rm sgn}(\varrho_{i})$ and ${\rm sgn}(\delta_{i})$ for all $i \in \hat{ \mathbbm{I}}_{N}$ are in the sense of Theorem $\ref{th5}$ for the remaining sections of this paper.}.
\end{lem}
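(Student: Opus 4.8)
The plan is to reduce both spectral claims to a single factorization of characteristic polynomials, obtained from an explicit similarity transformation of $\mathscr{L}(k)$ assembled from the transformation matrices $P$ and $Q$ of Lemma $\ref{le3}$.

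First I would record two structural facts about $P$. From the explicit entries of $P$ and $Q$ one checks directly that $PQ=I_{N-1}$; consequently $P$ has full row rank $N-1$, so $\ker P$ is one-dimensional. A short computation using the row-sum properties of the Laplacian blocks, namely $\mathcal{L}_{1}(k)\mathbf{1}_{M}=0$ and $\mathcal{L}_{2}(k)\mathbf{1}_{M}+\mathcal{L}_{3}(k)\mathbf{1}_{N-M}=0$, shows that the vector $w=[\delta_{1}^{-1},\ldots,\delta_{M}^{-1},1,\ldots,1]^{\prime}$ satisfies $\mathcal{M}(k)w=0$, hence $\mathscr{L}(k)w=w$ while $Pw=0$. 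Thus $\ker P=\mathrm{span}\{w\}$, and $w$ is an eigenvector of $\mathscr{L}(k)$ for the eigenvalue $1$.

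Next I would assemble the $N\times N$ matrix $S=[\,Q\mid w\,]$ and argue it is invertible: the $N-1$ columns of $Q$ are linearly independent because $PQ=I_{N-1}$, and $w\notin\mathrm{Range}(Q)$ since $w=Qx$ would force $0=Pw=PQx=x$ and hence $w=0$. Using $A(k)P=P\mathscr{L}(k)$ together with $PQ=I_{N-1}$ gives $P(\mathscr{L}(k)Q-QA(k))=A(k)-A(k)=0$, so each column of $\mathscr{L}(k)Q-QA(k)$ lies in $\ker P$; therefore $\mathscr{L}(k)Q=QA(k)+w\,c^{\prime}$ for some row vector $c^{\prime}$. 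Combined with $\mathscr{L}(k)w=w$ this yields the block lower-triangular form
\[
S^{-1}\mathscr{L}(k)S=\begin{bmatrix} A(k) & 0 \\ c^{\prime} & 1 \end{bmatrix},
\]
whence $\det(\lambda I_{N}-\mathscr{L}(k))=(\lambda-1)\,\det(\lambda I_{N-1}-A(k))$.

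Both assertions then read off from this factorization. The algebraic multiplicity of the eigenvalue $1$ in $\mathscr{L}(k)$ exceeds that in $A(k)$ by exactly one, which is the first equivalence ($n$ versus $n+1$). For the second, the surplus root is precisely $\lambda=1$, lying on the unit circle and not strictly inside it, so the two polynomials share the same number of roots in the open unit disk, giving the claimed equality of counts $m$. The main obstacle is the opening step, proving $PQ=I_{N-1}$ and identifying the kernel generator $w$ with $\mathscr{L}(k)w=w$ and $Pw=0$, since everything afterward is a clean similarity argument; I would also take care to state the counting claim for the \emph{open} disk, so that the boundary eigenvalue $1$ contributed by the factor $(\lambda-1)$ does not corrupt the correspondence.
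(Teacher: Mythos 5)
Your proposal is correct, and it reaches the paper's central identity $\det(\lambda I_{N}-\mathscr{L}(k))=(\lambda-1)\det(\lambda I_{N-1}-A(k))$ by a genuinely different route. The paper proves this factorization by bare-hands manipulation of the characteristic determinant: it redistributes the scalings $\delta_{i},\varrho_{i}$ across rows and columns and then performs elementary operations (subtract row $i+1$ from row $i$; add the first $i$ columns to column $i+1$), which produces the block-triangular determinant directly. You instead work structurally: you verify $PQ=I_{N-1}$, identify $\ker P={\rm span}\{w\}$ with $w=[\delta_{1}^{-1},\ldots,\delta_{M}^{-1},1,\ldots,1]^{\prime}$ satisfying $\mathcal{M}(k)w=0$ (hence $\mathscr{L}(k)w=w$), and then exploit the intertwining relation $A(k)P=P\mathscr{L}(k)$ of Lemma \ref{le3} to build the invertible matrix $S=[\,Q\mid w\,]$ and exhibit $S^{-1}\mathscr{L}(k)S$ in block lower-triangular form. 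In effect, the paper's row/column operations are the concrete, coordinate-level shadow of your similarity transformation, so the two proofs share the same underlying idea of quotienting out the agreement direction; but your version buys several things: it reuses Lemma \ref{le3} instead of recomputing determinants from scratch, it makes visible exactly where the surplus eigenvalue $1$ comes from (the kernel direction $w$), and it explicitly addresses the boundary issue — the extra root $\lambda=1$ lies on the unit circle, not inside it, so the counts of eigenvalues in the \emph{open} unit disk coincide — a point the paper's proof passes over silently. The paper's computation, in exchange, is fully self-contained and does not require establishing invertibility of $S$ or the kernel characterization. Both arguments are sound; yours is the more conceptual and the more readily generalizable of the two.
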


Before presenting the analysis for Lemma $\ref{le4}$, we should shed light on the relations among the eigenvalues of $\mathscr{L}(k)$ and $\mathcal {M}(k)$.
\begin{lem}\label{le5}
Suppose that conditions in Theorem $\ref{th5}$ are desirable. Then, matrix $\mathscr{L}(k)=I-\epsilon\mathcal {M}(k)$ contains a simple $1$ eigenvalue and the remaining eigenvalues are entirely contained in the unit disk.
\end{lem}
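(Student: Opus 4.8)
The plan is to exploit the block lower-triangular structure of $\mathcal{M}(k)$ together with the spectral mapping induced by $\mathscr{L}(k)=I-\epsilon\mathcal{M}(k)$. First I would observe that $\mathcal{M}(k)$ is block lower triangular with diagonal blocks $\mathbb{L}(k)=\mathbb{D}\mathcal{L}_{1}(k)\mathcal {D}$ and $\mathcal{L}_{3}(k)$; hence its characteristic polynomial factors, and the eigenvalues of $\mathcal{M}(k)$ are precisely those of $\mathbb{L}(k)$ together with those of $\mathcal{L}_{3}(k)$, counted with algebraic multiplicity. Theorem $\ref{th5}$ guarantees that $\mathbb{L}(k)$ has a simple zero eigenvalue with its remaining eigenvalues in the open right half plane, while the directed spanning tree hypothesis already forces every eigenvalue of $\mathcal{L}_{3}(k)$ into the open right half plane. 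Since $0$ is therefore not an eigenvalue of $\mathcal{L}_{3}(k)$, the zero eigenvalue of $\mathcal{M}(k)$ is simple, and every other eigenvalue of $\mathcal{M}(k)$ has strictly positive real part.

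Next I would pass to $\mathscr{L}(k)$ through the affine spectral map $\mu\mapsto 1-\epsilon\mu$. Because this map is a bijection of $\mathbb{C}$, it preserves algebraic multiplicities, so the simple eigenvalue $\mu=0$ of $\mathcal{M}(k)$ becomes the simple eigenvalue $1$ of $\mathscr{L}(k)$. For a nonzero eigenvalue $\mu=a+b\mathbbm{i}$ of $\mathcal{M}(k)$ with $a={\rm Re}(\mu)>0$, a direct computation gives
\begin{equation*}
|1-\epsilon\mu|^{2}=1-2\epsilon a+\epsilon^{2}|\mu|^{2},
\end{equation*}
so that $|1-\epsilon\mu|<1$ holds exactly when $\epsilon<2\,{\rm Re}(\mu)/|\mu|^{2}$. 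Consequently, every eigenvalue of $\mathscr{L}(k)$ other than the simple unit eigenvalue lies strictly inside the unit disk provided $\epsilon$ is chosen below this threshold.

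The delicate point, and the step I expect to be the main obstacle, is securing a single $\epsilon$ that works simultaneously for all admissible $k$, since the nonzero eigenvalues $\mu=\mu(k)$ vary with the switching topology. Here I would invoke the standing assumption that the communication graphs $\mathcal{G}(k)$ are drawn from a finite family, as is standard for switched networks. The constant $\epsilon^{\star}$, defined as the minimum of $2\,{\rm Re}(\mu)/|\mu|^{2}$ taken over all nonzero eigenvalues $\mu$ of $\mathcal{M}(k)$ and over all admissible $k$, is then a strictly positive minimum over a finite set; any $\epsilon\in(0,\epsilon^{\star})$ renders the threshold of the previous paragraph uniform in $k$. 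With this uniform bound in hand, the simplicity of the unit eigenvalue and the containment of the remaining spectrum in the open unit disk hold for every $k$, which completes the argument.
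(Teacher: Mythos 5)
Your proposal is correct and takes essentially the same route as the paper's own proof: both pass the spectrum of $\mathcal{M}(k)$ through the affine map $\mu\mapsto 1-\epsilon\mu$, use Theorem~\ref{th5} (together with the spanning-tree fact that the eigenvalues of $\mathcal{L}_{3}(k)$ lie in the open right half plane) to conclude that $\mathcal{M}(k)$ has a simple zero eigenvalue with all other eigenvalues of positive real part, and then choose $\epsilon$ below the threshold $2\,{\rm Re}(\mu)/|\mu|^{2}$ over the nonzero eigenvalues, which is exactly the paper's bound $\epsilon<\min\{-2\lambda^{\star}_{1i}/((\lambda^{\star}_{1i})^{2}+(\lambda^{\star}_{2i})^{2})\}$ written for $\lambda^{\star}=-\mu$. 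Your explicit appeal to the block lower-triangular structure of $\mathcal{M}(k)$ and to finiteness of the topology family for a uniform $\epsilon$ merely spells out steps the paper leaves implicit.
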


The detailed proof for Lemma $\ref{le5}$ is depicted in {\scshape Appendix} ${\rm \ref{app6}}$.

With the help of Lemma $\ref{le5}$, we are ready to complete the proof for Lemma $\ref{le4}$, and is depicted in
{\scshape Appendix} ${\rm \ref{app1}}$ for the sake of concinnity. 

Equipped with above preparations, we shall study the underlying coordination for system $(\ref{eq1})$ with control input devised in $(\ref{eq1u})$ when the interactions among participating agents are time-varying.

\section{Coordination with Time-Varying Communication Topology}\label{sec4}
\subsection{Coordination for Time-Varying Networks}
The motivations for the study of coordination problem in a changing communication setting are ubiquitous. Typical examples include higher temperature and pressure, the avoidance of unforeseen obstacles, or for the sake of adjusting the common agreement, etc. What's more important, nonlinear opinion
dynamics of social networks are often studied by reducing to a linear scenario with time-varying coupling gains \cite{proskurnikov2016opinion}.

A conventional method to deal with coordination problem in a changing environment is the infinite product of nonnegative stochastic matrices, i.e., $\prod^{\textbf{n}}_{i=1}\mathbb{S}_{i}\geq \gamma \sum^{\textbf{n}}_{i=1}\mathbb{S}_{i}$ for $\textbf{n}>1$ and $\gamma>0$, stemming from \cite[Lemma $2$]{jadbabaie2003coordination},
where $\mathbb{S}_{i}$ is a nonnegative stochastic matrix. This suggests that once graph $\sum^{\textbf{n}}_{i=1}\mathbb{S}_{i}$ has a directed spanning tree, so does graph
$\prod^{\textbf{n}}_{i=1}\mathbb{S}_{i}$, which indicates that coordination for a group of individuals can be guaranteed by Wolfowitz theorem\footnote{It should be illustrated here that Wolfowitz theorem cannot be directly implemented to the problem at hand, since $\mathscr{L}$ in $(\ref{eq2})$ is generally not a stochastic matrix anymore in the presence of antagonistic information.}. Notice also that another representative search avenue relies on Lyapunov-based techniques (see, e.g., \cite{ni2010leader}). However, both numerical
and analytical examples have suggested that it is difficult to construct a
a quadratic Lyapunov candidate for
a certain group of consensus algorithms, even for
the linear scenarios \cite{jadbabaie2003coordination,olshevsky2008no}. Furthermore, the aforementioned approaches share a common prerequisite that the union graph should possess a directed spanning tree frequently enough over time.

Evidently, this requirement is restricted, which may be
arduous to check. It is worthwhile to emphasize that infinite product of stochastic matrices may require to be checked for infinite times, giving rise to an infeasible solution \cite{chen2016convergence} on one hand, and resulting in energy or time consuming computations on the other hand. Consequently, an alternative trade-off is to impose a \emph{dwell time} restriction on each communication graph as \cite{hespanha1999stability}. It suggests the fact that the union graph contains a directed spanning tree is trivial, provided that the \emph{dwell time} constraint on each topology is satisfied. More importantly, it permits us to devise some specific switching rules along with the design procedure of the coordination algorithm \cite{zhao2012stability}. This ushers the remaining work.

Inspired by the notions of \emph{dwell time} \cite{hespanha1999stability} and \emph{mode-dependent average dwell time} \cite{zhao2012stability}, an analogous definition of \emph{topology-dependent average dwell time} (\emph{TDADT}) is presented.
\begin{de}\label{de4}
	A communication graph is said to have \emph{TDADT}-based property with respect to $\mathbb{N}_{i}$ over a set of finite sampling instants $[\mathbbm{k}_{0},\mathbbm{k}_{f}]\triangleq(\mathbbm{k}_{0},\mathbbm{k}_{0}+1,...,\mathbbm{k}_{f}-1,\mathbbm{k}_{f})$, if there exists a constant $N_{0i}\geq0$ (chatter bound \cite{hespanha1999stability} on the $i$th topology) such that
	\begin{equation*}\label{eq25}
	N^{i}_{\sigma(k)}(\mathbbm{k}_{0},\mathbbm{k}_{f})\leq N_{0i}+\frac{T_{i}(\mathbbm{k}_{0},\mathbbm{k}_{f})}{\mathbb{N}_{i}}, ~\forall~ \mathbbm{k}_{f}\geq \mathbbm{k}_{0}\geq0
	\end{equation*}
where the discrete-time function $\sigma(k)$ is defined by $\sigma(k):\mathbb{Z}\mapsto \mathscr{D}\triangleq \{1,2,...,\mathbbm{n}\}$, $1<\mathbbm{n}<\infty\in \mathbb{Z}$. $T_{i}$ ($N^{i}_{\sigma(k)}(\mathbbm{k}_{0},\mathbbm{k}_{f})$) are the total active instants (switching numbers) for the $i$th graph $\mathcal{G}_{i}$ over $[\mathbbm{k}_{0},\mathbbm{k}_{f}]$ for $i \in \mathscr{D}$.
\end{de}
\begin{rmk}
	A compelling feature on the notion of \emph{TDADT} is that it formulates the minimum execution time $\mathbb{N}_{i}$ for each potential
	communication graph over the execution interval $[\mathbbm{k}_{0},\mathbbm{k}_{f}]$. In fact,
	the union graph of graphs $\mathcal{G}_{i}$ has a directed spanning tree according to Definition $\ref{de4}$, as long as $\mathcal {T}\triangleq\sum\limits^{\mathbbm{n}}_{i=1}\mathbb{N}_{i}\leq T^{*}\triangleq
	\mathbbm{k}_{f}+1-\mathbbm{k}_{0}$ is satisfied. It may be a lower bound for the updating intervals where the union graph preserving a directed spanning tree is uniformly guaranteed.\QEDA
\end{rmk}

Now, we shall give some useful preliminaries for the theoretical support.
\begin{lem}{\rm(\hspace{-0.001cm}\cite{gohberg1986invariant})}\label{le6}
	Suppose that $\mathscr{H}_{1}$ and $\mathscr{H}_{2}$ are two subspaces of complex number space $\mathbb{C}^{\mathbbm{m}}$. The spaces $\mathscr{H}_{1}+\mathscr{H}_{2}$, $\mathscr{H}_{1}\bigcap\mathscr{H}_{2}$ and $\mathscr{H}_{1}\bigcup\mathscr{H}_{2}$ are also the subspaces in $\mathbb{C}^{\mathbbm{m}}$. For some $X\in \mathbb{C}^{\mathbbm{m}^{2}}$ and $\mathscr{H}\in \mathbb{C}^{\mathbbm{m}}$, if $\forall x\in \mathscr{H}$, $X(x)\in \mathscr{H}$ always holds, we say that $\mathscr{H}$ is an invariant space in $\mathbb{C}^{\mathbbm{m}}$.
\end{lem}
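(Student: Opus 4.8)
The plan is to establish each assertion directly from the subspace axioms—containment of the zero vector together with closure under vector addition and scalar multiplication—since Lemma~\ref{le6} collects only elementary structural facts about subspaces of $\mathbb{C}^{\mathbbm{m}}$, and the references to $X$ and the invariant property are purely definitional.

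First I would treat the sum $\mathscr{H}_{1}+\mathscr{H}_{2}=\{h_{1}+h_{2}:h_{1}\in\mathscr{H}_{1},\,h_{2}\in\mathscr{H}_{2}\}$. Picking two generic elements $x=h_{1}+h_{2}$ and $y=g_{1}+g_{2}$ with $h_{i},g_{i}\in\mathscr{H}_{i}$, closure of each $\mathscr{H}_{i}$ gives $h_{1}+g_{1}\in\mathscr{H}_{1}$ and $h_{2}+g_{2}\in\mathscr{H}_{2}$, so that $x+y=(h_{1}+g_{1})+(h_{2}+g_{2})\in\mathscr{H}_{1}+\mathscr{H}_{2}$; the scalar case $\alpha x=\alpha h_{1}+\alpha h_{2}$ is identical, and $0=0+0$ lies in the sum. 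For the intersection $\mathscr{H}_{1}\bigcap\mathscr{H}_{2}$ the argument is even shorter: any $x,y$ in the intersection belong to both subspaces simultaneously, hence so do $x+y$ and $\alpha x$ by closure within each factor, and $0$ lies in both. The invariant-space portion of the statement then requires nothing beyond recording the definition, namely that $\mathscr{H}$ is called invariant precisely when $X(x)\in\mathscr{H}$ for every $x\in\mathscr{H}$.

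The delicate point, and the step I expect to be the main obstacle, is the claim concerning $\mathscr{H}_{1}\bigcup\mathscr{H}_{2}$. A set-theoretic union of two subspaces is in general \emph{not} closed under addition: if one selects $h_{1}\in\mathscr{H}_{1}\setminus\mathscr{H}_{2}$ and $h_{2}\in\mathscr{H}_{2}\setminus\mathscr{H}_{1}$, then $h_{1}+h_{2}$ lies in neither $\mathscr{H}_{1}$ nor $\mathscr{H}_{2}$, and hence not in their union. The assertion is therefore valid only under the nestedness hypothesis $\mathscr{H}_{1}\subseteq\mathscr{H}_{2}$ or $\mathscr{H}_{2}\subseteq\mathscr{H}_{1}$, in which case $\mathscr{H}_{1}\bigcup\mathscr{H}_{2}$ coincides with the larger of the two and is trivially a subspace. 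Accordingly, I would either read the union here as tacitly restricted to such nested pairs—precisely the regime of chains of invariant subspaces treated in \cite{gohberg1986invariant}—or, for a self-contained argument, replace $\mathscr{H}_{1}\bigcup\mathscr{H}_{2}$ by the span $\mathscr{H}_{1}+\mathscr{H}_{2}$, whose subspace property was already verified above; the remaining assertions go through verbatim as sketched.
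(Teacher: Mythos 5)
Your verification of the sum and intersection parts is correct and standard, and you are right that the invariance clause is purely a definition requiring no proof. Note that the paper itself offers no proof of this lemma at all: it is quoted from the reference \cite{gohberg1986invariant}, so there is no in-paper argument to compare against. In that light your proposal is the more careful treatment, because you caught a genuine defect in the statement as transcribed: the set-theoretic union $\mathscr{H}_{1}\bigcup\mathscr{H}_{2}$ of two subspaces is \emph{not} a subspace in general, and your counterexample scheme (take $h_{1}\in\mathscr{H}_{1}\setminus\mathscr{H}_{2}$ and $h_{2}\in\mathscr{H}_{2}\setminus\mathscr{H}_{1}$; then $h_{1}+h_{2}$ escapes the union) is exactly the right one. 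The union of two subspaces is a subspace precisely when one contains the other, so your proposed repairs—either a nestedness hypothesis or replacing the union by the sum—are both legitimate.

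It is worth adding that the repair you suggest is also the one consistent with how the lemma is actually invoked later in the paper. In Lemma 9 and Theorem 10 the objects built from this lemma are $\mathscr{C}_{1}=\sum_{i\in\mathcal{S}_{1}}\mathscr{H}_{i}^{\dag}$ (a sum) and $\mathscr{C}_{2}=\bigcap_{i\in\mathcal{S}_{1}}\mathscr{H}_{i}^{\ddag}$ (an intersection); even in the final case of Theorem 10, where the initial condition is taken in $\overline{\mathscr{C}_{1}\bigcup\mathscr{C}_{2}}$, the argument immediately decomposes $\zeta(0)=\zeta_{1}(0)+\zeta_{2}(0)$ with $\zeta_{i}(0)\in\mathscr{C}_{i}$, which is a property of the sum $\mathscr{C}_{1}+\mathscr{C}_{2}$, not of the union. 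So the false clause is never essentially used, and your reading of the lemma—with the union replaced by the span, or restricted to nested chains as in the cited monograph on invariant subspaces—is the version the downstream results actually rely on.
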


Based upon the above results, two categories of the eigenvalues are preserved in matrix $A(k)=A_{\sigma(k)}\triangleq A_{i}$ where $i\in \mathscr{D}$. That is, one category is equivalent to $1$ and the other is strictly contained in the unit disk. Additionally, notice the fact that the eigenvalues of $A_{i}$ are not equivalent to zero; and if not, there are some edges in graph $\mathcal{G}_{i}$, whose weighted values should be large enough, even may be infinite. It is of little practical usage in the design of the distributed algorithm. Thereby, we can define the following subspaces:
\begin{equation*}\label{eq26}
\begin{aligned}
\mathscr{H}^{\dag}\triangleq & \bigg\{ x\in \mathcal{C}^{\mathbbm{m}}\mid X(x)=\lambda x, \lambda=1 \bigg\},~
\mathscr{H}^{\ddag}\triangleq  \bigg\{ x\in \mathcal{C}^{\mathbbm{m}}\mid X(x)=\lambda x, | \lambda|<1\bigg \}
\end{aligned}
\end{equation*}

Apparently, by Lemma $\ref{le6}$, both $\mathcal {H}^{\dag}$ and $\mathcal {H}^{\ddag}$ are the invariant
subspaces of the $\mathcal{C}^{\mathbbm{m}}$. We hence derive the result below.

\begin{lem}\label{le7}
Consider equation $(\ref{eq16})$. Letting $\lambda=\max\{|\lambda(A(k))|: | \lambda(A(k))|<1\}$, it follows that $
	\| A(k) \|_{\mathscr{H}^{\dag}}\leq\rho$ and $\| A(k) \|_{ \mathscr{H}^{\ddag}}\leq \rho \lambda$
	where $\rho$ is a positive scalar.
\end{lem}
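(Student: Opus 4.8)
The plan is to read the two bounds directly off the spectral decomposition of $A(k)$ supplied by Lemmas $\ref{le4}$ and $\ref{le5}$, using the invariant-subspace structure of Lemma $\ref{le6}$. First I would record the consequence of Lemmas $\ref{le4}$ and $\ref{le5}$ that every eigenvalue of $A(k)$ either equals $1$ or lies strictly inside the unit disk with modulus at most $\lambda=\max\{|\lambda(A(k))|:|\lambda(A(k))|<1\}$. The (generalized) eigenvectors attached to these two groups span $\mathscr{H}^{\dag}$ and $\mathscr{H}^{\ddag}$ respectively, and by Lemma $\ref{le6}$ both are $A(k)$-invariant with $\mathcal{C}^{\mathbbm{m}}=\mathscr{H}^{\dag}\oplus\mathscr{H}^{\ddag}$. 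Hence it suffices to bound the operator norm of $A(k)$ on each factor in isolation.

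For the factor $\mathscr{H}^{\dag}$ I would gather the associated eigenvectors into a modal matrix $W_{\dag}$; since $A(k)$ acts there with spectral radius $1$, the restricted norm $\|A(k)\|_{\mathscr{H}^{\dag}}$ is a finite constant, controlled by the conditioning $\|W_{\dag}\|/\sigma_{\min}(W_{\dag})$. For the factor $\mathscr{H}^{\ddag}$ I would write any $x\in\mathscr{H}^{\ddag}$ in modal coordinates $x=W_{\ddag}y$, so that $A(k)x=W_{\ddag}\Lambda_{\ddag}y$ with $\|\Lambda_{\ddag}\|\le\lambda$. Estimating $\|A(k)x\|\le\|W_{\ddag}\|\,\|\Lambda_{\ddag}\|\,\|y\|\le\|W_{\ddag}\|\lambda\|y\|$ together with $\|y\|\le\|x\|/\sigma_{\min}(W_{\ddag})$ yields $\|A(k)\|_{\mathscr{H}^{\ddag}}\le(\|W_{\ddag}\|/\sigma_{\min}(W_{\ddag}))\lambda$. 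Taking $\rho$ to be the larger of the two conditioning numbers delivers $\|A(k)\|_{\mathscr{H}^{\dag}}\le\rho$ and $\|A(k)\|_{\mathscr{H}^{\ddag}}\le\rho\lambda$ at once; and because the switching signal $\sigma(k)$ takes values in the finite set $\mathscr{D}=\{1,\dots,\mathbbm{n}\}$, only finitely many matrices $A_i$ occur, so replacing $\rho$ by its maximum over $i\in\mathscr{D}$ produces a single scalar valid for all $k$.

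The hard part is the second factor. Since the restricted spectral norm can strictly exceed the spectral radius $\lambda$ whenever $A(k)$ is non-normal, the real content of the lemma is that this gap is absorbable into a finite multiplicative constant $\rho$, which is exactly what the modal-conditioning estimate achieves. The one place where care is needed is the possibility that $A(k)$ fails to be diagonalizable on $\mathscr{H}^{\ddag}$: then $W_{\ddag}$ must be read as a Jordan basis, and one must argue that after the standard diagonal rescaling of each Jordan block the superdiagonal contribution is folded into $\rho$, while still keeping $\rho$ uniform over the finite index set $\mathscr{D}$. Verifying this degenerate case cleanly, rather than the generic bound itself, is where the effort concentrates.
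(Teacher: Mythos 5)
Your proposal is correct and follows essentially the same route as the paper's proof: both decompose the space into the $A(k)$-invariant subspaces $\mathscr{H}^{\dag}$ and $\mathscr{H}^{\ddag}$ (Lemmas \ref{le4}--\ref{le6}), pass to a basis adapted to that splitting, and absorb the conditioning of the basis matrix into the constant $\rho$ while the factor $\lambda$ comes from the subunit block, with uniformity over the finite index set $\mathscr{D}$. If anything, your treatment is more careful at the one loose point of the paper's argument: the paper reduces $A(k)$ via an orthogonal $U$ to $\mathrm{diag}(I_{d},\mathbf{A}(k))$ with $\mathbf{A}(k)$ merely upper triangular and then uses $\|\mathbf{A}(k)\|\leq\lambda$, which for a non-normal block is exactly what your Jordan-block diagonal rescaling (with the rescaling factor folded into $\rho$, legitimate since $\lambda>0$ here) is needed to justify.
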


The detailed proof for Lemma $\ref{le7}$ is depicted in {\scshape Appendix} ${\rm \ref{app7}}$.

Subsequently, define $\lambda^{i}\triangleq \max \{  |\lambda(A_{i})| <1\}$ for $ i \in \mathscr{D}$. By Lemma $\ref{le7}$, it follows that
\begin{equation}\label{eq31}
\begin{aligned}
\| A_{i}\| _{\mathscr{H}_{i}^{\dag}}\leq&~ \| U\|_{\mathscr{H}_{i}^{\dag}} \| U ^{-1}\|_{\mathscr{H}_{i}^{\dag}}\leq \rho^{i}\\
\| A_{i}\|_{\mathscr{H}_{i}^{\ddag}}\leq&~ \|U\|_{\mathscr{H}_{i}^{\ddag}}\| U ^{-1}\|_{\mathscr{H}_{i} ^{\ddag}}\|\textbf{A}_{i}\|\leq \rho^{i} \lambda^{i}
\end{aligned} ~~~ \text{$i \in \mathscr{D}$}
\end{equation}


\begin{lem}\label{le9}
	The union graph of a group of changing topologies contains a directed spanning tree if there exist two sets $\mathcal{S}_{1}$ and $\mathcal{S}_{2}$, both of which belong to $\mathscr{D}$ and satisfy $\mathcal{S}_{1}\bigcup\mathcal{S}_{2}=\mathscr{D}$, such that $\mathscr{C}_{1}=\sum\limits_{i\in\mathcal{S}_{1}}\mathscr{H}_{i}^{\dag}$ and $\mathscr{C}_{2}=\bigcap\limits_{ i\in \mathcal{S}_{1}}  \mathscr{H}_{i}^{\ddag}$ are $A_{i}$-invariant sets, and
	\begin{equation}\label{eq36}
	\begin{aligned}
	\mathscr{C}_{1 }\subseteq\bigcap\limits_{i\in\mathcal{S}_{2}}\mathscr{H}_{i}^{\ddag}
	\end{aligned}
	\end{equation}
\end{lem}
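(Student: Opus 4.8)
The plan is to reduce the graph-theoretic assertion to a purely linear-algebraic statement about the common non-contracting subspace of the family $\{A_{i}\}_{i\in\mathscr{D}}$, and then to verify that statement directly from the three hypotheses. First I would record that, by Lemma $\ref{le5}$ together with Lemma $\ref{le4}$, the only unit-modulus eigenvalue of each $A_{i}$ is $1$, and it is semisimple (the zero eigenvalue of the underlying Laplacian being semisimple); consequently the two invariant subspaces introduced before Lemma $\ref{le7}$ furnish a complementary splitting $\mathbb{C}^{\mathbbm{m}}=\mathscr{H}_{i}^{\dag}\oplus\mathscr{H}_{i}^{\ddag}$ with $\mathscr{H}_{i}^{\dag}\cap\mathscr{H}_{i}^{\ddag}=\{0\}$ for every $i\in\mathscr{D}$. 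The invariance hypotheses on $\mathscr{C}_{1}$ and $\mathscr{C}_{2}$, through Lemma $\ref{le6}$, guarantee that the sum $\mathscr{C}_{1}=\sum_{i\in\mathcal{S}_{1}}\mathscr{H}_{i}^{\dag}$ and the intersection $\mathscr{C}_{2}=\bigcap_{i\in\mathcal{S}_{1}}\mathscr{H}_{i}^{\ddag}$ are genuine $A_{i}$-invariant subspaces, so that the estimates of $(\ref{eq31})$ remain meaningful on them.

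The central step is to show that the three hypotheses force the common fixed subspace $\mathscr{K}\triangleq\bigcap_{i\in\mathscr{D}}\mathscr{H}_{i}^{\dag}$ to be trivial. Take any $x\in\mathscr{K}$, so $A_{i}x=x$ for every $i\in\mathscr{D}$. Fixing any $j\in\mathcal{S}_{1}$ we have $x\in\mathscr{H}_{j}^{\dag}\subseteq\mathscr{C}_{1}$; and since $x$ also lies in $\mathscr{H}_{i}^{\dag}$ for every $i\in\mathcal{S}_{2}$, the inclusion $(\ref{eq36})$ yields $x\in\mathscr{C}_{1}\subseteq\bigcap_{i\in\mathcal{S}_{2}}\mathscr{H}_{i}^{\ddag}$. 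Choosing any $i\in\mathcal{S}_{2}$ (here one must assume the split is proper, $\mathcal{S}_{2}\neq\emptyset$, the degenerate case $\mathcal{S}_{2}=\emptyset$ needing separate comment) and invoking complementarity gives $x\in\mathscr{H}_{i}^{\dag}\cap\mathscr{H}_{i}^{\ddag}=\{0\}$, whence $x=0$ and $\mathscr{K}=\{0\}$. Heuristically, $\mathcal{S}_{1}$ collects the topologies carrying all surviving disagreement modes, while $(\ref{eq36})$ certifies that each such mode is strictly contracted by the complementary family $\mathcal{S}_{2}$, so no disagreement direction can be frozen by the whole family simultaneously. This subspace-algebra core is short; the invariance hypotheses are used only to legitimize the sums and intersections as invariant subspaces.

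It then remains to translate $\mathscr{K}=\{0\}$ into the conclusion that the union graph $\bigcup_{i\in\mathscr{D}}\mathcal{G}_{i}$ has a directed spanning tree, and this is the step I expect to be the main obstacle. The bridge is the intertwining relation $A_{i}P=P\mathscr{L}_{i}$ of Lemma $\ref{le3}$ together with $\mathscr{L}_{i}\eta=\eta\Leftrightarrow\mathcal{M}_{i}\eta=0$, which identifies the fixed directions of $A_{i}$ with the image under $P$ of $\ker\mathcal{M}_{i}$; by Lemmas $\ref{le4}$ and $\ref{le1}$ the dimension of $\mathscr{H}_{i}^{\dag}$ equals the number of roots of $\mathcal{G}_{i}$ minus one. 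I would then argue the contrapositive: if the union graph had no directed spanning tree it would contain two reaches with no common root, and the associated indicator-type disagreement vector would lie in $\ker\mathcal{M}_{i}$ for every $i$ (each edge set being contained in the union), producing a nonzero element of $\mathscr{K}$ and contradicting $\mathscr{K}=\{0\}$. The delicate point, and where the real work lies, is to make this correspondence precise modulo the one-dimensional kernel of $P$ (the coordination manifold): one must verify that a common disagreement mode of the union maps to a genuinely nonzero fixed vector of every $A_{i}$, and that the reach structure of the union is counted correctly from the partitions induced by the individual $\ker\mathcal{M}_{i}$. This quotient bookkeeping, rather than the subspace algebra above, is the crux of the proof.
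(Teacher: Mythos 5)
Your subspace-algebra step is correct as far as it goes: with $\mathcal{S}_{1},\mathcal{S}_{2}$ nonempty, the hypotheses do force $\mathscr{K}\triangleq\bigcap_{i\in\mathscr{D}}\mathscr{H}_{i}^{\dag}=\{0\}$, exactly by the complementarity argument you give. The genuine gap is that the reduction you build on --- ``$\mathscr{K}=\{0\}$ implies the union graph has a directed spanning tree'' --- is simply false, so the bridge you defer as ``the crux'' cannot be completed by any amount of quotient bookkeeping. A concrete counterexample inside the paper's own framework: $N=3$, $M=2$, all $\delta_{i}=\varrho_{i}=1$, with $\mathcal{G}_{1}$ consisting of the single edge $1\to 3$ and $\mathcal{G}_{2}$ of the single edge $2\to 3$ (two isolated rooted agents alternately feeding one non-rooted agent). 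In the coordinates $\zeta=(\xi_{1}-\xi_{2},\,\xi_{2}-\xi_{3})^{\prime}$ one computes $\mathscr{H}_{1}^{\dag}={\rm span}\{(1,-1)^{\prime}\}$ and $\mathscr{H}_{2}^{\dag}={\rm span}\{(1,0)^{\prime}\}$, hence $\mathscr{K}=\{0\}$; equivalently, the common kernel of the two Laplacians is exactly ${\rm span}\{[1,1,1]^{\prime}\}$. Yet the union graph $1\to 3\leftarrow 2$ has no root and therefore no directed spanning tree. This also pinpoints where your contrapositive sketch breaks: when the union has two reaches, the reach-indicator vector (here $[1,0,\tfrac{1}{2}]^{\prime}$) lies in the kernel of the \emph{union's} Laplacian, but it lies in neither individual kernel, so it produces no nonzero element of $\mathscr{K}$. (A side error feeding this step: $\dim\mathscr{H}_{i}^{\dag}$ is the number of reaches of $\mathcal{G}_{i}$ minus one, not the number of roots minus one --- $\mathcal{G}_{1}$ above has no root at all.) Note finally that for this pair of graphs the lemma's hypotheses fail for every nonempty choice of $\mathcal{S}_{1},\mathcal{S}_{2}$, which confirms that those hypotheses carry strictly more information than the single consequence $\mathscr{K}=\{0\}$ you extracted from them.

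The paper's proof keeps that extra information and never passes through the common fixed subspace. It instead verifies the joint-connectivity criterion of Lemma \ref{le12}: every one of the $N-1$ eigen-directions must be covered, across the family, by an eigenvalue strictly inside the unit disk. Condition (\ref{eq36}) is used at full strength for this purpose: \emph{every} direction in the sum $\mathscr{C}_{1}=\sum_{i\in\mathcal{S}_{1}}\mathscr{H}_{i}^{\dag}$ --- not merely the directions fixed by all graphs --- is strictly contracted by each $A_{i}$ with $i\in\mathcal{S}_{2}$, while the directions in $\mathscr{C}_{2}$ are strictly contracted by the graphs in $\mathcal{S}_{1}$; counting independent eigenvectors in $\mathscr{C}_{1}$ and $\mathscr{C}_{2}$ then yields $N-1$ independent contracted directions distributed over the family, which is what Lemma \ref{le12} requires (the degenerate cases $\mathscr{C}_{1}=\emptyset$ or $\mathscr{C}_{2}=\emptyset$ are handled separately). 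If you want to repair your route, the object to control is not $\bigcap_{i}\mathscr{H}_{i}^{\dag}$ but how each individual $\mathscr{H}_{i}^{\dag}$, $i\in\mathcal{S}_{1}$, sits inside the contracted subspaces of the complementary graphs --- which is precisely what (\ref{eq36}) encodes and what your argument throws away.
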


Before presenting the proof of Lemma $\ref{le9}$, we need a result borrowed from \cite{ni2010leader}. 
It actually manifests that the union graph is jointly connected if the $(N-1)$ nodes have received information from their neighbors, and each eigenvalue in Laplacian matrix corresponding to one of $(N-1)$ nodes is not equivalent to zero over a uniformly bounded interval.
\begin{lem}{\rm(\hspace{-0.001cm}\cite{ni2010leader})}\label{le12}
	A class of potential changing topologies (each of them has $N$ nodes) over $[\mathbbm{k}_{0},\mathbbm{k}_{f}]$ are jointly connected, if
	\begin{equation}\label{eq71}
	\bigcup\limits_{k\in[ \mathbbm{k}_{0},\mathbbm{k}_{f}]}\Phi(\sigma(k))=\bigg\{ 1,...,N-1 \bigg \}
	\end{equation}
	with $\Phi(i)=\{s\mid \lambda^{s}_{i}\neq 0, s=1,...,N-1, i\in    \mathscr{D}\}$, where $\lambda^{s}_{i}$ is the $s$th eigenvalue of the $i$th Laplacian matrix.
\end{lem}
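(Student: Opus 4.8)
The plan is to reconstruct the argument of \cite{ni2010leader} by establishing a dictionary between the zero/nonzero pattern of each mode's reduced Laplacian spectrum and reachability of individual nodes from the common rooted set, and then to exploit the monotonicity of reachability under edge union. Throughout I would use the leader--follower structure fixed in Section \ref{sec2}: the rooted set $\hat{\mathbbm{I}}_N$ is the same for every mode $i\in\mathscr{D}$, so the phrase ``node $s$ has received information'' always refers to a directed path emanating from this common root. Concretely, I first fix, for each graph $\mathcal{G}_i$, an ordering of the $N-1$ follower modes so that the index $s$ appearing in $\Phi(i)$ is attached to the $s$th node, exactly in the spirit of the block decomposition $\mathcal{L}_3(k)$ of $\mathcal{M}(k)$ in $(\ref{eq2})$.

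Step one is the spectral--reachability dictionary: for a fixed mode $i$, $\lambda^s_i\neq 0$ if and only if node $s$ is reachable from the rooted set along a directed path in $\mathcal{G}_i$. I would derive this from facts already assembled in the paper: a digraph on $N$ nodes possesses a directed spanning tree iff its Laplacian carries a simple zero eigenvalue (the equivalence used repeatedly after $(\ref{eq2})$ and in Lemma \ref{le5}), while Lemma \ref{wle1} shows that every proper principal minor of a strongly connected graph's Laplacian is strictly positive. Partitioning $\mathcal{G}_i$ into its root--reachable and root--unreachable parts and examining the resulting block--triangular Laplacian, an unreachable node contributes a null mode, whereas a reachable node produces a diagonal principal block whose minors are strictly positive and hence a nonzero mode; this pins down $\Phi(i)$ as precisely the set of root--reachable follower indices in $\mathcal{G}_i$.

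Step two assembles the local data into a global statement by monotonicity. Every directed edge present in some $\mathcal{G}_{\sigma(k)}$, $k\in[\mathbbm{k}_0,\mathbbm{k}_f]$, is by definition present in the union graph $\bigcup_{k}\mathcal{G}_{\sigma(k)}$, so every directed root--to--$s$ path witnessed in a single mode survives in the union. The hypothesis $(\ref{eq71})$ asserts that for each $s\in\{1,\dots,N-1\}$ there is an instant $k$ with $s\in\Phi(\sigma(k))$, i.e.\ a mode in which $s$ is root--reachable. By Step one and monotonicity, the union graph therefore contains a directed path from the common root to every follower $s$, which is exactly a directed spanning tree; hence the family is jointly connected over $[\mathbbm{k}_0,\mathbbm{k}_f]$, completing the argument.

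The delicate point --- and the step I expect to be the main obstacle --- is not the union monotonicity, which is immediate, but the rigorous attachment of a spectral index $s$ to a graph vertex in Step one, since eigenvalues do not canonically correspond to nodes. This is exactly what forces the use of the fixed common rooted set and the block--triangular reduced Laplacian: only because the root is identical across all modes do the per--mode reachabilities compose, and only the triangular structure relative to that root lets one read node $s$'s mode off a diagonal block. I would therefore spend most of the effort making the partition into root--reachable and root--unreachable vertices explicit and showing that it aligns with the eigenvalue labeling $\lambda^s_i$ inherited from \cite{ren2005consensus,ni2010leader}; once that bookkeeping is in place, the remainder follows from the elementary facts cited above.
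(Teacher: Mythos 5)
First, a point of reference: the paper itself gives no proof of Lemma \ref{le12} --- it is quoted verbatim from \cite{ni2010leader} as a borrowed result --- so your reconstruction can only be judged on its own merits. Judged so, it has a genuine gap, and it sits exactly where you predicted trouble: the ``spectral--reachability dictionary'' in your Step one is false. For the reduced (grounded) Laplacian associated with the non-rooted nodes, the multiplicity of the zero eigenvalue equals the number of \emph{closed strongly connected components} among the root-unreachable followers (in the undirected setting of \cite{ni2010leader}, the number of leaderless connected components), \emph{not} the number of unreachable nodes. Concretely, take two followers that are mutually coupled but receive nothing from the rooted set: $\mathcal{L}_{3}=\left[\begin{smallmatrix} a & -a\\ -b & b \end{smallmatrix}\right]$ has eigenvalues $0$ and $a+b>0$, so one eigenvalue is nonzero although \emph{neither} node is root-reachable. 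Hence nonzero eigenvalues cannot be put in bijection with root-reachable nodes, your claim ``an unreachable node contributes a null mode'' fails whenever the unreachable part has internal edges, and membership $s\in\Phi(i)$ does not certify reachability of any particular node in $\mathcal{G}_{i}$.

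This breaks Step two, whose monotonicity argument is itself fine but needs exactly the certificate that Step one cannot deliver: with the two-follower mode above and a relabelled copy of it, the index sets $\Phi(\sigma(k))$ can cover $\{1,\dots,N-1\}$ while no single mode --- and hence the union graph --- contains any root-to-follower path; whether this formally violates \eqref{eq71} depends on the eigenvalue-labelling convention, which is precisely the bookkeeping your proof was supposed to pin down and which cannot be pinned down by the block-triangular partition, since that partition yields one zero mode per unreachable root component rather than one per unreachable node. The viable route, and the one consistent with \cite{ni2010leader} (where the graphs are undirected and the rooted/leader set is common across modes), is kernel-based rather than index-based: with $H_{i}=L_{i}+D_{i}\succeq 0$ the grounded Laplacian of mode $i$, one has $\ker H_{i}$ spanned by indicators of the leaderless components of $\mathcal{G}_{i}$, so $\sum_{k\in[\mathbbm{k}_{0},\mathbbm{k}_{f}]}H_{\sigma(k)}\succ 0$ if and only if $\bigcap_{k}\ker H_{\sigma(k)}=\{0\}$, which in turn holds if and only if the union graph connects every follower to the rooted set. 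Condition \eqref{eq71} is best read as a (label-dependent) proxy for this trivial-intersection-of-kernels condition; rebuilding your argument around kernels instead of eigenvalue indices removes the need for any node--eigenvalue correspondence and closes the gap.
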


Consensus problems for multi-agent systems with changing topologies have been extensively studied in the literature, see \cite{jadbabaie2003coordination,ren2005consensus} and references therein. Obviously, a directed spanning tree preserved in the union graph of a finite family of graphs over a uniformly bounded interval is of great importance for solving the involved consensus problems with dynamically changing communication topologies. Lemma $\ref{le12}$ explicitly elaborates the rationale behind the switching communication of a collection of participating agents. Furthermore, it is easy to examine that the result developed in Lemma $\ref{le12}$ is also capable in the case of a directed graph.


We now give a detailed proof for Lemma $\ref{le9}$, see {\scshape Appendix} ${\rm \ref{app8}}$ for details.

\begin{figure}
	\centering
	\includegraphics[width=2.9in,height=1.3in]{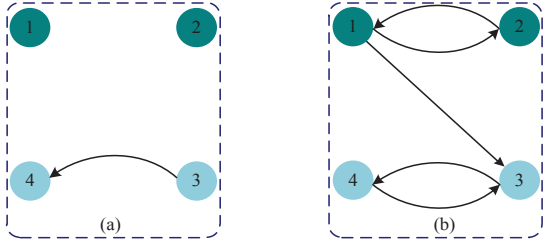}\\
	\caption{A schematic paradigm for the underlying switching topologies where the upper nodes act as rooted agents.}
	\label{fig2}
\end{figure}

Lemma $\ref{le9}$ in fact suggests that the union graph of a group of switching graphs preserves a directed spanning tree if
the requirements in Lemma $\ref{le9}$ are satisfied. However, the converse is generally not true.


Here we present an example to show the fact that the conditions in Lemma $\ref{le9}$ may be less conservative to some extent.


\begin{exa}\label{exa3}
	Assume that the underlying switching graphs are given in Fig. $\ref{fig2}$. The scaling and weighted parameters in $(\ref{eq1u})$ are
	selected as $\delta_{1}=0.75$, $\delta_{2}=-0.85$, $\varrho_{1}=0.75$, $\varrho_{2}=-0.8$ and $\epsilon=0.15$. By computation, it acquires
	\begin{equation*}\label{61}
	A_{1}=\begin{bmatrix}
	1&0&0\\
	0 &1&0\\
	0&0&0.85
	\end{bmatrix},~
	A_{2}=\begin{bmatrix}
	0.8136&0&0\\
	-0.048&0.85&0.15\\
	0.15&0.15&0.7
	\end{bmatrix}
	\end{equation*}
	with $a_{ij}=1$ or $a_{ij}=0$ and their corresponding eigenvalues are $\lambda(A_{1})=(1,1,0.85)$ and $\lambda(A_{2})=(0.6073,0.9427,0.8136)$. It is easy to check that condition $(\ref{eq71})$ holds. Denote $\mathcal{S}_{1}=\{{\rm (a)}\}$ and $\mathcal{S}_{2}=\{{\rm (b)}\}$ for graphs depicted in Fig. $\ref{fig2}$. Moreover, it is easy to see that $\mathscr{C}_{1}$ and $\widetilde{\mathscr{C}}_{2}\triangleq \bigcap\limits_{i\in\mathcal{S}_{2}}\mathscr{H}_{i}^{\ddag}$ are, respectively, spanned by
	\begin{equation*}
	\left\{\begin{aligned}
	\begin{bmatrix}
	1\\
	0\\
	0
	\end{bmatrix}, \begin{bmatrix}
	0\\
	1\\
	0
	\end{bmatrix}
	\end{aligned}\right\},\\
	\left\{\begin{aligned}
	\begin{bmatrix}
	0    \\
	0.5257  \\
	-0.8507   \\
	\end{bmatrix},
	\begin{bmatrix}
	0   \\
	-0.8507  \\
	-0.5257  \\
	\end{bmatrix},
	\begin{bmatrix}
	0.782\\
	-0.5005\\
	0.3716\\
	\end{bmatrix}
	\end{aligned}\right\}
	\end{equation*}
	It is prone to see that $(\ref{eq36})$ is desirable.\QEDA
\end{exa}

\begin{rmk}\label{rmk5}
	A potential approach ensuring $(\ref{eq36})$ in deploying a network of agents under the switching setting is to guarantee that the whole network possesses a directed spanning tree during a period of execution time. A toy instance to motivate the aforementioned method is the fire surveillance for a region of forest in terms of a group of sensors. Obviously, only parts of the sensors are needed to be woken up during the rainy seasons, while all the sensors should be woken up during the dry seasons.\QEDA
\end{rmk}

It is now ready to present the main result of this paper with a changing interaction setting as below.

\begin{thm}\label{th2}
For given scalars $\omega^{i}>0$ and $\gamma^{i}>1$. Suppose that the requirements in Lemma $\ref{le9}$ are satisfied.Then, coordination problem for system $(\ref{eq1})$ with $(\ref{eq1u})$ can be solved under the changing communication environment, if the switching signal $\sigma(k)$ satisfies
\begin{subequations}  
	\begin{align}  
&	\omega^{i}\geq \sqrt[\mathbb{N}_{i}]{\rho^{i}}, ~i\in \mathscr{D}      \label{eq32} \\
&	\lambda^{i}\gamma^{i}\leq \frac{1}{\sqrt[\mathbb{N}_{i}]{\rho^{i}}}, ~i\in \mathscr{D}	      \label{eq33}\\
&	\prod\limits_{i\in\mathcal{S}_{1}}(\omega^{i})^{T_{i}(0,K)} \prod\limits_{i\in\mathcal{S}_{2}}(\frac{1}{\gamma^{i}})^{T_{i}(0,K)}
\leq\prod\limits_{i\in\mathscr{D}}(\delta^{i})^{T_{i}(0,K)}	\label{eq34}\\
&\prod\limits_{i\in\mathcal{S}_{1}}(\frac{1}{\gamma^{i}})^{T_{i}(0,K)}
\prod\limits_{i\in\mathcal{S}_{2}}(\omega^{i})^{T_{i}(0,K)} \leq\prod\limits_{i\in\mathscr{D}}(\delta^{i})^{T_{i}(0,K)}\label{eq35}
	\end{align}
\end{subequations}
where $0<\delta^{i}<1$ and $T_{i}$ is confined in Definition $\ref{de4}$.
\end{thm}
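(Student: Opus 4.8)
The plan is to exploit the invariant-subspace decomposition furnished by Lemma~\ref{le9} to reduce the matrix recursion $(\ref{eq16})$ to two independently propagating norm estimates, and then to close these estimates by counting active times through the \emph{TDADT} budget of Definition~\ref{de4}. Since $\mathscr{C}_1$ and $\mathscr{C}_2$ are $A_i$-invariant for every $i\in\mathscr{D}$ and together span the coordination-error space, I would first split $\zeta(k)=\zeta^{(1)}(k)+\zeta^{(2)}(k)$ with $\zeta^{(1)}(k)\in\mathscr{C}_1$ and $\zeta^{(2)}(k)\in\mathscr{C}_2$; invariance guarantees that each summand evolves under its own restricted map $A_i|_{\mathscr{C}_1}$ or $A_i|_{\mathscr{C}_2}$, so the target $\zeta(k)\to 0$ follows once each component is shown to vanish.

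Next I would extract the per-mode propagation rates. On $\mathscr{C}_1$, every $i\in\mathcal{S}_1$ acts through the unit-eigenvalue directions $\mathscr{H}_i^{\dag}$, so by $(\ref{eq31})$ a single visit of length $\tau$ contributes at most the factor $\rho^i$; every $i\in\mathcal{S}_2$ acts through $\mathscr{H}_i^{\ddag}$ because of the containment $(\ref{eq36})$, contributing at most $\rho^i(\lambda^i)^{\tau}$. Summing a mode's visits and invoking Definition~\ref{de4} gives the episode count $N^i_{\sigma}\le N_{0i}+T_i/\mathbb{N}_i$, whence the accumulated factor over all visits of a \emph{non-contracting} mode is bounded by $(\rho^i)^{N_{0i}}\big((\rho^i)^{1/\mathbb{N}_i}\big)^{T_i}\le C_i\,(\omega^i)^{T_i}$ using $(\ref{eq32})$, while the factor over a \emph{contracting} mode is bounded by $(\rho^i)^{N_{0i}}\big((\rho^i)^{1/\mathbb{N}_i}\lambda^i\big)^{T_i}\le C_i\,(1/\gamma^i)^{T_i}$ using $(\ref{eq33})$; the chatter terms $(\rho^i)^{N_{0i}}$ collapse into a finite constant $C_i$ that is immaterial for the limit.

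Multiplying these mode-wise factors over the whole horizon $[0,K]$ then yields, for the $\mathscr{C}_1$-component, the bound $\|\zeta^{(1)}(K)\|\le C\prod_{i\in\mathcal{S}_1}(\omega^i)^{T_i(0,K)}\prod_{i\in\mathcal{S}_2}(1/\gamma^i)^{T_i(0,K)}$, which by $(\ref{eq34})$ is dominated by $C\prod_{i\in\mathscr{D}}(\delta^i)^{T_i(0,K)}$; by the symmetric role of $\mathscr{C}_2$ (contracted by $\mathcal{S}_1$, merely bounded by $\mathcal{S}_2$) the same dominating bound follows from $(\ref{eq35})$. Since $0<\delta^i<1$ and $\sum_{i\in\mathscr{D}}T_i(0,K)=K+1\to\infty$, the product $\prod_{i\in\mathscr{D}}(\delta^i)^{T_i(0,K)}\to 0$, forcing both components of $\zeta(K)$ to zero; translating back through $(\ref{eq8})$ then establishes items (i)--(iii) of Definition~\ref{de1}.

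The delicate point is the bookkeeping between the \emph{one-time} conditioning constants $\rho^i$, which arise from the eigenbasis change $\|U\|\|U^{-1}\|$ in $(\ref{eq31})$ and are incurred once per visit rather than per step, and the \emph{per-step} contraction $\lambda^i$: one must verify that spreading the $\rho^i$ factors across a visit of length at least $\mathbb{N}_i$ via $(\ref{eq32})$--$(\ref{eq33})$ genuinely absorbs them into the geometric rates $\omega^i$ and $1/\gamma^i$, uniformly over all admissible switching sequences. A second, subtler obstacle is ensuring the decomposition $\mathscr{C}_1\oplus\mathscr{C}_2$ is simultaneously $A_i$-invariant for \emph{every} mode and that on $\mathscr{C}_2$ the modes of $\mathcal{S}_2$ are governed by the non-contracting bound $\rho^i$ rather than uncontrolled growth; this is precisely what the set structure of Lemma~\ref{le9} is designed to supply, so the argument hinges on invoking that lemma correctly rather than on fresh estimates.
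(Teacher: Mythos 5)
Your proposal is correct and follows essentially the same route as the paper's proof: splitting the error along the $A_i$-invariant subspaces $\mathscr{C}_{1}$ and $\mathscr{C}_{2}$ from Lemma \ref{le9} (the paper does this by treating $\zeta(0)\in\mathscr{C}_{1}$, $\zeta(0)\in\mathscr{C}_{2}$, and the mixed case separately, which is equivalent by linearity), bounding each visit by $\rho^{i}$ or $\rho^{i}(\lambda^{i})^{\tau}$ via $(\ref{eq31})$ and the containment $(\ref{eq36})$, converting visit counts to active times through Definition \ref{de4}, absorbing the conditioning constants with $(\ref{eq32})$--$(\ref{eq33})$, and closing with $(\ref{eq34})$--$(\ref{eq35})$ to obtain geometric decay $\prod_{i\in\mathscr{D}}(\delta^{i})^{T_{i}(0,K)}\le\delta^{K+1}\to 0$. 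The bookkeeping you flag as delicate (per-visit $\rho^{i}$ versus per-step $\lambda^{i}$) is handled in the paper exactly as you describe.
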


\begin{proof}
Actually, Lemmas $\ref{le6}$ and $\ref{le7}$ indicate the underlying facts: $\mathscr{C}_{i}$ is a subspace of $\mathscr{C}^{N}$ for $i=1,2$, $\bigcap\limits^{2}_{i=1}\mathscr{C}_{i}=\emptyset$. Besides, one also gets that $\mathcal{C}_{\mathscr{C}^{N}}(\mathscr{C}_{1})=\mathscr{C}_{2}$.
	
We proceed with denoting the switching instances over the sampled interval $[0,K]$ by $0<k_{1},...,k_{N_{\sigma}(0,K)}$ for any sufficient large positive integer $K$ where $N_{\sigma}(0,K)=\sum\limits^{\mathbbm{n}}_{i=1}N_{i \sigma}(0,K)$. For any initial value $\zeta(0)\in \mathscr{C}_{1}$, it has
\begin{equation}\label{eq37}
	\begin{aligned}
	\zeta(K+1)=&(A_{\sigma(k_{N_{\sigma}(0,K)})})^{\hat{k}_{N_{\sigma}(0,K)}}\cdots(A_{\sigma(k_{i+1})})^{\hat{k}_{i}}\cdots (A_{\sigma(k_{1})})^{\hat{k}_{0}}\zeta(0) \in \mathscr{C}_{1}
	\end{aligned}
\end{equation}
where $\hat{k}_{N_{\sigma}(0,K)}=K+1-k_{N_{\sigma}(0,K)},...,\hat{k}_{i}=k_{i+1}-k_{i},...,\hat{k}_{0}=k_{1}-k_{0}$ with $k_{0}=0$.
	
	Denote two sets $\Omega_{1}$ and $\Omega_{2}$ such that $\sigma_{k_{i}}$ is pertained to $\mathcal{S}_{1}$ and $\mathcal{S}_{2}$ for $i\in \{ 1,..., k_{N_{\sigma}(0,K)}\}$. According to $(\ref{eq31})$, $(\ref{eq36})$ and Definition $\ref{de4}$, one has
	\begin{equation*}\label{eq38}
	\begin{aligned}
	\|\zeta(K+1)\|
	\leq&\prod\limits_{i\in \Omega_{1}}\| (A_{\sigma(k_{i+1})})^{\hat{k}_{i}}\|_{\mathscr{C}_{1}} \prod\limits_{i\in \Omega_{2}}\| (A_{\sigma(k_{i+1})})^{\hat{k}_{i}}\|_{\mathscr{C}_{1}}
	\|\zeta(0)\|\\
	\leq&\prod\limits_{i\in \Omega_{1}}\| (A_{\sigma(k_{i+1})})^{\hat{k}_{i}}\|_{\mathscr{C}_{1}} \prod\limits_{i\in \Omega_{2}}\| (A_{\sigma(k_{i+1})})^{\hat{k}_{i}}\|_{\mathscr{H}_{i}^{\ddag}}
	\|\zeta(0)\|\\
	\leq&\prod\limits_{i\in\mathscr{D}}(\rho^{i})^{N_{0i}} \prod\limits_{i\in\mathcal{S}_{1}} (\rho^{i})^{\frac{T_{i}(0,K)}{\mathbb{N}_{i}}}\prod\limits_{i\in\mathcal{S}_{2}} (\rho^{i}(\lambda^{i})^{\mathbb{N}_{i}})^{\frac{T_{i}(0,K)}{\mathbb{N}_{i}}} \|\zeta(0)\|\\
	=&\prod\limits_{i\in\mathscr{D}}(\rho^{i})^{N_{0i}} \prod\limits_{i\in\mathcal{S}_{1}} (\frac{\rho^{i}}{(\omega^{i})^{\mathbb{N}_{i}}})^{\frac{T_{i}(0,K)}{\mathbb{N}_{i}}}\prod\limits_{i\in\mathcal{S}_{2}} (\rho^{i}(\lambda^{i}\gamma^{i})^{\mathbb{N}_{i}})^{\frac{T_{i}(0,K)}{\mathbb{N}_{i}}} \times\\
	&  \times \prod\limits_{i\in\mathcal{S}_{1}} (\omega^{i})^{T_{i}(0,K)}\prod\limits_{i\in\mathcal{S}_{2}} (\frac{1}{\gamma^{i}})^{T_{i}(0,K)} \|\zeta(0)\|.\\
	\end{aligned}
	\end{equation*}
	Specifying $(\ref{eq32})$ and $(\ref{eq33})$ by
	$
	\frac{\rho^{i}}{(\omega^{i})^{\mathbb{N}_{i}}}\leq 1, ~i\in \mathcal{S}_{1}$,
	$\rho^{i}(\lambda^{i}\gamma^{i})^{\mathbb{N}_{i}}\leq 1, ~i\in \mathcal{S}_{2}$.
	It follows from $(\ref{eq34})$ that
	\begin{equation*}
	\begin{aligned}
	\|\zeta(K+1)\|
	\leq&\prod\limits_{i\in\mathscr{D}}(\rho^{i})^{N_{0i}}  \prod\limits_{i\in\mathcal{S}_{1}} (\omega^{i})^{T_{i}(0,K)}\prod\limits_{i\in\mathcal{S}_{2}} (\frac{1}{\gamma^{i}})^{T_{i}(0,K)} \|\zeta(0)\|\\
	\leq&\prod\limits_{i\in\mathscr{D}}(\rho^{i})^{N_{0i}}  \prod\limits_{i\in\mathscr{D}}(\delta^{i})^{T_{i}(0,K)} \|\zeta(0)\|\\
	\leq&\prod\limits_{i\in\mathscr{D}}(\rho^{i})^{N_{0i}} \delta^{(K+1)} \|\zeta(0)\|, ~\delta=\max\limits_{i}\delta^{i}.
	\end{aligned}
	\end{equation*}
	The stability of the considered system is guaranteed.
	
	We now show the scenario of $\zeta(0)\in \mathscr{C}_{2}$. Likewise, one attains
	\begin{equation*}\label{eq40}
	\begin{aligned}
	\|\zeta(K+1)\|
	\leq&\prod\limits_{i\in \Omega_{1}}\| (A_{\sigma(k_{i+1})})^{\hat{k}_{i}}\|_{\mathscr{C}_{2}} \prod\limits_{i\in \Omega_{2}}\| (A_{\sigma(k_{i+1})})^{\hat{k}_{i}}\|_{\mathscr{C}_{2}}
	\|\zeta(0)\|\\
	\leq&\prod\limits_{i\in \Omega_{1}}\| (A_{\sigma(k_{i+1})})^{\hat{k}_{i}}\|_{\mathscr{H}_{i}^{\ddag}} \prod\limits_{i\in \Omega_{2}}\| (A_{\sigma(k_{i+1})})^{\hat{k}_{i}}\|_{\mathscr{C}_{2}}
	\|\zeta(0)\|\\
	\leq&\prod\limits_{i\in\mathscr{D}}(\rho^{i})^{N_{0i}}\prod\limits_{i\in\mathcal{S}_{1}} (\rho^{i}(\lambda^{i})^{\mathbb{N}_{i}})^{\frac{T_{i}(0,K)}{\mathbb{N}_{i}}}  \prod\limits_{i\in\mathcal{S}_{2}} (\rho^{i})^{\frac{T_{i}(0,K)}{\mathbb{N}_{i}}}\|\zeta(0)\|.\\
	\end{aligned}
	\end{equation*}
	Analogously, the conditions in $(\ref{eq32})$ and $(\ref{eq33})$ are modified by $
	\rho^{i}(\lambda^{i}\gamma^{i})^{\mathbb{N}_{i}}\leq 1, ~i\in \mathcal{S}_{1},~\frac{\rho^{i}}{(\omega^{i})^{\mathbb{N}_{i}}}\leq 1, i\in \mathcal{S}_{2}$.
	Similarly, $(\ref{eq35})$ indicates that
	\begin{equation*}\label{eq41}
	\begin{aligned}
	\|\zeta(K+1)\|
	\leq&\prod\limits_{i\in\mathscr{D}}(\rho^{i})^{N_{0i}}\prod\limits_{i\in\mathcal{S}_{1}} (\frac{1}{\gamma^{i}})^{T_{i}(0,K)}   \prod\limits_{i\in\mathcal{S}_{2}} (\omega^{i})^{T_{i}(0,K)}\|\zeta(0)\|\\
	\leq&\prod\limits_{i\in\mathscr{D}}(\rho^{i})^{N_{0i}} \prod\limits_{i\in\mathscr{D}}(\delta^{i})^{T_{i}(0,K)}\|\zeta(0)\|\\
	\leq&\prod\limits_{i\in\mathscr{D}}(\rho^{i})^{N_{0i}} \delta^{(K+1)} \|\zeta(0)\|, ~\delta=\max\limits_{i}\delta^{i}.\\
	\end{aligned}
	\end{equation*}
	
Finally, we argue the special case for $\zeta(0)\in \overline{\mathscr{C}_{1}\bigcup \mathscr{C}_{2}}$. By using the fact that $\mathscr{C}_{1}\bigcap \mathscr{C}_{2}=\emptyset$, it follows
	that there exist two initial conditions, i.e., $\zeta_{1}(0)\in \mathscr{C}_{1}$, $\zeta_{2}(0)\in \mathscr{C}_{2}$, such that
	\begin{equation}\label{eq42}
	\begin{aligned}
	\zeta(0)=\zeta_{1}(0)+\zeta_{2}(0).
	\end{aligned}
	\end{equation}
Incorporating $(\ref{eq37})$ with $(\ref{eq42})$, it hence yields
	\begin{equation*}\label{eq43}
	\begin{aligned}
	\zeta(K+1)
	=&~(A_{\sigma(k_{N_{\sigma}(0,K)})})^{\hat{k}_{N_{\sigma}(0,K)}}\cdots(A_{\sigma(k_{i+1})})^{\hat{k}_{i}}\cdots (A_{\sigma(k_{1})})^{\hat{k}_{0}}\zeta(0)\\
	=&~(A_{\sigma(k_{N_{\sigma}(0,K)})})^{\hat{k}_{N_{\sigma}(0,K)}}\cdots(A_{\sigma(k_{i+1})})^{\hat{k}_{i}}
	\cdots (A_{\sigma(k_{1})})^{\hat{k}_{0}}\zeta_{1}(0)\\
	&+(A_{\sigma(k_{N_{\sigma}(0,K)})})^{\hat{k}_{N_{\sigma}(0,K)}}\cdots(A_{\sigma(k_{i+1})})^{\hat{k}_{i}}\cdots (A_{\sigma(k_{1})})^{\hat{k}_{0}}\zeta_{2}(0)\\
	=&~\zeta_{1}(K+1)+\zeta_{2}(K+1)
	\end{aligned}
	\end{equation*}
	where $\zeta_{1}(K+1)$ and $\zeta_{2}(K+1)$ are originated from the initial values $\zeta_{1}(0)$ and $\zeta_{2}(0)$, respectively. Recalling the foregoing involved arguments, we can draw the same conclusion as the above two cases.
	
Based on the aforementioned discussions and error equation $(\ref{eq16})$, we conclude that the coordination problem in $(\ref{eq1})$ with $(\ref{eq1u})$ could be solved with the designed switching rule, provided that the requirements in Lemma $\ref{le9}$ hold.
\end{proof}

A significant property revealed in Theorem $\ref{th2}$ lies in the simple fact that the coordination behavior of the agents can be guaranteed if \emph{TDADT} with respect to each communication topology is preserved. When compared with the developed method in \cite{han2013cluster}, we can get rid of the limitation on the computation of Hajnal diameter, which may generally be a huge work or even intractable for the large-scale networks. The last but not least, the condition $(\ref{eq36})$ is vital for solving the coordination problem.

\begin{rmk}
Another novel feature of Theorem $\ref{th2}$ in contrast to the existing works \cite{jadbabaie2003coordination,ren2005consensus}, is that it provides a new perspective in dealing with the coordination problem subject to both changing communication environment and antagonistic information. A potential way guaranteeing the feasibility of the switching signal design in $(\ref{eq32})$-$(\ref{eq35})$ could be formulated as follows: those communication topologies $\mathcal{G}_{i}$, which contain a directed spanning tree, for some $i\in\{1,2,...,\mathbbm{n}\}$ should be implemented with more execution time; while those communication topologies that have several clusters, should be carried out as little time as possible.\QEDA
\end{rmk}

\begin{figure}
	\centering
	\includegraphics[width=3.45in,height=2.35in]{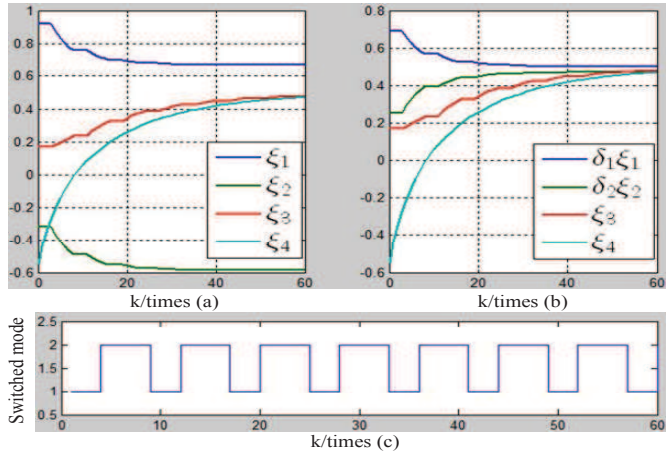}\\
	\caption{(a) The state trajectories of the agents; (b) The scaling states of the agents with the changing topologies; (c) The switched mode of the system.}
	\label{fig21}
\end{figure}

For the sake of improving the state of the art, let us revisit Example $\ref{exa3}$ to support the validity of Theorem $\ref{th2}$.

\begin{exa}(Continuation of Example $\ref{exa3}$)\label{exa4}
It is easy to obtain that $\lambda^{1}=0.85$ and $\lambda^{2}=0.9427$. Choosing $\gamma^{1}=1.01$, $\gamma^{2}=1.03$, $\mathbb{N}_{1}=3$ and $\mathbb{N}_{2}=5$. We then check condition $(\ref{eq33})$ by $\rho^{1}(\lambda^{1}\gamma^{1})^{\mathbb{N}_{1}}=0.6327$ and $\rho^{2}(\lambda^{2}\gamma^{2})^{\mathbb{N}_{2}}=0.8998$, which implies that $(\ref{eq33})$ are desirable. Suppose that $(\ref{eq32})$ are strict equalities. One can obtain that $\omega^{1}=1$ and $\omega^{2}=1$. It is vulnerable to know that the conditions $(\ref{eq34})$ and $(\ref{eq35})$ are satisfied. According to Theorem $\ref{th2}$ and Definition $\ref{de1}$, the coordination problem with the dynamical changing topologies depicted in Fig. $\ref{fig2}$ is solved. The trajectories, scaling states and the switching modes are presented in Fig. $\ref{fig21}$ where the initial values of the autonomous agents are randomly selected from the interval $[-1,1]$. \QEDA
\end{exa}

In the sequel, we put our attention into a simple case where the changing communication topologies contain directed spanning tree.
\begin{cor}\label{cor3}
For the communication graphs with directed spanning tree, coordination in system $(\ref{eq1})$ with $(\ref{eq1u})$ can be guaranteed for any switching signal $\sigma(k)$ satisfying
	$\rho^{i}(\lambda^{i}\gamma^{i})^{\mathbb{N}_{i}}\leq 1$ with $\gamma^{i}>1$, for $i\in \mathscr{D}$.
\end{cor}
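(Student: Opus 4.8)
The plan is to obtain Corollary $\ref{cor3}$ as the degenerate specialization of Theorem $\ref{th2}$ in which no eigenvalue-one behaviour survives, so that the subspace-balancing constraints $(\ref{eq32})$, $(\ref{eq34})$ and $(\ref{eq35})$ become vacuous and only the decay constraint $(\ref{eq33})$ remains. First I would establish the pivotal structural fact: whenever every switching graph $\mathcal{G}_{i}$ possesses a directed spanning tree, each companion matrix $A_{i}$ is Schur, i.e., all of its eigenvalues lie strictly inside the unit disk. Indeed, under the standing sign convention of Theorem $\ref{th5}$ (recorded in the footnote to Lemma $\ref{le4}$), Lemma $\ref{le5}$ guarantees that $\mathscr{L}_{i}=I-\epsilon\mathcal {M}_{i}$ carries a \emph{simple} eigenvalue $1$ while its remaining $N-1$ eigenvalues are confined to the unit disk. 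Feeding this into Lemma $\ref{le4}$: since the multiplicity of the eigenvalue $1$ of $\mathscr{L}_{i}$ equals $1$, the first assertion (with $n+1=1$, hence $n=0$) forces $A_{i}$ to possess no eigenvalue equal to $1$; and since $\mathscr{L}_{i}$ has exactly $N-1$ eigenvalues inside the unit disk, the second assertion places all $N-1$ eigenvalues of the $(N-1)\times(N-1)$ matrix $A_{i}$ inside the unit disk. Thus $A_{i}$ is Schur for every $i\in\mathscr{D}$.

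With each $A_{i}$ Schur, the invariant-subspace picture collapses: the eigenspace $\mathscr{H}_{i}^{\dag}$ attached to the eigenvalue $1$ reduces to $\{0\}$, whereas $\mathscr{H}_{i}^{\ddag}$ fills the entire state space. Consequently the decomposition of Lemma $\ref{le9}$ is realized trivially by $\mathcal{S}_{1}=\emptyset$ and $\mathcal{S}_{2}=\mathscr{D}$, so that $\mathscr{C}_{1}=\{0\}$, $\mathscr{C}_{2}$ is the whole space, and the inclusion $(\ref{eq36})$ holds vacuously. The norm estimate $(\ref{eq31})$ furnished by Lemma $\ref{le7}$ then holds on the entire space, namely $\|A_{i}\|=\|A_{i}\|_{\mathscr{H}_{i}^{\ddag}}\leq\rho^{i}\lambda^{i}$. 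I would now repeat the telescoping argument used for the case $\zeta(0)\in\mathscr{C}_{2}$ in the proof of Theorem $\ref{th2}$, but with every factor now of the decaying type; invoking the \emph{TDADT} property of Definition $\ref{de4}$ to count and bound the activation intervals of each topology yields
\begin{equation*}
\|\zeta(K+1)\|\leq\prod\limits_{i\in\mathscr{D}}(\rho^{i})^{N_{0i}}\prod\limits_{i\in\mathscr{D}}\bigl(\rho^{i}(\lambda^{i})^{\mathbb{N}_{i}}\bigr)^{\frac{T_{i}(0,K)}{\mathbb{N}_{i}}}\|\zeta(0)\|.
\end{equation*}

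It remains to convert the hypothesis $\rho^{i}(\lambda^{i}\gamma^{i})^{\mathbb{N}_{i}}\leq1$ with $\gamma^{i}>1$ into geometric decay. Dividing by $(\gamma^{i})^{\mathbb{N}_{i}}$ gives $\rho^{i}(\lambda^{i})^{\mathbb{N}_{i}}\leq(\gamma^{i})^{-\mathbb{N}_{i}}$, whence each bracket obeys $\bigl(\rho^{i}(\lambda^{i})^{\mathbb{N}_{i}}\bigr)^{T_{i}(0,K)/\mathbb{N}_{i}}\leq(\gamma^{i})^{-T_{i}(0,K)}$. Substituting, and using $\sum_{i\in\mathscr{D}}T_{i}(0,K)=K+1$ together with $\gamma^{i}>1$, I obtain
\begin{equation*}
\|\zeta(K+1)\|\leq\Bigl(\prod\limits_{i\in\mathscr{D}}(\rho^{i})^{N_{0i}}\Bigr)\prod\limits_{i\in\mathscr{D}}(\gamma^{i})^{-T_{i}(0,K)}\|\zeta(0)\|\leq\Bigl(\prod\limits_{i\in\mathscr{D}}(\rho^{i})^{N_{0i}}\Bigr)(\gamma_{\min})^{-(K+1)}\|\zeta(0)\|,
\end{equation*}
with $\gamma_{\min}=\min_{i}\gamma^{i}>1$. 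Since the chatter-bound prefactor is a fixed constant while $(\gamma_{\min})^{-(K+1)}\to0$, the error $\zeta(K+1)$ vanishes as $K\to\infty$; through the transformation $(\ref{eq8})$--$(\ref{eq9})$ this is precisely the coordination of $(\ref{eq1})$ with $(\ref{eq1u})$ in the sense of Definition $\ref{de1}$.

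The main obstacle is not the closing geometric estimate, which is routine, but the structural reduction in the first step: one must verify that the spanning-tree hypothesis yields a \emph{simple} unit eigenvalue of $\mathscr{L}_{i}$ rather than merely a unit eigenvalue of higher multiplicity, since only simplicity lets Lemma $\ref{le4}$ drive all eigenvalues of $A_{i}$ strictly inside the unit disk. This rests on the sign convention of Theorem $\ref{th5}$, without which $\mathbb{L}_{i}$ could retain eigenvalues on the imaginary axis and destroy the Schur property; I would therefore invoke that convention explicitly at the outset, exactly as the footnote to Lemma $\ref{le4}$ permits.
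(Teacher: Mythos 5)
Your proposal is correct and follows exactly the route the paper intends: the paper states Corollary~\ref{cor3} without any proof, as a direct specialization of Theorem~\ref{th2}, and your argument supplies precisely that specialization --- under the spanning-tree hypothesis and the standing sign convention, each $A_{i}$ is Schur by Lemmas~\ref{le4} and~\ref{le5}, so $\mathscr{H}_{i}^{\dag}=\{0\}$, the invariant-subspace decomposition collapses, and only the decay condition $\rho^{i}(\lambda^{i}\gamma^{i})^{\mathbb{N}_{i}}\leq 1$ is needed in the telescoping \emph{TDADT} estimate. Your decision to rerun the telescoping argument rather than cite Theorem~\ref{th2} as a black box is also the right one, since a literal instantiation of its hypotheses with $\mathcal{S}_{1}=\emptyset$ would collide with conditions $(\ref{eq32})$ and $(\ref{eq35})$ (which force $\omega^{i}\geq 1$ and $\omega^{i}<1$ simultaneously), so the corollary genuinely requires the proof-level simplification you carried out.
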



Likewise, the scenario where the interactions among rooted agents are absent is concretely taken into account in a switching communication setting.


Evidently, matrix $\mathcal {M}(k)$ attains form by
\begin{equation*}
\mathcal {M}_{\sigma(k)}=\begin{bmatrix}
\mathcal {O}_{M^{2}}&\mathcal {O}_{M(N-M)}\\
\widehat{\mathcal{L}}_{\sigma(k)}&\mathcal{L}_{3,\sigma(k)}
\end{bmatrix}
\end{equation*}
with $\widehat{\mathcal{L}}_{\sigma(k)}= (\mathcal{L}_{2}\mathcal {D})_{\sigma(k)}$.
Consequently, one has the following condiment.
\begin{pro}\label{pro2}
	Suppose $\mathcal{L}_{3,\sigma(k)}\mathcal{L}^{*}_{\sigma(k)}=\Theta$ and $\mathcal{L}^{*}_{\sigma(k)}\mathcal{L}_{3,\sigma(k)}=\Theta$. It follows that
	\begin{equation}\label{eq46}
	\begin{aligned}
	\Theta\widehat{\mathcal{L}}_{\sigma(k)}=\widehat{\mathcal{L}}_{\sigma(k)}
	\end{aligned}
	\end{equation}
	where $\mathcal{L}^{*}_{\sigma(k)}$ is the generalized inverse of the matrix $\mathcal{L}_{3,\sigma(k)}$ at the $k$th instant with
	$0\leq\hbar\leq N-M$ and $\Theta ={\rm diag}(0,...,0,I_{\hbar})$.
\end{pro}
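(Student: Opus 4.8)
The plan is to exploit the defining identity of the generalized inverse to transfer the coordinate structure of $\Theta$ onto $\mathcal{L}_{3,\sigma(k)}$, and then to propagate that structure to $\widehat{\mathcal{L}}_{\sigma(k)}$ through the nonnegativity and zero-row-sum of the underlying Laplacian. Fixing the instant $k$ and writing $\mathcal{L}_{3}$, $\mathcal{L}^{*}$, $\widehat{\mathcal{L}}$ for brevity, I would first recall that any generalized inverse satisfies the first Penrose identity $\mathcal{L}_{3}\mathcal{L}^{*}\mathcal{L}_{3}=\mathcal{L}_{3}$. Substituting the hypothesis $\mathcal{L}_{3}\mathcal{L}^{*}=\Theta$ gives $\Theta\mathcal{L}_{3}=\mathcal{L}_{3}$. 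Since $\Theta=\mathrm{diag}(0,\dots,0,I_{\hbar})$ annihilates the first $N-M-\hbar$ coordinates, the identity $\Theta\mathcal{L}_{3}=\mathcal{L}_{3}$ forces the first $N-M-\hbar$ rows of $\mathcal{L}_{3}$ to vanish.

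The crux is then to show that the corresponding rows of $\widehat{\mathcal{L}}=\mathcal{L}_{2}\mathcal {D}$ also vanish, and here I would invoke the graph structure rather than further matrix algebra. For a non-rooted index $i$ with $1\leq i\leq N-M-\hbar$, a zero $i$th row of $\mathcal{L}_{3}$ contains in particular the diagonal entry $l_{i^{2}}$ of the full Laplacian, so $l_{i^{2}}=\sum_{j\neq i}a_{ij}=0$. Because every weight obeys $a_{ij}\geq 0$ in the present algebraic-graph formulation, such a sum can vanish only if $a_{ij}=0$ for all $j$, including the rooted indices $j\in\hat{\mathbbm{I}}_{N}$. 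Consequently the entire $i$th row of the bottom block $[\,\mathcal{L}_{2}\mid\mathcal{L}_{3}\,]$ is zero; in particular the $i$th row of $\mathcal{L}_{2}$ is zero, and hence so is the $i$th row of $\widehat{\mathcal{L}}=\mathcal{L}_{2}\mathcal {D}$, since right multiplication by the nonsingular diagonal $\mathcal {D}$ preserves zero rows.

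Finally, having established that the first $N-M-\hbar$ rows of $\widehat{\mathcal{L}}$ vanish while $\Theta$ leaves exactly the last $\hbar$ rows unchanged, left multiplication by $\Theta$ acts as the identity on $\widehat{\mathcal{L}}$, i.e. $\Theta\widehat{\mathcal{L}}_{\sigma(k)}=\widehat{\mathcal{L}}_{\sigma(k)}$, which is precisely $(\ref{eq46})$. The second hypothesis $\mathcal{L}^{*}\mathcal{L}_{3}=\Theta$ is not strictly required for this row argument; combined with the Penrose identity it guarantees $\mathcal{L}_{3}\Theta=\mathcal{L}_{3}$ and the idempotency $\Theta^{2}=\Theta$, so that $\Theta$ is the genuine projection induced by $\mathcal{L}_{3}$ and $\hbar$ equals $\mathrm{rank}(\mathcal{L}_{3})$, which clarifies the role of the range constraint $0\leq\hbar\leq N-M$. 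I expect the only genuinely nontrivial point to be the graph-theoretic translation in the second step, namely linking a zero row of the sub-block $\mathcal{L}_{3}$ to a zero row of the full bottom block via nonnegativity; the manipulations with the generalized inverse are routine.
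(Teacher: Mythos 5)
Your proof is correct, but it takes a genuinely different route from the paper's. You treat the hypotheses as given and argue purely deductively: the defining identity $\mathcal{L}_{3}\mathcal{L}^{*}\mathcal{L}_{3}=\mathcal{L}_{3}$ of the generalized inverse turns the hypothesis $\mathcal{L}_{3}\mathcal{L}^{*}=\Theta$ into $\Theta\mathcal{L}_{3}=\mathcal{L}_{3}$, forcing the rows of $\mathcal{L}_{3}$ indexed by the zero block of $\Theta$ to vanish; then the nonnegativity $a_{ij}\geq 0$ together with the fact that the diagonal of $\mathcal{L}_{3}$ carries the full Laplacian degree $l_{i^{2}}=\sum_{j\neq i}a_{ij}$ (including edges from rooted agents, since the upper-right block of $\mathcal{L}$ is zero) transfers these zero rows to $\mathcal{L}_{2}$, hence to $\widehat{\mathcal{L}}=\mathcal{L}_{2}\mathcal{D}$, and $\Theta\widehat{\mathcal{L}}=\widehat{\mathcal{L}}$ follows at once. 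The paper instead runs a four-case analysis over graph configurations --- a directed spanning forest present (where $\mathcal{L}^{*}=\mathcal{L}_{3}^{-1}$ and the claim is trivial), an isolated node, a subgraph unreachable from every root, and a pair of mutually parenting nodes handled via a similarity transformation $\mathbb{P}\mathcal{L}_{3,\sigma(k)}\mathbb{P}^{-1}$ --- and only at the end records the same key observation you prove directly, namely that a zero diagonal entry of $\Theta$ forces the corresponding row of $\widehat{\mathcal{L}}_{\sigma(k)}$ to vanish. Your route is tighter and more general: it avoids the ``without loss of generality'' reductions to particular node configurations, works for any placement of the zero entries of $\Theta$, and makes explicit that only the first hypothesis is actually needed, the second serving merely to identify $\hbar$ with $\mathrm{rank}(\mathcal{L}_{3,\sigma(k)})$. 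What the paper's case analysis buys in exchange is an illustration of which communication-graph structures actually realize a $\Theta$ of the hypothesized form, i.e., that the assumption is not vacuous --- a point your argument, taking the hypotheses as given, neither addresses nor needs.
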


The proof of Proposition $\ref{pro2}$ is drawn in {\scshape Appendix} ${\rm \ref{app3}}$.

It is noted that only the fixed topology is considered in \cite{liu2012necessary}.
Besides, the containment control problem under the switching setting is explicitly argued in \cite{notarstefano2011containment}.
Each changing graph is required to have a directed spanning forest over the time evolution in the aforementioned literature. According to Proposition $\ref{pro2}$, this hypothesis is able to be further relaxed.

However, the united directed spanning tree
in $\mathcal {G}_{i}$ may be loss of preservation over some updating intervals a.


With the help of the aforementioned preparations, we introduce a new variable
\begin{equation*}\label{eq52}
\begin{aligned}
x(k)=\xi_{N-M}(k)+\mathcal{L}^{*}_{\sigma(k)}\widehat{\mathcal{L}}_{\sigma(k)}\xi_{M}(k)
\end{aligned}
\end{equation*}
where $\xi_{M}(k)\triangleq[\xi_{1}(k),...,\xi_{M}(k)]^{\prime}$ and $\xi_{N-M}(k)\triangleq[\xi_{M+1}(k),...,\xi_{N}(k)]^{\prime}$ for simplicity.
As a consequence, dynamic equation for variable $x(k)$ is coherently written by
\begin{equation}\label{eq53}
\begin{aligned}
x(k+1)=(I-\epsilon\mathcal{L}_{3,\sigma(k)})x(k), ~k\in \mathbb{Z}.
\end{aligned}
\end{equation}

Similar to Lemma $\ref{le9}$, we need a precondition for directed spanning forest before presenting the main result.
\begin{lem}\label{le10}
The union graph of a collection of the changing topologies has a directed spanning forest if there exist two sets $\mathcal{S}_{1}$ and $\mathcal{S}_{2}$, both of which are contained in $\mathscr{D}$ and $\mathcal{S}_{1}\bigcup\mathcal{S}_{2}=\mathscr{D}$ is desirable, such that $\mathscr{C}_{1}=\sum\limits_{i\in\mathcal{S}_{1}}\mathscr{H}_{i}^{\dag}$ and $\mathscr{C}_{2}=\bigcap\limits _{ i\in \mathcal{S}_{1}} \mathscr{H}_{i}^{\ddag}$ are $(I-\epsilon\mathcal{L}_{3,i})$-invariant sets, and $\mathscr{C}_{1 }\subseteq\bigcap\limits_{i\in\mathcal{S}_{2}}\mathscr{H}_{i}^{\ddag}$.
\end{lem}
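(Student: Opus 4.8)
The plan is to transplant the architecture of the proof of Lemma~\ref{le9} onto the reduced error dynamics $(\ref{eq53})$, replacing the matrix $A_i$ by $I-\epsilon\mathcal{L}_{3,i}$ throughout. First I would reinterpret the two invariant subspaces in the present $(N-M)$-dimensional setting: $\mathscr{H}_i^{\dag}$ becomes the eigenspace of $I-\epsilon\mathcal{L}_{3,i}$ for the eigenvalue $1$, which coincides with $\ker\mathcal{L}_{3,i}$, while $\mathscr{H}_i^{\ddag}$ is the span of the eigenvectors whose eigenvalues lie strictly inside the unit disk. Since every nonzero eigenvalue of the grounded block $\mathcal{L}_{3,i}$ has positive real part and $\epsilon$ is taken small, each such eigenvalue is carried by $I-\epsilon\mathcal{L}_{3,i}$ into the open unit disk; hence $\mathbb{C}^{N-M}=\mathscr{H}_i^{\dag}\oplus\mathscr{H}_i^{\ddag}$, and by Lemma~\ref{le6} both summands are $(I-\epsilon\mathcal{L}_{3,i})$-invariant. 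This renders the hypotheses meaningful and mirrors the splitting employed for $A_i$ in Lemma~\ref{le9}.

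Next I would extract a covering statement by contradiction. Suppose some non-rooted node were never reached from any root, so that its associated coordinate direction $v$ belonged to $\ker\mathcal{L}_{3,i}=\mathscr{H}_i^{\dag}$ for every $i\in\mathscr{D}$. Then $v\in\mathscr{C}_1=\sum_{i\in\mathcal{S}_1}\mathscr{H}_i^{\dag}$, so the hypothesis $\mathscr{C}_1\subseteq\bigcap_{i\in\mathcal{S}_2}\mathscr{H}_i^{\ddag}$ would place $v$ in $\mathscr{H}_i^{\ddag}$ for every $i\in\mathcal{S}_2$. But $v$ is simultaneously fixed by those same topologies, whence $v\in\mathscr{H}_i^{\dag}\cap\mathscr{H}_i^{\ddag}=\{0\}$, forcing $v=0$. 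Since $\mathcal{S}_1\cup\mathcal{S}_2=\mathscr{D}$, this contradiction shows that every non-rooted node is reached by at least one topology; in the language of Lemma~\ref{le12} this is exactly $\bigcup_{i\in\mathscr{D}}\Phi(i)=\{1,\ldots,N-M\}$.

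Finally I would invoke the forest version of Lemma~\ref{le12}. As remarked after that lemma, the covering condition on the index sets $\Phi(i)$ certifies joint connectivity for directed graphs; in the configuration $(\ref{eq53})$, where the $M$ mutually uninfluenced agents play the role of roots and $\mathcal{L}_3$ governs the non-rooted block via Proposition~\ref{pro2}, joint connectivity of the non-rooted nodes relative to the roots is precisely the assertion that the union graph carries a directed spanning forest. This would close the argument.

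The main obstacle will be the multiplicity of the eigenvalue $1$. Unlike the spanning-tree case, where $\mathbb{L}$ carries a simple zero eigenvalue, $\ker\mathcal{L}_{3,i}$ may have dimension larger than one, so $\dim\mathscr{H}_i^{\dag}>1$ and the prospective forest contains several trees with distinct roots. I would therefore have to verify that the projection structure of Proposition~\ref{pro2}, encoded by $\Theta=\mathrm{diag}(0,\ldots,0,I_{\hbar})$, is compatible with the summand $\mathscr{C}_1$ and its complement $\mathscr{C}_2$, and that the covering argument tracks each root component separately rather than a single consensus direction. Upgrading the conclusion of Lemma~\ref{le12} from a single spanning tree to a genuine spanning \emph{forest} under this multiplicity is the step demanding the most care.
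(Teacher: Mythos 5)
Your overall strategy---transplant the proof of Lemma \ref{le9} onto the matrices $I-\epsilon\mathcal{L}_{3,i}$---is indeed what the paper intends: its proof of Lemma \ref{le10} is literally ``similar to that in Lemma \ref{le9}, and is hence omitted,'' and the proof of Lemma \ref{le9} is a three-case eigenvector-counting argument ($\mathscr{C}_{1}=\emptyset$, $\mathscr{C}_{2}=\emptyset$, or both nonempty), each case feeding the eigenvalue-covering condition of Lemma \ref{le12}. However, the argument you actually give is a different one---a contradiction built on a common kernel vector---and it contains a genuine gap.

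The gap is the step ``suppose some non-rooted node were never reached from any root, so that its associated coordinate direction $v$ belonged to $\ker\mathcal{L}_{3,i}$ for every $i\in\mathscr{D}$.'' Both halves of this are false. First, $e_{j}\in\ker\mathcal{L}_{3,i}$ requires the $j$th \emph{column} of $\mathcal{L}_{3,i}$ to vanish, i.e.\ node $j$ neither receives information nor is listened to by any other non-rooted node; being unreached from the roots implies neither. Second, and fatally, even the corrected claim---that the kernels $\ker\mathcal{L}_{3,i}$ share a common nonzero vector whenever the union graph lacks a spanning forest---fails. Take one root $r$ and two non-rooted nodes $1,2$; in $\mathcal{G}_{1}$ node $2$ receives from node $1$ and from $r$ (unit weights) while node $1$ receives nothing, and in $\mathcal{G}_{2}$ node $2$ receives only from $r$ while node $1$ again receives nothing. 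Then
\begin{equation*}
\mathcal{L}_{3,1}=\begin{bmatrix}0&0\\-1&2\end{bmatrix},\qquad
\mathcal{L}_{3,2}=\begin{bmatrix}0&0\\0&1\end{bmatrix},
\end{equation*}
so $\ker\mathcal{L}_{3,1}=\mathrm{span}\{(2,1)^{\prime}\}$, $\ker\mathcal{L}_{3,2}=\mathrm{span}\{(1,0)^{\prime}\}$, their intersection is $\{0\}$, and $e_{1}\notin\ker\mathcal{L}_{3,1}$---yet node $1$ is unreached in the union graph, so no directed spanning forest exists. Hence ``no spanning forest'' does not manufacture the vector $v$ your contradiction needs, and the punchline $v\in\mathscr{H}_{i}^{\dag}\cap\mathscr{H}_{i}^{\ddag}=\{0\}$ never gets off the ground. (In this example the hypotheses of Lemma \ref{le10} also fail, as they must; but your proof is supposed to derive a contradiction from the hypotheses together with the negated conclusion, and the object it relies on simply does not exist.) A repair would have to work with the true kernel vectors---$\mathbf{1}$ on a closed set of unreached nodes completed by a graph-dependent tail on the remaining coordinates---which vary with $i$; this is exactly the difficulty that the paper's counting argument (Cases 1--3 of Lemma \ref{le9}'s proof, then Lemma \ref{le12}) sidesteps. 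Finally, your own closing paragraph concedes that the tree-to-forest upgrade of Lemma \ref{le12} under kernel multiplicity is left unresolved, so that portion of the argument is also incomplete as written.
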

\begin{proof}
	The proof of Lemma $\ref{le10}$ is similar to that in Lemma $\ref{le9}$, and is hence omitted.
\end{proof}

At this stage, a result in the absence of the reciprocity among the leaders is reported.
\begin{cor}\label{th3}
	Given scalars $\omega_{i}>0$ and $\gamma_{i}>1$. Assume that the conditions in Lemma $\ref{le10}$ are fulfilled.
	Then coordination for $(\ref{eq1})$ with changing topologies is achievable, if the switching signal $\sigma(k)$ is chosen such that
\begin{subequations}  
\begin{align}  
&	\frac{\rho^{i}}{(\omega_{i})^{\mathbb{N}_{i}}}\leq 1,~i\in \mathscr{D}\\
&\rho^{i}(\lambda_{i}\gamma_{i})^{\mathbb{N}_{i}}\leq 1, ~i\in \mathscr{D}\\      
&\prod\limits_{i\in\mathcal{S}_{1}}(\omega_{i})^{T_{i}(0,K)} \prod\limits_{i\in\mathcal{S}_{2}}(\frac{1}{\gamma_{i}})^{T_{i}(0,K)}
\leq\prod\limits_{i\in\mathscr{D}}(\widehat{\delta}_{i})^{T_{i}(0,K)}\\
&\prod\limits_{i\in\mathcal{S}_{1}}(\frac{1}{\gamma_{i}})^{T_{i}(0,K)}
\prod\limits_{i\in\mathcal{S}_{2}}(\omega_{i})^{T_{i}(0,K)} \leq\prod\limits_{i\in\mathscr{D}}(\widehat{\delta}_{i})^{T_{i}(0,K)}
\end{align}
\end{subequations}
where $\lambda_{i}$ corresponds to the matrix $(I-\epsilon\mathcal{L}_{3,i})$ in $(\ref{eq53})$, and is defined in a similar manner to $(\ref{eq31})$. $\rho^{i}$ has the same meaning as that in Theorem $\ref{th2}$, and $\widehat{\delta}_{i} \in (0,1)$.
\end{cor}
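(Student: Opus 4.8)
The plan is to transcribe the proof of Theorem~\ref{th2} almost verbatim, replacing the coordination error $\zeta(k)$ by the auxiliary variable $x(k)$ and the matrices $A_{i}$ by $B_{i}\triangleq I-\epsilon\mathcal{L}_{3,i}$. Before that transcription can begin I would legitimise the reduced recursion $(\ref{eq53})$. Under a fixed mode $i$ the leaders are stationary (the upper block of $\mathcal{M}_{\sigma(k)}$ vanishes), so $\xi_{M}(k)$ is frozen and $\xi_{N-M}(k+1)=\xi_{N-M}(k)-\epsilon(\widehat{\mathcal{L}}_{i}\xi_{M}(k)+\mathcal{L}_{3,i}\xi_{N-M}(k))$. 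Substituting this into $x(k)=\xi_{N-M}(k)+\mathcal{L}^{*}_{i}\widehat{\mathcal{L}}_{i}\xi_{M}(k)$ and invoking the identity $\Theta\widehat{\mathcal{L}}_{i}=\widehat{\mathcal{L}}_{i}$ of Proposition~\ref{pro2} (equivalently $(\ref{eq46})$) cancels the residual cross terms and returns exactly $x(k+1)=B_{i}x(k)$, so that $x(k)$ evolves autonomously on each dwell interval.

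I would then record the spectral dichotomy of $B_{i}$. A directed spanning forest forces the eigenvalues of $\mathcal{L}_{3,i}$ to have nonnegative real part, the zero ones being attached to forest roots that are followers; hence for $\epsilon$ small enough $B_{i}$ carries the eigenvalue $1$ on the invariant subspace $\mathscr{H}_{i}^{\dag}$ and eigenvalues strictly inside the unit disc on $\mathscr{H}_{i}^{\ddag}$, precisely the splitting exploited in Theorem~\ref{th2}. Lemma~\ref{le7} then supplies the operator bounds $\|B_{i}\|_{\mathscr{H}_{i}^{\dag}}\leq\rho^{i}$ and $\|B_{i}\|_{\mathscr{H}_{i}^{\ddag}}\leq\rho^{i}\lambda_{i}$ in the form $(\ref{eq31})$, while Lemma~\ref{le10} delivers the invariant sets $\mathscr{C}_{1}=\sum_{i\in\mathcal{S}_{1}}\mathscr{H}_{i}^{\dag}$ and $\mathscr{C}_{2}=\bigcap_{i\in\mathcal{S}_{1}}\mathscr{H}_{i}^{\ddag}$ with $\mathscr{C}_{1}\cap\mathscr{C}_{2}=\emptyset$, together with the decisive inclusion $\mathscr{C}_{1}\subseteq\bigcap_{i\in\mathcal{S}_{2}}\mathscr{H}_{i}^{\ddag}$.

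The core estimate mirrors Theorem~\ref{th2} line for line. Splitting $[0,K]$ at the switching instants, I would write $x(K+1)$ as an ordered product of the mode powers $B_{\sigma(k_{i+1})}^{\hat{k}_{i}}$ applied to $x(0)$ and argue three cases. For $x(0)\in\mathscr{C}_{1}$ the factors with $i\in\mathcal{S}_{1}$ are measured on $\mathscr{C}_{1}$ and bounded by $\rho^{i}$, whereas those with $i\in\mathcal{S}_{2}$ are measured on $\mathscr{H}_{i}^{\ddag}\supseteq\mathscr{C}_{1}$ and bounded by $\rho^{i}(\lambda_{i})^{\mathbb{N}_{i}}$ over each dwell; the two pointwise hypotheses $\rho^{i}/(\omega_{i})^{\mathbb{N}_{i}}\leq1$ and $\rho^{i}(\lambda_{i}\gamma_{i})^{\mathbb{N}_{i}}\leq1$ normalise each factor, Definition~\ref{de4} converts switching counts into $T_{i}(0,K)/\mathbb{N}_{i}$, and the first product inequality collapses everything to $\prod_{i\in\mathscr{D}}(\rho^{i})^{N_{0i}}\widehat{\delta}^{\,K+1}\|x(0)\|$ with $\widehat{\delta}=\max_{i}\widehat{\delta}_{i}<1$. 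The case $x(0)\in\mathscr{C}_{2}$ is symmetric, exchanging the roles of $\mathcal{S}_{1}$ and $\mathcal{S}_{2}$ so that the second product inequality acts; the general case reduces to these via the splitting $x(0)=x_{1}(0)+x_{2}(0)$ with $x_{j}(0)\in\mathscr{C}_{j}$, by linearity of the flow. This yields $x(k)\to0$.

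The step I expect to be most delicate is the very first one, together with its mirror at the end. Because the change of variable carries the mode-dependent factor $\mathcal{L}^{*}_{\sigma(k)}\widehat{\mathcal{L}}_{\sigma(k)}$, the two neighbouring definitions of $x$ need not agree at a switching instant, and one must verify through $(\ref{eq46})$ that this jump is absorbed so that the single recursion $(\ref{eq53})$ persists across the whole horizon rather than merely within each dwell period. The dual difficulty is reading $x(k)\to0$ back as coordination in the sense of Definition~\ref{de1}: since the leaders are frozen this forces $\xi_{N-M}(k)\to-\mathcal{L}^{*}_{\sigma(k)}\widehat{\mathcal{L}}_{\sigma(k)}\xi_{M}(0)$, and one must check that the directed-spanning-forest hypothesis renders this limiting containment configuration independent of the active mode, so that the followers genuinely settle to a common pattern determined by the roots of the forest.
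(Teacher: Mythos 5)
Your proposal is correct and takes essentially the same approach as the paper: the paper's own proof of Corollary~\ref{th3} is omitted outright as ``analogous to that of Theorem~\ref{th2}'', and what you have written is precisely that analogy carried out --- the same invariant-subspace splitting $\mathscr{C}_{1}$, $\mathscr{C}_{2}$ from Lemma~\ref{le10}, the same operator-norm bounds, the same three-case estimate over the switching instants with TDADT counting, with $x(k)$ and $I-\epsilon\mathcal{L}_{3,i}$ replacing $\zeta(k)$ and $A_{i}$, and with Proposition~\ref{pro2} (identity $(\ref{eq46})$) justifying the reduced recursion $(\ref{eq53})$. The two delicate points you flag (the mode-dependence of the change of variable at switching instants, and translating $x(k)\to 0$ back into coordination in the sense of Definition~\ref{de1}) are genuine issues that the paper itself glosses over, so your treatment is, if anything, more careful than the original.
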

\begin{proof}
	The proof is analogous to that of Theorem $\ref{th2}$, and is hence omitted.
\end{proof}


\subsection{Further Discussion}
The principle purpose in subsequent part is to discuss the coordination problem with fixed communication topology. Moreover, some connections with the existing literature shall also be elaborated.

As for the case where interacting topology among agents is fixed, one obtains a simple result.
\begin{cor}\label{th1}
Suppose that communication graph attains a directed spanning tree. Then coordination problem formulated in $(\ref{eq1})$ can be exponentially addressed if and only if $(\ref{eq16})$ is exponentially stable. Moreover, the trajectories of the non-rooted agents will be aggregated to a common quantity which is relevant to the weighted gains, scaling parameters, topology and the initial information of the leaders. In addition, the final states of arbitrarily pairwise rooted agents are proportional to their corresponding scaling parameters, i.e., $\lim_{k\to \infty}\delta_{i}\xi_{i}(k)=\lim_{k\to \infty}\delta_{j}\xi_{j}(k)$ for all $i,j \in \hat{\mathbbm{I}}_{N}$.
\end{cor}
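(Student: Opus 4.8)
The plan is to reduce every assertion to the spectral picture already established for a constant topology and to two explicit null vectors of $\mathcal{M}=\mathcal{M}(k)$. Setting $A(k)\equiv A$ and $\mathscr{L}(k)\equiv\mathscr{L}$, I would first make the equivalence an \emph{honest} one by decoupling it from the sign convention of Theorem~\ref{th5}. The map $P$ in $(\ref{eq9})$ has full row rank $N-1$, and a direct reading of its rows shows $\ker P=\mathrm{span}\{v\}$ with $v=[\,(\mathcal{D}^{-1}\mathbf{1})^{\prime},\,\mathbf{1}^{\prime}\,]^{\prime}$, where $\mathbf{1}$ is the all-ones vector. The three items of Definition~\ref{de1} are precisely the telescoped consecutive differences encoded in the rows of $P$: item~(ii) telescopes the rooted differences $\zeta_i=\delta_i\xi_i-\delta_{i+1}\xi_{i+1}$, item~(iii) the non-rooted differences $\zeta_i=\xi_i-\xi_{i+1}$, and item~(i) follows by combining these with the bridge row $\zeta_M=\delta_M\xi_M-\xi_{M+1}$. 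Hence $\zeta(k)\to 0$ (exponentially) is equivalent to $\xi(k)$ approaching $\mathrm{span}\{v\}$ (exponentially); the only gap, upgrading this approach to convergence to a single point of the line, is closed by the conserved quantity below. Since $P$ is surjective every $\zeta(0)$ is realised as $P\xi(0)$, and $\zeta(k)=P\xi(k)$, so ``coordination exponentially solvable'' and ``$(\ref{eq16})$ exponentially stable'' are equivalent. Under the standing convention, Lemma~\ref{le5} gives a simple eigenvalue $1$ of $\mathscr{L}$ with all others in the open unit disk, whence $A$ is Schur by Lemma~\ref{le4}, so the stable side is actually attained.

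For the limit I would compute the right and left null vectors of the block-lower-triangular $\mathcal{M}$. The right null vector solves $\mathbb{L}v_1=0$ and $\mathcal{L}_2\mathcal{D}v_1+\mathcal{L}_3 v_2=0$; since $\mathcal{L}_1$ carries a simple zero eigenvalue ($\ker\mathcal{L}_1=\mathrm{span}\{\mathbf{1}\}$) this yields $v_1=\mathcal{D}^{-1}\mathbf{1}$, and using the row-sum identity $\mathcal{L}_2\mathbf{1}=-\mathcal{L}_3\mathbf{1}$ together with the invertibility of $\mathcal{L}_3$ (its spectrum is in the open right half plane) gives $v_2=\mathbf{1}$, recovering the $v$ above. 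The left null vector solves $w_2^{\prime}\mathcal{L}_3=0$ and $w_1^{\prime}\mathbb{L}+w_2^{\prime}\mathcal{L}_2\mathcal{D}=0$; invertibility of $\mathcal{L}_3$ forces $w_2=0$, whence $w_1^{\prime}\mathbb{L}=0$, i.e. $w=[\,(\mathbb{D}^{-1}p)^{\prime},\,0^{\prime}\,]^{\prime}$ with $p$ the nonnegative left null vector of $\mathcal{L}_1$ furnished by Lemma~\ref{le1}.

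The exact limit then comes from a conserved functional rather than from $\lim_k\mathscr{L}^k$ directly: since $w^{\prime}\mathcal{M}=0$ we have $w^{\prime}\mathscr{L}=w^{\prime}$, so $w^{\prime}\xi(k)\equiv w^{\prime}\xi(0)$ for all $k$. Combined with $\xi(k)\to\mathrm{span}\{v\}$ this pins the limit to $\xi(\infty)=\alpha v$ with $\alpha=\frac{w^{\prime}\xi(0)}{w^{\prime}v}=\frac{\sum_{i\in\hat{\mathbbm{I}}_{N}}(p_i/\varrho_i)\,\xi_i(0)}{\sum_{i\in\hat{\mathbbm{I}}_{N}}p_i/(\varrho_i\delta_i)}$. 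Reading off the two blocks of $\alpha v$ completes the argument: the non-rooted block is $\alpha\mathbf{1}$, so all non-rooted agents aggregate to the single quantity $\alpha$, which by the formula is governed by the weighted gains $\varrho_i$, the scaling parameters $\delta_i$, the topology (through $p$) and the leaders' initial data $\xi_i(0)$ only; the rooted block is $\alpha\mathcal{D}^{-1}\mathbf{1}$, so $\delta_i\xi_i(\infty)=\alpha$ for every rooted $i$, yielding $\lim_k\delta_i\xi_i(k)=\lim_k\delta_j\xi_j(k)$ for all $i,j\in\hat{\mathbbm{I}}_{N}$.

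I expect the two genuinely non-routine steps to be (i) the explicit null-vector computation, in particular spotting the row-sum identity $\mathcal{L}_2\mathbf{1}=-\mathcal{L}_3\mathbf{1}$ and the vanishing of $w_2$, and (ii) recognising that the conserved functional $w^{\prime}\xi(k)$, and not merely the approach of $\xi(k)$ to $\mathrm{span}\{v\}$, is what fixes the common value and exhibits its dependence on the leaders alone. Everything else is bookkeeping on top of Lemmas~\ref{le1}, \ref{le4} and \ref{le5}.
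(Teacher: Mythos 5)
Your proposal is correct and takes essentially the same route as the paper: your $w$ and $v$ are exactly the paper's left and right eigenvectors $\phi=[\frac{p_{1}}{\varrho_{1}},\dots,\frac{p_{M}}{\varrho_{M}},0,\dots,0]$ and $\varphi=[\frac{1}{\delta_{1}},\dots,\frac{1}{\delta_{M}},1,\dots,1]^{\prime}$ of $\mathscr{L}$ at the eigenvalue $1$, the spectral input comes from the same Lemmas \ref{le1}, \ref{le4} and \ref{le5}, and your conserved functional with limit $\alpha=\frac{w^{\prime}\xi(0)}{w^{\prime}v}$ is precisely the spectral-projection computation the paper compresses into ``orthogonalizing the vectors $\phi$ and $\varphi$.'' The only difference is one of detail, not of method: you make explicit the steps (kernel and surjectivity of $P$, the null-vector computation with the row-sum identity $\mathcal{L}_{2}\mathbf{1}=-\mathcal{L}_{3}\mathbf{1}$, and the nondegeneracy $w^{\prime}v\neq 0$, which follows from simplicity of the eigenvalue $1$) that the paper dismisses as ``obvious'' or ``trivial.''
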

\begin{proof}
In virtue of Lemma $\ref{le4}$, the first statement is obvious. The exponential convergence rate is concretely computed by $\|A\|\leq\max | \lambda(A)|$. In the light of directed spanning tree condition, Lemma $\ref{le5}$ reveals that matrix $\mathscr{L}$ has a unique $1$ eigenvalue and the remaining eigenvalues are strictly located in the unit disk. According to Lemma $\ref{le4}$, it follows that $\lambda\triangleq\max | \lambda(A)|<1$. One hence obtains that $\lambda=\exp(\ln(\lambda))$
	with $-\ln(\lambda)>0$, which is the exponential convergence rate.
	
We now show the remaining statement.
This argument is trivial according to the results developed in Lemmas $\ref{le1}$ and $\ref{le4}$. It is worth noting that the left and the right
	eigenvectors associated with $1$ eigenvalue for $\mathscr{L}$ in $(\ref{eq2})$ can be selected by $\phi=[\frac{p_{1}}{\varrho_{1}},...,\frac{p_{M}}{\varrho_{M}},0,...,0]$ and
	$\varphi=[\frac{1}{\delta_{1}},...,\frac{1}{\delta_{M}},1,...,1]^{\prime}$ in combining Lemma $\ref{le1}$ with $(\ref{eq1})$.
	Hence, the item $\lim\limits_{k\rightarrow\infty}\xi_{i}(k)$ for $i\in \hat{\mathbbm{I}}^{c}_{N}$ relies on the weighted gains, scaling parameters, communication topology and the initial conditions of the rooted agents by orthogonalizing the vectors $\phi$ and $\varphi$.
	In virtue of $(\ref{eq8})$,
	the proof hence follows according to Definition $\ref{de1}$.
\end{proof}

Note that an interesting counterexample given in \cite{meng2016behavior} shows that the mutual requirement of the directed spanning tree is not sufficient for preserving the bipartite consensus, when the cooperative formulation developed in the aforementioned reference is employed. In the sequel, we shall show that the coordination problem investigated in current paper actually involves some advantages over the existing results.
\begin{figure}
	\centering
	\includegraphics[width=4.85in,height=1.5in]{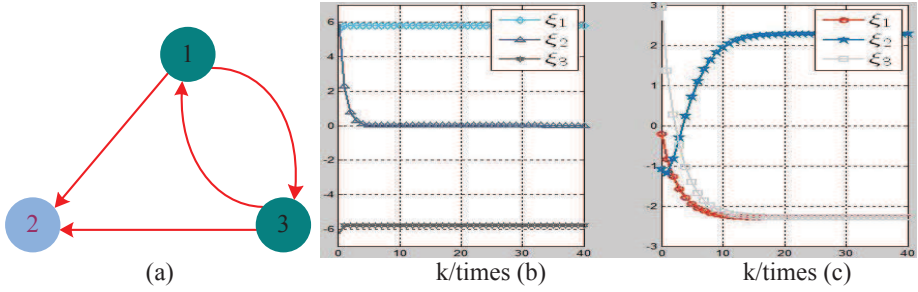}
	\caption{(a) A communication graph; The state trajectories with the algorithm in \cite{meng2016behavior} (subgraph (b)) and in this paper (subgraph (c)) where the initial conditions are randomly selected from $[-10,10]$.}
	\label{fig3}
\end{figure}
\begin{exa}\label{exa2}
	We now consider a communication graph which is described in Fig. $\ref{fig3}$(a), where the red edges imply that the communication among three agents is antagonistic. Notice that the communication topology is quasi-strongly connected according to \cite{proskurnikov2016opinion,meng2016behavior}.
	
	Let us revisit the updating rule proposed in \cite{meng2016behavior}. The states of the interactive agents are updated according to
	$\xi(k+1)=\mathcal {A}\xi(k)$, where
	\begin{equation*}
	\mathcal {A}=\frac{1}{6}\begin{bmatrix}
	3&0&-3\\
	-2&2&-2\\
	-3&0&3
	\end{bmatrix}
	\end{equation*}
	with $\sum^{3}_{j=1}|a_{ij}|=1$ for all $i=1,2,3$. Unfortunately, the designed coordination algorithm in \cite{meng2016behavior} fails to guarantee the bipartite consensus (cf. \cite{proskurnikov2016opinion}), and the trajectories of the agents are given in Fig. $\ref{fig3}$(b). We further show that there does not exist a group of eligible composition for $\varrho_{i}$ and $\delta_{i}$ based on the cooperative framework in \cite{meng2016behavior}.
	
We now compute the underlying equality: $\mathcal {A}=\mathscr{L}=I-\epsilon\mathcal {M}$. By tedious computations, one has that $\epsilon=\frac{1}{3},~\varrho_{1}=\frac{3}{2},~\varrho_{3}=\frac{3}{2},~\delta_{1}=-1,~\delta_{3}=-1$,
which in turn implies that the conclusion in Theorem $\ref{th5}$ is violated leading to the failure of the coordination. The ultimate reason is principally due to the additional restriction, that is, $\sum^{3}_{j=1}|a_{ij}|=1$ on the interaction topology. Alternatively, in terms of the coordination strategy $(\ref{eq1u})$, we can assert that the considered coordination problem can be addressed according to the conclusion of Theorem $\ref{th5}$, and the simulation results are represented in Fig. $\ref{fig3}$(c) with $\epsilon=0.2$, $\delta_{1}=-1$, $\delta_{3}=-1$, $\varrho_{1}= 1$, and $\varrho_{3}=-2.5$ where $a_{ij}=1$ or $0$. Hence, the proposed coordination framework $(\ref{eq1})$ may take some merits compared with the existing work.\QEDA
\end{exa}


The containment control problem is extensively investigated with a common hypothesis that the leaders are isolated (e.g., see \cite{liu2012necessary}). For such a situation, we can also derive an analogue result by virtue of the coordination algorithm $(\ref{eq1u})$ with some slight modification. 

\begin{cor}\label{cor1}
For the case where there is no any interaction among leaders, that is, $\mathcal{L}_{1}=\mathcal {O}_{M^{2}}$. Then, coordination among agents is achieved if and only if graph $\mathcal{G}$ contains a directed spanning forest. In addition, the state variables for the followers converge to $-\mathcal{L}^{-1}_{3}\mathcal{L}_{2}\mathcal {D}\xi_{M}(0)$ eventually where $\xi_{M}(0)$ is the initial states with respect to the first $M$ agents.
\end{cor}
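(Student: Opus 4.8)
The plan is to exploit the block-triangular structure that emerges once $\mathcal{L}_1=\mathcal{O}_{M^2}$ is imposed. Substituting this hypothesis into $\mathcal{M}(k)$ annihilates its upper-left block, so the system matrix in $(\ref{eq2})$ collapses to
\begin{equation*}
\mathscr{L}=\begin{bmatrix} I_M & \mathcal{O} \\ -\epsilon\mathcal{L}_2\mathcal{D} & I_{N-M}-\epsilon\mathcal{L}_3 \end{bmatrix}.
\end{equation*}
First I would read off the leader dynamics: the top block yields $\xi_M(k+1)=\xi_M(k)$, so the rooted agents are stationary and $\xi_M(k)\equiv\xi_M(0)$. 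Feeding this into the follower block produces the affine recursion $\xi_{N-M}(k+1)=(I-\epsilon\mathcal{L}_3)\xi_{N-M}(k)-\epsilon\mathcal{L}_2\mathcal{D}\xi_M(0)$, which is the object whose convergence governs coordination.

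The heart of the argument is the equivalence between the directed-spanning-forest hypothesis on $\mathcal{G}$ and the spectral property that $\mathcal{L}_3$ is nonsingular with all eigenvalues in the open right half plane. I would establish sufficiency by observing that a directed spanning forest rooted at the $M$ leaders makes every follower reachable from some leader; the reachability argument already used after $(\ref{eq2})$ (placing the eigenvalues of $\mathcal{L}_3(k)$ in the open right half plane) then applies verbatim, rendering $\mathcal{L}_3$ a nonsingular M-matrix and permitting $\epsilon$ to be chosen so that $I-\epsilon\mathcal{L}_3$ is Schur stable. For necessity I would contrapose: if $\mathcal{G}$ admits no directed spanning forest, some follower-cluster receives no directed path from any leader, which forces $\mathcal{L}_3$ to inherit a zero eigenvalue; consequently $I-\epsilon\mathcal{L}_3$ carries an eigenvalue on the unit circle and the affine iterate cannot settle to the leader-determined quantity, so coordination in the sense of Definition $\ref{de1}$ fails. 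This graph-to-spectrum translation, and in particular the cluster bookkeeping in the necessity part, is where I expect the main obstacle to lie; it is expedient to lean on the containment characterization of \cite{liu2012necessary} rather than reprove it from scratch.

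Once nonsingularity of $\mathcal{L}_3$ is secured, identifying the limit is routine. The unique fixed point $\xi_{N-M}^{*}$ of the affine recursion solves $\mathcal{L}_3\xi_{N-M}^{*}=-\mathcal{L}_2\mathcal{D}\xi_M(0)$, i.e. $\xi_{N-M}^{*}=-\mathcal{L}_3^{-1}\mathcal{L}_2\mathcal{D}\xi_M(0)$, and Schur stability of $I-\epsilon\mathcal{L}_3$ guarantees $\xi_{N-M}(k)\to\xi_{N-M}^{*}$ exponentially, obtained by telescoping the recursion against $\xi_{N-M}^{*}$ so that the error $\xi_{N-M}(k)-\xi_{N-M}^{*}$ evolves under the pure linear map $(I-\epsilon\mathcal{L}_3)^{k}$. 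Finally I would verify that this limit, together with the frozen leader states $\xi_M(0)$, is consistent with the isolated-leader reading of Definition $\ref{de1}$, thereby closing both implications of the stated equivalence.
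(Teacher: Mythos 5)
Your proposal is correct and follows essentially the same route as the paper: both defer the directed-spanning-forest equivalence to the containment result of \cite{liu2012necessary}, exploit the block-triangular structure of $\mathscr{L}$ with frozen leader states, and obtain the follower limit $-\mathcal{L}^{-1}_{3}\mathcal{L}_{2}\mathcal {D}\xi_{M}(0)$ from Schur stability of $I-\epsilon\mathcal{L}_{3}$. The only difference is cosmetic: you extract the limit as the unique fixed point of the affine follower recursion with the error contracting under $(I-\epsilon\mathcal{L}_{3})^{k}$, whereas the paper expands $\mathscr{L}^{k}$ blockwise and sums the Neumann series $\sum_{i=0}^{k-1}(I-\epsilon\mathcal{L}_{3})^{i}\to\frac{1}{\epsilon}\mathcal{L}^{-1}_{3}$ --- the same computation in telescoped form.
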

\begin{proof}
According to \cite[ Lemma $4$]{liu2012necessary} and the foregoing discussions, coordination problem in the absence of interactions among the leaders is desirable.
	
We now calculate the final converging values for the followers. By
	computations, one has that
	$\lim\limits_{k\rightarrow\infty}\xi(k)=\lim\limits_{k\rightarrow\infty}\mathscr{L}^{k}\xi(0)$ where
	\begin{equation*}\label{eq22}
	\begin{aligned}
	\mathscr{L}^{k}=
	\begin{bmatrix}
	I_{M^{2}} &\mathcal {O}_{M(N-M)}\\
	\Psi&\Phi^{k}
	\end{bmatrix}
	\end{aligned}
	\end{equation*}
	with $\Phi=I_{(N-M)^{2}}-\epsilon\mathcal{L}_{3}$ and $\Psi=-\epsilon \Bigg(\sum\limits^{k-1}_{i=0}\Phi^{i}\Bigg)\mathcal{L}_{2}\mathcal {D}$ for $k\geq1$. Notice that the eigenvalues of the matrix $\Phi$ are totally contained in the unit disk. As a result, $\lim\limits_{k\rightarrow\infty}\Phi^{k}=\mathcal {O}_{(N-M)^{2}}$. Thereby, the item $\sum\limits^{k-1}_{i=0}\Phi^{i}$ converges and its limit exists. Hence, it directly yields that $
	\lim\limits_{k\rightarrow\infty}\sum\limits^{k-1}_{i=0}\Phi^{i}=\frac{1}{\epsilon}\mathcal{L}^{-1}_{3}$.
	By utilizing the foregoing arguments, the final state value for each node is explicitly described by
	\begin{equation*}
	\lim\limits_{k\rightarrow\infty}\xi(k)=
	\begin{bmatrix}
	\xi_{M}(0) \\
	-\mathcal{L}^{-1}_{3}\mathcal{L}_{2}\mathcal {D} \xi_{M}(0)
	\end{bmatrix}
	\end{equation*}
	This concludes the proof.
\end{proof}

\begin{rmk}
In view of Corollary $\ref{cor1}$, we incorporate the result presented in \cite{liu2012necessary} as a special case. This in turn means that the designed coordination algorithm in this paper is prevailing over some of the existing coordination algorithms in the literature to some extent. Nevertheless, we are unable to declare that the followers are contained in a convex hull spanned by the leaders' initial values. The reason is that the final values of the followers depend on not only the initial conditions of the leaders, but also the scaling parameters which normally do not share a common amplitude or a mutual sign.\QEDA
\end{rmk}

Notice that each leader is either rewarding or hostile to these nodes who can receive its information according to the coordination algorithm $(\ref{eq1u})$. The formulation of interest is somewhat motivated by a simple fact: in a social network, it is feasible that an individual is treated as a liar by the rest of the participating members if the individual tells a lie, even just once. The case that a leader may send beneficial and antagonistic information to the other nodes could coexist at the same updating time. We believe that the obtained results in this paper could be extended to the aforementioned scenario, and this interesting problem will be investigated in the near future.

\subsection{Connection with Altafini's Model}\label{Main 34}
The bipartite consensus problem is the main core of the Altafini's model \cite[Equation $(3)$]{altafini2013consensus}. First of all, let's review the Altafini's model
\begin{equation}\label{FOeq001}
\begin{aligned}
\dot{\xi}(t)=-L\xi(t)
\end{aligned}
\end{equation}
where the adjacency matrix is denoted by $\mathbb{A}^{\star}=[a^{\star}_{ij}]_{N^{2}}$ with $a^{\star}_{ij}\in \mathbb{R}$. $L=[l^{\star}_{ij}]_{N^{2}}$ with $l^{\star}_{ii}=\sum^{N}_{j=1,j\neq i}| a^{\star}_{ij}|$ and $l^{\star}_{ij}=-a^{\star}_{ij}$. Obviously, the relationship between $\mathbb{A}$ and $\mathbb{A}^{\star}$ is $a_{ij}=| a^{\star}_{ij}|$, i.e., they differ merely in the weights of some edges.

\begin{pro}\label{pro5}
	Suppose that the communication graph is strongly connected, and Altafini's model $(\ref{FOeq001})$ achieves the bipartite consensus (cf. \cite[Definition 1]{altafini2013consensus}). Then there exist a group of $(\varrho_{i},\delta_{i})$, by which $(\ref{eq1})$ also achieves the same bipartite consensus as $(\ref{FOeq001})$. Moreover, $(\ref{eq1})$ can achieve a general bipartite consensus in the sense of $|\xi_{i}(t)|=|\xi_{j}(t)|$ provided that $(\varrho_{i},\delta_{i})$ satisfies Theorem $\ref{th5}$ in addition to $| \delta_{i}|=| \delta_{j}|$ is desirable.
\end{pro}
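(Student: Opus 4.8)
This proposition has two distinct parts. The first asserts that if Altafini's model $(\ref{FOeq001})$ achieves bipartite consensus on a strongly connected graph, then the decoupled model $(\ref{eq1})$ with suitable $(\varrho_i,\delta_i)$ reproduces the same asymptotic behavior. The second part asserts that, under the weaker requirement that $|\delta_i|=|\delta_j|$ together with the sign conditions of Theorem~\ref{th5}, $(\ref{eq1})$ attains a "general bipartite consensus" where all state magnitudes equalize.

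Let me think about the proof strategy for each part.

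**First part — matching Altafini.** The bipartite consensus of $(\ref{FOeq001})$ on a structurally balanced, strongly connected graph is characterized (Altafini) by a gauge transformation $\mathcal{D}_\sigma=\mathrm{diag}(\sigma_1,\dots,\sigma_N)$ with $\sigma_i\in\{+1,-1\}$ such that $\mathcal{D}_\sigma L \mathcal{D}_\sigma$ is an ordinary (nonnegative-off-diagonal) Laplacian. So the plan is: extract the partition $\sigma_i$ from the structural balance of $\mathbb{A}^\star$, then set $\delta_i=\sigma_i$ (so $\delta_i\in\{\pm1\}$) and choose $\varrho_i$ with $\mathrm{sgn}(\varrho_i)=\mathrm{sgn}(\delta_i)$. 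Since the graph is strongly connected every node is a root, so all agents obey $(\ref{eq1ua})$ and $\mathscr{L}=I-\epsilon\mathbb{D}\mathcal{L}_1\mathcal{D}$. Under $\mathrm{sgn}(\varrho_i)=\mathrm{sgn}(\delta_i)$, Theorem~\ref{th5} (first case) gives a simple eigenvalue $1$ for $\mathscr{L}$ with the rest strictly inside the unit disk, so $\delta_i\xi_i(k)$ all converge to a common value $c$ by Definition~\ref{de1}(ii). Hence $\xi_i(k)\to c/\delta_i=\sigma_i c$, which is exactly the $\pm$-value bipartite pattern of Altafini's solution; matching $c$ to Altafini's limiting amplitude (governed by the left Perron eigenvector) fixes the correspondence.

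**Second part — general bipartite consensus.** Here I would argue directly from Definition~\ref{de1}(ii). Under the Theorem~\ref{th5} hypotheses, $\mathbb{L}(k)=\mathbb{D}\mathcal{L}_1\mathcal{D}$ has a simple zero eigenvalue and the remaining spectrum in the open right half plane, so $\mathscr{L}$ drives $\delta_i\xi_i(k)\to c$ for every $i$ and a single scalar $c$. Consequently $\xi_i\to c/\delta_i$, giving $|\xi_i(t)|=|c|/|\delta_i|$. Imposing $|\delta_i|=|\delta_j|$ for all $i,j$ then forces $|\xi_i(t)|=|\xi_j(t)|$ in the limit, which is precisely the general bipartite consensus claimed. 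The sign pattern of the $\delta_i$ (not their magnitudes) determines which agents share the $+$ and which the $-$ value, so this is genuinely a bipartition rather than ordinary consensus unless all $\delta_i$ share one sign.

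**Main obstacle.** The delicate point is the \emph{first} part: rigorously certifying that the $(\varrho_i,\delta_i)$ extracted from structural balance actually yield the \emph{same} consensus value as $(\ref{FOeq001})$, not merely the same sign pattern. Altafini's limit is $\lim_t \xi(t)=\mathcal{D}_\sigma \mathbf{1}\, (w^\top \mathcal{D}_\sigma \xi(0))$ with $w$ the nonnegative left-null vector of the gauge-transformed Laplacian, whereas the discrete model's limit depends on the left eigenvector $\phi=[p_1/\varrho_1,\dots,p_M/\varrho_M]$ of $\mathscr{L}$ identified in Corollary~\ref{th1}. Reconciling these two weightings requires checking that the choice of $\varrho_i$ can be normalized so the weighted averages coincide — and since Theorem~\ref{th4} shows $\varrho_i$ is far from unique, one must verify at least one admissible choice realizes the match. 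I would therefore devote the bulk of the proof to computing both left-null vectors explicitly under the gauge $\delta_i=\sigma_i$, showing that with $\varrho_i=\sigma_i$ the weights reduce to $p_i$ (the Altafini Perron weights) up to the common normalization $\sum_i p_i=1$ from Lemma~\ref{le1}, thereby forcing equality of the two consensus amplitudes.
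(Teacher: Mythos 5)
Your strategy is essentially the paper's — you pick the gauge signature $\sigma_i$ from structural balance and set $\delta_i=\sigma_i$, $\varrho_i=\sigma_i$ — but you take a genuinely heavier route to close the part you call the ``main obstacle,'' and that obstacle is in fact a non-issue. The paper invokes Altafini's Lemma 2: bipartite consensus of $(\ref{FOeq001})$ is equivalent to the existence of a signature matrix $D$ with $\mathcal{L}=DLD$; choosing $\mathbb{D}=\mathcal{D}=D$ and using $D^{-1}=D$ gives $\mathbb{D}\mathcal{L}\mathcal{D}=D\mathcal{L}D=L$ exactly, so the continuous version of $(\ref{eq1})$, namely $\dot{\xi}(t)=-\mathbb{D}\mathcal{L}\mathcal{D}\xi(t)$, \emph{is} Altafini's system — same generator, same trajectories, hence trivially the same bipartite consensus, with no need to compute or reconcile left Perron vectors. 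Your plan (compute both left-null vectors under $\delta_i=\varrho_i=\sigma_i$ and show the weightings coincide) would succeed, but only because the two matrices you would be comparing are the identical matrix $L$; noticing this first collapses the ``bulk of the proof'' you propose into one line. Two further remarks: (i) you argue on the discrete iteration $\mathscr{L}=I-\epsilon\mathbb{D}\mathcal{L}_{1}\mathcal{D}$ while Altafini's model is continuous-time, so your version carries the extra (minor, but unaddressed) burden of identifying the discrete-time limit $\lim_k(I-\epsilon L)^k\xi(0)$ with the continuous-time limit $\lim_t e^{-Lt}\xi(0)$, a step the paper avoids by explicitly passing to the continuous version $(\ref{FOeq002})$; (ii) your second part matches the paper's proof in substance — Theorem $\ref{th5}$ gives the simple zero eigenvalue, $\delta_i\xi_i$ converge to a common $c$, and $|\delta_i|=|\delta_j|$ forces $|\xi_i|=|\xi_j|$ — and is in fact spelled out more explicitly than the paper's citation of Corollary $\ref{th1}$.
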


The remarkable difference between the Altafini's model $(\ref{FOeq001})$ and $(\ref{eq1})$ is: Altafini's model uses the weights of the edges to characterizes the antagonistic information; While $(\ref{eq1})$ uses the nodes to characterize the antagonistic information.

\begin{proof}
	For the sake of consistency with the Altafini's model, we take into account the continuous version of $(\ref{eq1})$
	\begin{equation}\label{FOeq002}
	\begin{aligned}
	\dot{\xi}(t)=-\mathbb{D}\mathcal{L}\mathcal {D}\xi(t)
	\end{aligned}
	\end{equation}
	According to \cite[Lemma 2]{altafini2013consensus}, $(\ref{FOeq001})$ solves the bipartite consensus problem if and only if there exists a diagonal matrix $D$, where the $i$th diagonal element of $D$ is $1$ or $-1$, such that $\mathcal{L}=DLD$ holds. For such a scenario, it suffices to let $\mathbb{D}=D$ and $\mathcal {D}=D$ hold. For that reason, it arrives at $L=\mathbb{D}\mathcal{L}\mathcal {D}$ (since $D^{-1}=D$).
	Hence, the conclusion is trivial according to Definition $\ref{de1}$ and Theorem $\ref{th5}$.
	
	We now show the second part. By virtue of Theorem $\ref{th5}$, $\mathbb{D}\mathcal{L}\mathcal {D}$ has a simple zero eigenvalue and the nonzero eigenvalues have positive real parts. By Theorem $\ref{th1}$, $(\ref{eq1})$ can solve the bipartite consensus problem. This completes the proof.
\end{proof}


Built upon Altafini's model, interval bipartite consensus problem was elaborated \cite[Defintion $1$]{meng2016interval}. For system $(\ref{FOeq001})$, it solves the interval bipartite consensus problem if $\lim_{t\rightarrow \infty} | \xi_{i}(t)|=\varpi$ on condition that agent $i$ is a root, otherwise $\lim_{t\rightarrow \infty} | \xi_{i}(t)|\leq\varpi$ where $\varpi>0$ is a constant.

\begin{thm}\label{th6}
	Suppose that the underlying graph contains a directed spanning tree. Then there exist a group of $(\varrho_{i},\delta_{i})$, by which $(\ref{eq1})$ also solves the interval bipartite consensus as $(\ref{FOeq001})$.
\end{thm}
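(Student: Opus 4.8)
The plan is to reduce Theorem $\ref{th6}$ to the fixed-topology final-value analysis of Corollary $\ref{th1}$, by exhibiting an explicit admissible pair $(\varrho_{i},\delta_{i})$ that forces all roots to share one limiting magnitude while squeezing every non-rooted agent into the corresponding interval. First I would fix the parameters. Since $\mathcal{G}$ possesses a directed spanning tree, its rooted agents $\{1,\dots,M\}$ form a strongly connected root component with Laplacian $\mathcal{L}_{1}$, so I may invoke Theorem $\ref{th5}$ with the simplest admissible sign pattern ${\rm sgn}(\varrho_{i})={\rm sgn}(\delta_{i})$ to guarantee that $\mathbb{L}=\mathbb{D}\mathcal{L}_{1}\mathcal{D}$ has a simple zero eigenvalue and all remaining eigenvalues in the open right half plane. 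On top of this I would additionally impose $|\delta_{i}|=d$ for every $i\in\hat{\mathbbm{I}}_{N}$ with a fixed constant $0<d\le 1$, choosing the signs of the $\delta_{i}$ to reproduce the bipartition of the root component, and taking for instance $\varrho_{i}={\rm sgn}(\delta_{i})$.

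With these parameters, Lemma $\ref{le5}$ shows that $\mathscr{L}=I-\epsilon\mathcal{M}$ has a simple eigenvalue $1$ with all other eigenvalues inside the unit disk, so Corollary $\ref{th1}$ applies and $(\ref{eq1})$ with $(\ref{eq1u})$ solves the coordination problem exponentially. I would then read the limiting state off the left and right eigenvectors $\phi=[\frac{p_{1}}{\varrho_{1}},\dots,\frac{p_{M}}{\varrho_{M}},0,\dots,0]$ and $\varphi=[\frac{1}{\delta_{1}},\dots,\frac{1}{\delta_{M}},1,\dots,1]^{\prime}$ displayed in that proof: setting $\kappa=\frac{\phi\xi(0)}{\phi\varphi}$, the projection onto the dominant eigenspace gives $\lim_{k\to\infty}\xi_{i}(k)=\kappa/\delta_{i}$ for each root $i$ and $\lim_{k\to\infty}\xi_{j}(k)=\kappa$ for each non-root $j$, in agreement with Definition $\ref{de1}$. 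Here the sign constraint ${\rm sgn}(\varrho_{i})={\rm sgn}(\delta_{i})$ together with Lemma $\ref{le1}$ keeps $\phi\varphi\neq 0$, so the projection is well posed.

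The verification then collapses to a magnitude comparison. Put $\varpi=|\kappa|/d$; then every root obeys $|\lim_{k\to\infty}\xi_{i}(k)|=|\kappa|/|\delta_{i}|=|\kappa|/d=\varpi$, so the rooted agents all attain the common value $\varpi$ required by interval bipartite consensus, their signs ${\rm sgn}(\kappa)\,{\rm sgn}(\delta_{i})$ realizing the prescribed bipartition, while every non-root satisfies $|\lim_{k\to\infty}\xi_{j}(k)|=|\kappa|=d\varpi\le\varpi$ because $d\le 1$, hence lies inside $[-\varpi,\varpi]$. This is exactly the interval bipartite consensus defined for $(\ref{FOeq001})$, and the same null-space projection reproduces it for the continuous-time counterpart $(\ref{FOeq002})$.

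The principal obstacle is not the convergence itself, since Theorem $\ref{th5}$ and Corollary $\ref{th1}$ already secure it, but rather extracting the limiting magnitudes. Demanding that all roots reach a single $\varpi$ is precisely what forces the extra constraint $|\delta_{i}|=|\delta_{j}|$, echoing the second part of Proposition $\ref{pro5}$, and the interval inclusion of the non-rooted agents is exactly what dictates $d\le 1$; both facts must be read from the explicit right eigenvector $\varphi$ rather than from the qualitative convergence statement. A minor point to dispose of is the degenerate case $\kappa=0$, in which every agent converges to the origin; this is ruled out for generic initial data, and otherwise $\varpi$ is simply the initial-condition-dependent constant permitted by the definition.
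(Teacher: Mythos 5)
Your proof establishes a strictly weaker statement than Theorem \ref{th6} asks for. Read in parallel with Proposition \ref{pro5} (``achieves the \emph{same} bipartite consensus as (\ref{FOeq001})''), and as the paper's own proof makes explicit, the theorem requires parameters $(\varrho_{i},\delta_{i})$ under which (\ref{eq1}) reproduces the interval bipartite consensus that (\ref{FOeq001}) actually achieves, i.e., its specific limit profile $\widetilde{\xi}(\infty)$. Your construction instead shows that (\ref{eq1}) achieves \emph{some} interval bipartite consensus, with a radius $\varpi=|\kappa|/d$ manufactured by the node-based dynamics and its own initial data, bearing no relation to the Altafini limit. Worse, this cannot be repaired inside your framework: by Definition \ref{de1}, your scheme forces every non-rooted agent to one and the same limit $\kappa$, while the non-rooted agents of (\ref{FOeq001}) generically converge to pairwise distinct values inside the interval $[-\varpi,\varpi]$. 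So no choice of $d$, of the signs of the $\delta_{i}$, or of the $\varrho_{i}$ within your construction can match Altafini's asymptotics; the single scalar $\kappa$ carries too few degrees of freedom, and in addition the realized bipartition is only determined up to the uncontrolled global sign ${\rm sgn}(\kappa)$.

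The idea you are missing is the paper's virtual-leader construction, which supplies exactly the per-agent freedom your argument lacks. Writing $\widetilde{\xi}(\infty)$ for the Altafini limit, the paper augments the continuous counterpart (\ref{FOeq002}) with a constant virtual leader $\xi_{0}(0)\neq 0$ influencing the remaining agents, gives every original agent its own scaling parameter, and sets $\delta_{i}=\delta_{0}\iota_{i}$ with $\xi_{0}(0)=\iota_{i}\widetilde{\xi}_{i}(\infty)$; then each agent's limit equals $\delta_{0}\xi_{0}(0)/\delta_{i}=\widetilde{\xi}_{i}(\infty)$, matching Altafini agent by agent, with Theorem \ref{th5} guaranteeing convergence. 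Agents whose Altafini limit vanishes are deleted first (the paper's Case 2), which also disposes of your degenerate case $\kappa=0$ properly rather than by an appeal to genericity. Your computation of the limit through the eigenvectors $\phi$ and $\varphi$ of Corollary \ref{th1} is correct as far as it goes and would be a fine ingredient, but it is aimed at the wrong target.
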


\begin{proof}
	Suppose that $(\ref{FOeq001})$ guarantees interval bipartite consensus in the sense that $\lim_{t\rightarrow \infty}  \xi(t)\triangleq \widetilde{\xi}(\infty)=[\widetilde{\xi}_{1}(\infty),...,\widetilde{\xi}_{M}(\infty),\widetilde{\xi}_{M+1}(\infty),...,
	\widetilde{\xi}_{N}(\infty)]^{\prime}\neq \mathcal {O}_{N1}$. The proof is divided into two cases.
	
	Case 1: $\widetilde{\xi}_{i}(\infty)\neq 0$ for $i\in  \hat{\mathbbm{I}}^{c}_{N}$. For system $(\ref{FOeq002})$, we bring in a new agent, indexed by $0$, with $\xi_{0}(0)\neq0$ (virtual leader) yielding an augment system
	\begin{equation*}\label{FOeq003}
	\begin{aligned}
	\dot{\xi}(t)=
	\begin{bmatrix}
	0&\mathcal {O}_{1N}\\
	-\mathbb{D}\delta_{0}\ell&-\mathbb{D}\mathcal{L}\mathcal {D}
	\end{bmatrix}\xi(t)
	\end{aligned}
	\end{equation*}
	where $\delta_{0}\neq 0$, and $\ell\neq \mathcal {O}_{N1}$ quantifies the influence of the virtual leader imposing on the remaining agents. Let
	\begin{equation*}\label{FOeq004}
	\begin{aligned}
	\delta_{i}=\delta_{0}\iota_{i}, ~~i\in\mathbb{I}_{N}
	\end{aligned}
	\end{equation*}
	where $\xi_{0}(0)=\iota_{i}\widetilde{\xi}_{i}(\infty)$ and $\iota_{i}\neq0$. Consequently, one gets that $\lim_{t\rightarrow \infty}\xi_{i}(t)=\widetilde{\xi}_{i}(\infty)$. By virtue of Definition $\ref{de1}$ and Theorem $\ref{th5}$, there exists a diagonal matrix $\mathbb{D}$ such that system $(\ref{FOeq002})$ converges to $\widetilde{\xi}(\infty)$.
	
	Case 2: there exists an index $i$ satisfying $\xi_{i}(\infty)=0$ where $i\in\hat{\mathbbm{I}}^{c}_{N}$ (see also Fig. $\ref{fig3}$(b) for instance). We then delete the index since the agent has a trivial state. The argument is the same as that in Case $1$ (if there are multiple indices, the conclusion also holds). This completes the proof.
\end{proof}

In the context of interval bipartite consensus, rooted agents always achieve the bipartite consensus if $\varpi>0$ holds (cf. \cite[Thm. $5$]{meng2016interval}), which is consistent with \cite[Lemma 2]{altafini2013consensus} where the underlying graph is strongly connected. In some situation, we just require that agents belonging to set $\hat{\mathbbm{I}}^{c}_{N}$ are bounded by $\varpi$ since they are more arduous to determine in general (cf. \cite{meng2016interval}). For such a scenario, we merely need to choose $\max_{i}|\delta_{i}|\leq |\delta_{j}| $ where $i\in \hat{\mathbbm{I}}_{N}$ and $j\in \hat{\mathbbm{I}}^{c}_{N}$.
By doing so, we can always transform the Altafini's model (edge-based algorithm) into the formulation (node-based algorithm) promoted in this paper. Unfortunately, the converse may be not true since it is more arduous to select a collection of the edges whose wights are assigned to be negative except for the constraint regarding the digon sign-symmetry. This is because the number of the edges in a graph is far more larger than the nodes whenever the graph is larger-scale. More interesting, $(\ref{eq1})$ could still guarantee the bipartite consensus, even if there are some pairs of $(\varrho_{i},\delta_{i})$ enjoying the property of ${\rm sgn}(\varrho_{i})\neq{\rm sgn}(\delta_{i})$.

\section{Conclusion}\label{sec5}
This paper has studied coordination problem for time-varying antagonistic networks. It is shown that interactions among agents and the description of antagonistic information can be quantified by a fully decoupled manner. Based on the proposed coordination algorithm, it has guaranteed the existence of weighted gains, established connection among scaling parameter, weighted gain and communication topology as well as locations of the eigenvalues in system matrix. Topology-dependent average time condition has further provided to devise the underlying changing topologies, relaxing the dependence on infinite product of nonnegative matrices and Lyapunov-based techniques. Some discussion and comparison have been carried out to support the effectiveness of the proposed coordination algorithm.

\section{Appendices}

\subsection{Proof of Lemma $\ref{wle1}$}\label{app521}
\begin{proof}
	We use the inductive method to prove it. Proposition $\ref{pro4}$ indicates that all the $s$th ($1\leq s \leq M-1$) order principal minors of $\mathcal{L}(k)$ are not equal to zero. The conclusion is evident for $M=2$. Suppose that any $(n-1)$th order principal minor of $\mathcal{L}(k)$ is positive for $n>2$. We will show the case of $n$. Let $\Phi_{n}(k)=(\phi_{ij}(k))_{n^{2}}$ being any $n$th order principal minor of $\mathcal{L}(k)$. Since the underlying graph induced by $\mathcal{L}(k)$ is strongly connected, there exists at least an index $i$ such that $\phi_{ii}(k)$ is strictly diagonal dominant, and we denote by $i=1$ without loss of generality. By Schur determinantal formulae, it has that
	\begin{equation*}
	\begin{aligned}
	{\rm det}(\Phi_{n}(k))=&{\rm det}
	\begin{bmatrix}
	\phi_{11}(k)&\phi_{1}(k)\\
	\phi_{2}(k)&\Phi_{n-1}(k)
	\end{bmatrix}\\
	=&{\rm det}(\Phi_{n-1}(k)){\rm det}(\phi_{11}(k)-\phi_{1}(k)  \Phi^{-1}_{n-1} (k)\phi_{2}(k))
	\end{aligned}
	\end{equation*}
	Notice that
	$\begin{bmatrix}
	\phi_{2}(k)&\Phi_{n-1}(k)
	\end{bmatrix}\textbf{1}\geq \mathcal {O}_{(n-1)1}$ with $\textbf{1}=[1,...,1]^{\prime}$. It implies that $\textbf{1}\geq -\Phi^{-1}_{n-1} (k)\phi_{2}(k)\geq \mathcal {O}_{(n-1)1}$. Therefore, the fact that $\phi_{11}(k)-\phi_{1}(k)  \Phi^{-1}_{n-1}(k) \phi_{2}(k)>0$ directly leads to ${\rm det}(\Phi_{n}(k))>0$. This completes the proof.
\end{proof}

\subsection{Proof of Lemma $\ref{le3}$}\label{app5}
\begin{proof}    
Apparently, it suffices to prove that the equality $P\mathscr{L}(k)QP=P\mathscr{L}(k)$ holds. In essence, it is enough to examine the entries in the $N$th column with the aid of the block matrix. Moreover, one has
\begin{equation*}\label{eq13}
	\begin{aligned}
	QP=\begin{bmatrix}
	I_{(N-1)^{2}}&\nu\\
	\mathcal {O}_{ 1(N-1)}&0
	\end{bmatrix}
	\end{aligned}
\end{equation*}
where $\nu=-[\frac{1}{\delta_{1}},...,\frac{1}{\delta_{M}},1,...,1]^{\prime}$. Let us start by checking the characterization of the entries in matrix $P\mathscr{L}(k)=[e_{ij}(k)]_{(N-1)N}\triangleq[e^{\prime}_{1}(k),...,e^{\prime}_{N-1}(k)]^{\prime}$ as follows:
\begin{equation*}\label{eq14}
\begin{aligned}
	&e_{i(i+1)}(k)=\delta_{i+1}(-1-\epsilon(\varrho_{i}\delta_{i}l_{i(i+1)}(k)-\varrho_{i+1}\delta_{i+1}l_{(i+1)^{2}}(k))),~i\in \hat{\mathbbm{I}}_{N}\setminus \{M\}\\
	&e_{i^{2}}(k)=\delta_{i}(1-\epsilon(\varrho_{i}\delta_{i}l_{i^{2}}(k)-\varrho_{i+1}\delta_{i+1}l_{(i+1)i}(k))),~i\in \hat{\mathbbm{I}}_{N}\setminus \{M\}\\
	&e_{ij}(k)=\epsilon\delta_{j}(\varrho_{i+1}\delta_{i+1}l_{(i+1)j}(k)-\varrho_{i}\delta_{i}l_{ij}(k)),~i\in \hat{\mathbbm{I}}_{N}\setminus \{M\} ,~j\in \hat{\mathbbm{I}}_{N},~j\neq i,~i+1 \\
	&e_{ij}(k)=0,~i\in \hat{\mathbbm{I}}_{N}, ~j\in\hat{\mathbbm{I}}^{c}_{N}\\
	&e_{M^{2}}(k)=\delta_{M}(1-\epsilon(\delta_{M}\varrho_{M} l_{M^{2}}(k)-l_{(M+1)M}(k)))\\
	&e_{Mj}(k)=\epsilon\delta_{j}(l_{(M+1)j}(k)-\varrho_{M}\delta_{M}l_{Mj}(k)),~j\in \hat{\mathbbm{I}}_{N}\setminus \{M\}\\
	&e_{M(M+1)}(k)=-1+\epsilon l_{(M+1)^{2}}(k)\\
	&e_{Mj}(k)=\epsilon l_{(M+1)j}(k),~j=M+2,...,N\\
	&e_{ij}(k)=\epsilon\delta_{j}(l_{(i+1)j}(k)-l_{ij}(k)),~i\in \mathbbm{I}^{\star}_{N} \triangleq\{ M+1,...,N-1\},~j\in \hat{\mathbbm{I}}_{N}\\
	&e_{i^{2}}(k)=1-\epsilon(l_{i^{2}}(k)-l_{(i+1)i}(k)),~i\in \mathbbm{I}^{\star}_{N} \\
	&e_{i(i+1)}(k)=-1+\epsilon(l_{(i+1)^{2}}(k)-l_{i(i+1)}(k)),~i\in \mathbbm{I}^{\star}_{N}\\
	&e_{ij}(k)=\epsilon(l_{(i+1)j}(k)-l_{ij}(k)),~i\in \mathbbm{I}^{\star}_{N} ,~j\in \mathbbm{I}^{c}_{N}
	\end{aligned}
\end{equation*}
By using the property of the Laplacian matrix together with lengthy calculations, one has that $e^{\prime}_{i}(k)\mathbbm{h}=e_{iN}(k)$ for $i \in \mathbbm{I}_{N}\setminus \{N\}$,
where $\mathbbm{h}=[\nu^{\prime},0]^{\prime}$. It follows immediately that $P\mathscr{L}(k)\mathbbm{h}=[0,...,0,e_{(M+1)N}(k),...,e_{(N-1)N}(k)]^{\prime}$.
\end{proof}

\subsection{{Proof of Lemma $\ref{le5}$}\label{app6}}
\begin{proof}
	The characteristic equation of the involved system equipped with $\mathscr{L}(k)$ is depicted by
	\begin{equation*}
	\begin{aligned}
	\mathscr{F}_{k}(\lambda)=&~{\rm det}(\lambda I-\mathscr{L}(k))\\
	=&~{\rm det}\bigg((\lambda-1) I+\epsilon\mathcal {M}(k)\bigg)\\
	=&~{\rm det}\bigg(\epsilon\bigg[\frac{(\lambda-1)}{\epsilon} I+\mathcal {M}(k)\bigg]\bigg)\\
	=&~{\rm det}(\epsilon I){\rm det}\bigg(\frac{(\lambda-1)}{\epsilon} I+\mathcal {M}(k)\bigg)=0
	\end{aligned}
	\end{equation*}
	It further gets that
	\begin{equation*}\label{eqw6}
	\begin{aligned}
	0=&~{\rm det}\bigg(\frac{(\lambda-1)}{\epsilon} I+\mathcal {M}(k)\bigg)\\
	\triangleq &~{\rm det}(\lambda^{\star} I+\mathcal {M}(k))\triangleq\mathscr{F}_{k}(\lambda^{\star})
	\end{aligned}
	\end{equation*}
	Evidently, $\mathscr{F}_{k}(\lambda^{\star})$ represents the characteristic equation of the underlying system with matrix $-\mathcal {M}(k)$. By Theorem $\ref{th5}$, $\mathcal {M}(k)$ has a simple zero eigenvalue and the nonzero eigenvalues have positive real parts.
	Whence, the connection between the eigenvalues of the matrices $\mathscr{L}(k)$ and $-\mathcal {M}(k)$ is reported by
	\begin{equation*}\label{eqw7}
	\begin{aligned}
	\lambda=1+\epsilon\lambda^{\star}
	\end{aligned}
	\end{equation*}
	Therefore, one can choose by
	\begin{equation*}\label{eqw8}
	\begin{aligned}
	0< \epsilon< \min _{i\in \mathbbm{I}_{N} \backslash \{j\},\lambda^{\star}_{j}=0}\bigg\{\frac{-2\lambda^{\star}_{1i}}{(\lambda^{\star}_{1i})^{2}+(\lambda^{\star}_{2i})^{2}}\bigg\}
	\end{aligned}
	\end{equation*}
	where $\lambda^{\star}_{i}=\lambda^{\star}_{1i}+\mathbbm{i}\lambda^{\star}_{2i}$, by which $| \lambda |<1$ always holds whenever $\lambda^{\star}\neq0$. This ends the proof.
\end{proof}

\subsection{Proof of Lemma $\ref{le4}$}\label{app1}
\begin{proof}
Here we just discuss the case where graph $\mathcal{G}(k)$ preserves a directed spanning tree since the other scenarios could be investigated with minor modifications. Computing the eigenvalues of $\mathscr{L}(k)$ by
	\begin{equation*}\label{eq59}
	\begin{aligned}
	{\rm det}(\lambda I-\mathscr{L}(k))=&\left| \lambda I-\mathscr{L}(k)\right|\\
	=&\left|\begin{matrix}
	\kappa_{1^{2}} (k)&\epsilon\varrho_{1}\delta_{2}l_{12}(k)& \cdots& 0\\
	\epsilon\varrho_{2}\delta_{1}l_{21}(k) & \kappa_{2^{2}}(k)& \cdots& 0\\
	\vdots & \vdots &\ddots&\vdots\\
	\epsilon\delta_{1}l_{N1}(k)& \epsilon\delta_{2}l_{N2}(k)&\cdots&\kappa_{N^{2}}(k)
	\end{matrix}\right|\\
	=&\widehat{\varsigma}\left|\begin{matrix}
	\widetilde{\kappa}_{1^{2}} (k)&\epsilon\varrho_{1} l_{12}(k)& \cdots& 0\\
	\epsilon\varrho_{2} l_{21} (k)& \widetilde{\kappa}_{2^{2}}(k)& \cdots& 0\\
	\vdots & \vdots &\ddots&\vdots\\
	\epsilon l_{N1}(k)& \epsilon l_{N2}(k)&\cdots&\kappa_{N^{2}}(k)
	\end{matrix}\right|\\
	=&\varsigma\left|\begin{matrix}
	\widehat{\kappa}_{1^{2}} (k)&\epsilon l_{12}(k)& \cdots& 0\\
	\epsilon  l_{21} (k)& \widehat{\kappa}_{2^{2}}(k)& \cdots& 0\\
	\vdots & \vdots &\ddots&\vdots\\
	\epsilon l_{N1}(k)& \epsilon l_{N2}(k)&\cdots&\kappa_{N^{2}}(k)
	\end{matrix}\right|\\
	=&\left|\begin{matrix}
	\kappa_{1^{2}} (k)&\epsilon\varrho_{1}\delta_{1}l_{12}(k)& \cdots& 0\\
	\epsilon\varrho_{2}\delta_{2}l_{21} (k)& \kappa_{2^{2}}(k)& \cdots& 0\\
	\vdots & \vdots &\ddots&\vdots\\
	\epsilon l_{N1}(k)& \epsilon l_{N2}(k)&\cdots&\kappa_{N^{2}}(k)
	\end{matrix}\right|\\
	\end{aligned}
	\end{equation*}
	with $\kappa_{i^{2}}(k)=(\lambda-1)+\epsilon\varrho_{i}\delta_{i}l_{i^{2}}(k)$, $\widetilde{\kappa}_{i^{2}}(k)=\frac{(\lambda-1)}{\delta_{i}}+\epsilon\varrho_{i}l_{i^{2}}(k)$, $\widehat{\kappa}_{i^{2}}(k)=\frac{(\lambda-1)}{\varrho_{i}\delta_{i}}+\epsilon l_{i^{2}}(k)$
	for $i\in \hat{\mathbbm{I}}_{N}$, and $\kappa_{i^{2}}(k)=(\lambda-1)+\epsilon l_{i^{2}}(k)$ for $i \in \hat{\mathbbm{I}}^{c}_{N}$. Besides, $\widehat{\varsigma}=\prod\limits^{M}_{i=1}\delta_{i}$ and $\varsigma=\prod\limits^{M}_{i=1}\varrho_{i}\delta_{i}$. Now,
	we perform the transformation step by step: $\textbf{1)}$ Subtract $(i+1)$th row from $i$th row for $i<N$; $\textbf{2)}$ Add the first $i$ columns to $(i+1)$th for all $1\leq i<N$. It follows that
	\begin{equation*}
	\begin{aligned}
	{\rm det}(\lambda I-\mathscr{L}(k))=&\left| \lambda I-\mathscr{L}(k)\right|\\
	=&\left|\begin{matrix}
	\lambda-\mathbbm{a}_{1^{2}}(k) &-\mathbbm{a}_{12}(k)& \cdots&0 &0\\
	-\mathbbm{a}_{21} (k)& \lambda-\mathbbm{a}_{2^{2}}(k)&\cdots & 0&0\\
	\vdots & \vdots &\ddots&\vdots&\vdots\\
	-\mathbbm{a}_{(N-1)1}(k) &-\mathbbm{a}_{(N-1)2} (k)&\cdots &\lambda-\mathbbm{a}_{(N-1)^{2}}(k)&0\\
	\epsilon l_{N1}(k)& \epsilon(l_{N1}(k)+l_{N2}(k))&\cdots&\epsilon\sum\limits^{N-1}_{j=1}l_{Nj}(k)&\lambda-1
	\end{matrix}\right|\\
	=&(\lambda-1){\rm det}(\lambda I-A(k)).
	\end{aligned}
	\end{equation*}
	The above argument suggests that matrix $A(k)$ possesses the same eigenvalues as matrix $\mathscr{L}(k)$ except for an eigenvalue $1$.
	In addition, matrix $\mathscr{L}(k)$ has exactly an eigenvalue at $1$ and the remaining eigenvalues are contained in the unit disk if directed spanning tree is preserved; the converse is also true (cf. Lemma $\ref{le5}$).
	This completes the proof.
\end{proof}

\subsection{Proof of Lemma $\ref{le7}$}\label{app7}
\begin{proof}
Obviously, both $\mathscr{H}^{\dag}$ and $\mathscr{H}^{\ddag}$ are $A(k)$-invariant subspaces since
	$A(k)x=x \in \mathscr{H}^{\dag}$ for $x\in \mathscr{H}^{\dag}$, and $A(k)y=\lambda y \in \mathscr{H}^{\ddag}$ for
	$y\in \mathscr{H}^{\ddag}$ in terms of Lemma $\ref{le6}$.
	According to Lemma $\ref{le4}$,
	we can simply choose the orthogonal matrix $U$ of the form
	\begin{equation}\label{eq28}
	\begin{aligned}
	U=[u_{1},...,u_{d},u_{d+1},...,u_{N-1}]
	\end{aligned}
	\end{equation}
	where $0\leq d\leq N-1$. Therefore, $\mathscr{H}^{\dag}$ and $\mathscr{H}^{\ddag}$ are, respectively, spanned by the base vectors $[u_{1},...,u_{d}]$ and
	$[u_{d+1},...,u_{N-1}]$. Hence, one obtains
	\begin{equation}\label{eq29}
	\begin{aligned}
	UA(k)U^{-1}={\rm diag}(I_{d},\textbf{A}(k))
	\end{aligned}
	\end{equation}
	where $I_{d}$ is a $d$-dimensional identity matrix, and $\textbf{A}(k)$ is upper triangular with $|\lambda(\textbf{A}(k))|<1$. From $(\ref{eq28})$ and $(\ref{eq29})$, we have
	\begin{equation*}\label{eq30}
	\begin{aligned}
	\|A(k)\| _{\mathscr{H}^{\dag}}\leq&~ \|U\|_{\mathscr{H}^{\dag}} \| U^{-1} \|_{\mathscr{H}^{\dag}}\leq \rho\\
	\| A(k)\| _{\mathscr{H}^{\ddag}}\leq& ~\| U\|_{\mathscr{H}^{\ddag}} \| U ^{-1}\|_{\mathscr{H}^{\ddag}}\| \textbf{A}(k)\|\leq \rho \lambda
	\end{aligned}
	\end{equation*}
	where $\rho\triangleq \max \{ \| U\|_{\mathscr{H}^{\dag}} \| U ^{-1}\|_{\mathscr{H}^{\dag}},
	\| U\|_{\mathscr{H}^{\ddag}} \| U ^{-1}\|_{\mathscr{H}^{\ddag}}\}$.
	This ends the proof.
\end{proof}

\subsection{Proof of Lemma $\ref{le9}$}\label{app8}
\begin{proof}
	It is sufficient to show that there exists at least one index $1\leq\hbar \leq \mathbbm{n}$ such that $\lambda^{s}(A_{\hbar})\neq 1$ for all $s=1,...,(N-1)$ according to Lemma $\ref{le12}$. The analysis is divided into three cases.
	
	\textbf{Case} $1$: $\mathscr{C}_{1}=\emptyset$. With the help of the facts that $\mathscr{C}_{1}\bigcap\mathscr{C}_{2}=\emptyset$, $\mathcal{S}_{1}\subseteq\mathscr{D}$ and $\mathcal{S}_{2}\subseteq\mathscr{D}$ are nonempty (it is trivial when $\mathcal{S}_{1}$ or $\mathcal{S}_{2}$ is empty), one obtains that the magnitude of the eigenvalues on $A_{i}$ for $i\in \mathcal{S}_{1}$ is strictly less than $1$. Thereby, graph $\mathcal{G}_{i}$ possesses a directed spanning tree by employing Lemma $\ref{le4}$.
	
	\textbf{Case} $2$: $\mathscr{C}_{2}=\emptyset$. This situation is similar to \textbf{Case} $1$ because of $(\ref{eq36})$.
	
	\textbf{Case} $3$: Both $\mathscr{C}_{1}$ and $\mathscr{C}_{2}$ are nonempty. In a nutshell, we label the linearly independent eigenvectors belonging to $\mathscr{C}_{1}$ and $\mathscr{C}_{2}$ by $\{v^{1}_{1},...,v^{1}_{s}    \}$ and $\{v^{2}_{s+1},...,v^{2}_{N-1}    \}$ for some integer $s$ satisfying $1\leq s<(N-1)$. Evidently, $\mathscr{C}_{1}= {\rm span}\{v^{1}_{1},...,v^{1}_{s}    \}$ and $\mathscr{C}_{2}= {\rm span}\{v^{2}_{s+1},...,v^{2}_{N-1}    \}$. On the other hand, condition $(\ref{eq36})$ and $\mathscr{C}_{2}$ imply that there are $(N-1)$ linearly independent eigenvectors corresponding to these eigenvalues located in the unit disk. It in turn indicates that there exist at least $(N-1)$ eigenvalues whose magnitudes are smaller than $1$. Note that the case, the eigenvalue multiplicity is more than one, could be discussed in the same way because of $(\ref{eq36})$. In the light of Lemma $\ref{le12}$, we can conclude that the union graph preserves a directed spanning tree.
\end{proof}

\subsection{Proof of Proposition $\ref{pro2}$}\label{app3}
\begin{proof}
	The analysis for Proposition $\ref{pro2}$ is divided into four cases.
	
	\textbf{Case} $1$: The graph $\mathcal{G}(k)$ contains a directed spanning forest. In such a case, it is clear that $
	\mathcal{L}^{*}_{\sigma(k)}=\mathcal{L}^{-1}_{3,\sigma(k)}$.
	Then the conclusion is obvious. On the other hand, for the worst scenario where all nodes are isolated, one obtains that $\widehat{\mathcal{L}}_{\sigma(k)}=\mathcal {O}_{(N-M)M}$ and $\mathcal{L}_{3,\sigma(k)}=\mathcal {O}_{(N-M)^{2}}$. Evidently, $(\ref{eq46})$ always holds.
	
	\textbf{Case} $2$: Suppose that $\mathcal{G}(k)$ does not have a directed spanning forest. For simplicity, assume that merely the $(M+1)$th node is isolated at the $k$th time instant since the other situations can be discussed similarly. It hence follows that the entries of $(M+1)$th row in matrix $\mathcal {M}_{\sigma(k)}$ are equivalent to zero. Thereby, $
	\mathcal{L}_{3,\sigma(k)}\mathcal{L}^{*}_{\sigma(k)}=\Theta$ and $\mathcal{L}^{*}_{\sigma(k)}\mathcal{L}_{3,\sigma(k)}=\Theta$, where $\Theta={\rm diag}(0,I_{(N-M-1)^{2}})$.
	With the help of the fact that the first row in $\widehat{\mathcal{L}}_{\sigma(k)}$ is a zero row vector, it is straightforward that $(\ref{eq46})$ is true.
	
	\textbf{Case} $3$: Assume that merely the $(M+1)$th node belongs to a subgraph $\widetilde{\mathcal{G}}(k)$ where no path connects a root of $\mathcal{G}(k)$ to nodes in $\widetilde{\mathcal{G}}(k)$, while $\mathcal{G}(k)\setminus\widetilde{\mathcal{G}}(k)$ contains a directed spanning forest. Without loss of generality, it assumes that only the $(M+1)$th node and $(M+2)$th node are contained in $\widetilde{\mathcal{G}}(k)$, and the $(M+1)$th node is the parent of the $(M+2)$th node, not vice versa. Evidently, the $(M+1)$th row of the matrix $\mathcal {M}_{\sigma(k)}$ is zero. The same conclusion can be drawn as that in \textbf{Case} $2$.

	\textbf{Case} $4$: We now turn to the case where the $(M+1)$th node and $(M+2)$th node are the parents of each other. For this case, matrix $\mathcal{L}_{3,\sigma(k)}$ enjoys the following form
	\begin{equation*}\label{eq49}
	\begin{aligned}
	\mathcal{L}_{3,\sigma(k)}=
	\begin{bmatrix}
	l_{(M+1)^{2}}& -l_{(M+1)^{2}}&\mathcal {O}_{1(N-M-2)}\\
	-l_{(M+2)^{2}}&l_{(M+2)^{2}}&\mathcal {O}_{1(N-M-2)}\\
	\star&\star&\widetilde{\mathcal{L}}_{\sigma(k)}
	\end{bmatrix}
	\end{aligned}
	\end{equation*}
	where $l_{(M+1)^{2}}$ and $l_{(M+2)^{2}}$ are some positive scalars. The asterisk $``\star"$
	stands for the induced term by symmetry. Matrix $\widetilde{\mathcal{L}}_{\sigma(k)}$ is nonsingular at the $k$th instant. For this situation, there exists a nonsingular matrix $\mathbb{P}={\rm diag}(\widehat{\mathbb{P}} ,I)$ such that
	\begin{equation*}\label{eq50}
	\begin{aligned}
	\mathbb{P}\mathcal{L}_{3,\sigma(k)}\mathbb{P}^{-1}=
	\begin{bmatrix}
	0& 0&\mathcal {O}_{1(N-M-2)}\\
	0&l^{0}_{(M+2)^{2}}&\mathcal {O}_{1(N-M-2)}\\
	\star&\star&\widetilde{\mathcal{L}}_{\sigma(k)}
	\end{bmatrix}
	\end{aligned}
	\end{equation*}
	where $l^{0}_{(M+2)^{2}}\neq 0$.
	The rest of the argument goes back to \textbf{Case} $2$ and \textbf{Case} $3$.
	Notice that the first one or two rows in the matrix $\mathcal{L}_{3,\sigma(k)}$ are equivalent to zero. This indicates that the rank of the matrix $\widehat{\mathcal{L}}_{\sigma(k)}$ is no more than $\hbar$. More importantly, note that if the $i$th diagonal entry in $\Theta$ equals to zero, then the $i$th row in $\widehat{\mathcal{L}}_{\sigma(k)}$ must be the zero vector relying on the aforementioned arguments.
	Hence, $\Theta\widehat{\mathcal{L}}_{\sigma(k)}=\widehat{\mathcal{L}}_{\sigma(k)}$ is always desirable.
	This completes the proof.
\end{proof}




\bibliographystyle{unsrt}   
\bibliography{mybib}       

\end{document}